\pgfplotsset{compat=1.17}
\newtheorem{theorem}{Theorem}[section]
\newtheorem{definition}[theorem]{Definition}
\newtheorem{lemma}[theorem]{Lemma}
\newtheorem{proposition}[theorem]{Proposition}
\newtheorem{corollary}[theorem]{Corollary}
\newtheorem{example}[theorem]{Example}
\DeclareMathOperator*{\argmax}{argmax}
\newtheorem{fact}[theorem]{Fact}
\newcommand{\agind}[1][i]{_{#1}}
\newcommand{\noaccents}[1]{#1}
\newcommand{\newagentvar}[3][\noaccents]{%
\expandafter\newcommand\expandafter{\csname #2\endcsname}{#1{#3}}%
\expandafter\newcommand\expandafter{\csname #2s\endcsname}{#1{\boldsymbol{#3}}}%
\expandafter\newcommand\expandafter{\csname #2smi\endcsname}[1][i]{#1{\boldsymbol{#3}}_{-##1}}%
\expandafter\newcommand\expandafter{\csname #2i\endcsname}[1][i]{#1{#3}\agind[##1]}%
\expandafter\newcommand\expandafter{\csname #2ith\endcsname}[1][i]{#1{#3}_{(##1)}}%
}
\newcommand{\newitemvar}[3][\noaccents]{%
\expandafter\newcommand\expandafter{\csname #2\endcsname}{#1{#3}}%
\expandafter\newcommand\expandafter{\csname #2s\endcsname}{#1{\boldsymbol{#3}}}%
\expandafter\newcommand\expandafter{\csname #2smj\endcsname}[1][j]{#1{\boldsymbol{#3}}_{-##1}}%
\expandafter\newcommand\expandafter{\csname #2j\endcsname}[1][j]{#1{#3}_{##1}}%
\expandafter\newcommand\expandafter{\csname #2jth\endcsname}[1][j]{#1{#3}_{(##1)}}%
}
\newcommand{\forrezs}[1]{{#1}^{\rezs}}
\newcommand{\game}{\mathcal{G}}
\newcommand{\reals}{{\mathbb R}}
\newcommand{\subsy}{d}
\newcommand{\gamedy}{\mathcal{D}}
\newcommand{\gamecor}{\mathcal{H}}
\newcommand{\subsylo}{D_-}
\newcommand{\subsyhi}{D_+}
\newcommand{\allocdy}{w}
\newcommand{\quamdy}{\qua^{\ddagger}}
\newcommand{\princ}{P}
\newcommand{\thresh}{\tau}
\newcommand{\bdist}{B}
\newcommand{\sdist}{S}
\newcommand{\bsupp}{\mathbf{B}}
\newcommand{\ssupp}{\mathbf{S}}
\newcommand{\bcorset}{\mathcal{B}^{\sdist}}
\newcommand{\effth}{\eff'}
\renewcommand{\util}{u}
\newcommand{\utilv}{v}
\newcommand{\expecta}{\mathbf{E}}
\newcommand{\eps}{\epsilon}
\newcommand{\lo}{L}
\newcommand{\hi}{H}
\newcommand{\figsz}{\scriptsize}
\newcommand{\figszparm}{\scriptsize}
\begin{document}

\begin{titlepage}

\title{Screening with Disadvantaged Agents}

\newcommand{\email}[1]{\href{mailto:#1}{#1}}

\author{Hedyeh Beyhaghi\thanks{Carnegie Mellon University. Email: \email{hedyeh@cmu.edu}} \and Modibo K. Camara \thanks{University of Chicago. 
 Email: \email{mcamara@uchicago.edu}} \and Jason Hartline\thanks{Northwestern University.  Supported in part by NSF CCF 1934931.  Email: \email{hartline@northwestern.edu}} \and Aleck Johnsen\thanks{Geminus Research.  Email: \email{aleck@geminusresearch.com}} \and Sheng Long\thanks{Northwestern University.  Supported in part by NSF CCF 1934931. 
 Email: \email{shenglong2025@u.northwestern.edu}}}

\date{}

\maketitle

Motivated by school admissions, this paper studies screening in a population with both advantaged and disadvantaged agents.
A school is interested in admitting the most skilled students, but relies on imperfect test scores that reflect both skill and effort.
Students are limited by a budget on effort, with disadvantaged students having tighter budgets.
This raises a challenge for the principal: among agents with similar test scores, it is difficult to distinguish between students with high skills and students with large budgets.

Our main result is an optimal stochastic mechanism that maximizes the gains achieved from admitting ``high-skill" students minus the costs incurred from admitting ``low-skill" students when considering two skill types and $n$ budget types. Our mechanism makes it possible to give higher probability of admission to a high-skill student than to a low-skill, even when the low-skill student can potentially get higher test-score due to a higher budget.  Further, we extend our admission problem to a setting in which students uniformly receive an exogenous subsidy to increase their budget for effort.  This extension can only help the school's admission objective and we show that the optimal mechanism with exogenous subsidies has the same characterization as optimal mechanisms for the original problem.
\end{titlepage}

%one idea that is interesting is endogenizing fairness concerns

%a model of school admissions with disadvantaged students

\section{Introduction}
\label{s:intro}

% This is the intro

Screening is a problem in which a {\em principal} desires to select only a qualified sub-population of {\em agents} who exceed an appropriate threshold applied to the agents' {\em private types}.

Many real-world problems may be interpreted as special cases of screening problems, including some well-studied problems in standard frameworks (for example within auction design, how to give away an item to an agent who values it the most).  As further examples: school admissions, hiring employees, selecting romantic partners, identifying winners of prestigious awards, qualifying applicants for government-issued licenses, assigning school grades at any level of evaluation (from homework grades to testing grades to overall-course grades), drug-testing, tournament-qualifying, \ldots, all of these and many more scenarios may be modeled as problems of screening.

The challenge of screening is that the principal has only indirect access to the agents' private types, and critically, the agents are either unwilling to reveal their types or are incentivized to take actions that make it difficult for the principal to infer their types.
Since these agents are \emph{strategic}, their private information is only fully or partially elicited by offering appropriate incentives.

This paper considers a screening model of {\em school admissions} where some students may be disadvantaged relative to others.
The school seeks the most skilled students but only has access to an imperfect measure of skill, via test scores.
Relative to their inherent skill, disadvantaged students may perform worse on tests because they have less time to prepare (e.g., due to work obligations or childcare).
Advantaged students may perform better on tests relative to their skill because they have access to additional resources (e.g., a private tutor or test prep).
We model this heterogeneity by assuming that applicants are distinguished both in their skill as well as their \emph{budget} (i.e., how much time and resources they are able to put towards the test).
Students with high skill and high budget are able to excel, provided that they are willing to put in the effort.
However, students with similarly high skill may test poorly if their budgets are too low.

More precisely, we study a mechanism design problem for screening of budgeted agents.  A principal is interested in admitting only an agent with high skill-type above a given threshold.  The agent can only reveal private skill-level to the principal indirectly, by combining it with an amount of effort into a publicly displayable signal of quality.  However, the agent is limited by a budget on effort, which induces a key difficulty for the principal: amongst agent types exhibiting similar-quality signals, how to distinguish between talented agents with high-skill-low-budget types and endowed agents with low-skill-high-budget types, while contending with agents' incentive-compatability constraints.

A key observation from the model is that it may be beneficial to admit students with average test scores with nonzero probability, while at the same time always admitting students with the highest test scores.
By not guaranteeing admission for students with average test scores, we limit the incentive for those students to put in effort.
High-skill agents (regardless of their budget) find effort less costly than low-skill students; therefore, as we decrease the probability of admission, the low-skill students will reduce their effort more sharply than high-skill students.
Loosely speaking, if we lower the probability of admission enough for students with average test scores, the equilibrium level of effort will drop until the high-skill disadvantaged students' budget constraint is no longer binding.
This allows the school to screen efficiently, at the cost of admitting high-skill students at a lower rate.

As a result, these randomized admission policies make it possible to implement a counter-intuitive outcome.
A student with high skill but low maximum test score (due to limited budget) can receive strictly larger allocation than a student with low skill but high maximum test score.
The latter student is able to achieve scores that are strictly higher than the former student can achieve, but the benefit of obtaining those scores (some probability of admission) is not worth the effort for a low-skill student.

Our main result formalizes this intuition.
It gives (1) a characterization of the structure of the optimal mechanism for a (one-agent) setting with 2 skill types and $n$ budget types, and (2) a polynomial-time algorithm to find it.  An interpretation of our main result is that high-skill agent types may be shown {\em preference} over high-budget types despite the difference in the types' {\em exogenous} resources.  Thus, our setting effectively studies the possibilities and limits of improved-welfare of allocation to effort-budgeted agents.

The paper ends with an introductory study of an extended setting which introduces {\em uniform, exogenous, unconditional subsidies} to relax the agents' budget constraints.\footnote{\label{foot:subsidyunits} Subsidies, measured in units of effort, can for example be monetary transfers from third-parties that increase an agent's effort-budget by freeing up time by reducing other paid work or by buying services.}  Intuitively, the goal is to modify the environment of the admissions problem (as screening) to further increase the balance of allocation in favor of high-skilled types.  Subsidies are a potent intervention because high-skill, budget-constrained agents are best able to use additional effort to increase their highly-valued allocations.  We show that the setting with subsidies has optimal mechanisms with the same characterization as the original screening problem.

\paragraph*{Related Works}

Previous literature has varied its modeling of this central challenge of screening. \citet{sti-75} models agents as having private abilities (types) that the market doesn't observe, and agents with higher abilities have economic incentives to be identified. \citet{SW-81} studied the role of interest rates as a screening device, and showed that returns are not necessarily monotone with respect to interest rates -- a result that holds in equilibrium whenever borrowers \textit{strategically} react to the interest-rate  mechanism. 

In addition to the economics literature on screening, this work contributes to ongoing research on strategic classification, mechanism design with budgeted agents, and fairness.

There is a well-developed literature on mechanism design where agents face budget constraints. Earlier work focused on the case where budgets were public knowledge \citep[e.g.][]{laffont1996optimal,maskin2000auctions}.
More recent work, like ours, focuses on the case where the agents' budgets are their private knowledge \citep{pai2014optimal,GLLLS23,FHL23}.
Typically, budgets are monetary: they represent upper bounds on how much each agent can transfer to the principal. In contrast, we consider budgets on effort: upper bounds on how much effort the agent can put into its task.

In recent years, there has been a lot of interest in strategic classification problems, where a principal is trying to classify agents on the basis of observed scores and agents are able to manipulate (or ``game'') the scores to influence the principal's actions~\citep{Hardt2016,revealed_preferences,hu2019disparate,Milli2018TheSC, Ahmadi2021TheSP,adversarial_games_pred,Frankel2019ImprovingIF,braverman2020role,Kleinberg2018HowDC, harris2021stateful, Alon2020MultiagentEM, xiao2020optimal, Miller2019StrategicCI, Haghtalab2020MaximizingWW, Bechavod2020CausalFD, Ahmadi2022On}.
Our model can be considered a strategic classification problem where the school attempts to classify students into ``admit" or ``not admit," but students are strategic in how much effort they exert.  Expounding on the works most similar to our model: \citet{braverman2020role} show the power of randomization when agents are able to manipulate their scores; and 
\citet{hu2019disparate} study a similar problem where disadvantaged students find it more difficult to manipulate their scores.

In most models of strategic classification, agents obscure their true type at a cost. As a result, costly effort makes scores less informative.
In contrast, in our model of screening, even high skill students need to put in effort in order to achieve a high score (albeit less effort than low-skill students).
If no students put in effort, they will all achieve a score of zero, and the school will not be able to distinguish high-skill from low-skill students.
As a result, costly effort is necessary for scores to be informative in our model.
We must balance the benefits of costly effort in screening with the challenges of costly effort in strategic classification.

Finally, our work relates to a growing literature on fairness in mechanism design and algorithms.
Much of this literature is concerned with fair treatment of different subgroups (e.g., based on demographic variables like race or gender), and various different definitions of fairness have been proposed and criticized \citep[e.g.,][]{corbett2018measure, Kleinberg18}, some of which has been explicitly applied to school admissions \citep[e.g.,][]{KLMR18}.
In line with the fairness literature, we consider the implications of a biased test (where high budget students may perform better, regardless of their skill) for admissions.
Unlike race and gender, the subgroups we are interested in (students with a particular budget) are not publicly observable.
Like \citet{RKLM20} and \citet{JKLPRV20}, we explicitly consider how economic incentives interact with policies designed to correct for sources of unfairness.

\section{Setting and Fundamental Structures}
\label{s:setting}

A principal $\princ$ considers admitting an agent $\agent=(\skill,\budg)$ with private types as skill $\skill$ and budget $\budg$ (budget on {\em effort}, see below).   
The agent's skill and budget are treated as independent, positive Bayesian variables drawn respectively from known distributions $\sdist$ with support $\ssupp\in\reals_+$ and $\bdist$ with support $\bsupp \in [0,1]$, i.e., $\skill\sim \sdist$ and $\budg\sim \bdist$.  The principal only wants to admit the agent in the case that the agent's skill is above a threshold $\thresh\in \reals_+$ (which we implicitly treat as the principal's fixed type).  In summary, the principal's problem is an admission game $\game = (\sdist, \bdist, \thresh)$.

An agent of any skill will want to be admitted and thus, the principal must design a test which uses incentives to elicit information from the agent.  The principal will ask the agent to commit to a {\em private} level of effort $\eff$ which (a) is constrained by individual budget $\budg$, and (b) induces a {\em public}, deterministic signal of quality $\qua = \skill\cdot \eff$. Note that quality is a multiplicative function of effort, rather than additive. This captures two features of our motivating example of school admissions: (i) even high-skill agents that put in no effort will obtain a low score, but (ii) high-skill agents require less effort to achieve a given score than low-skill agents.\footnote{In contrast, suppose quality were an additive function $q=s+e$ of skill and effort. Then property (ii) would hold, but not property (i).}

The principal's problem will be to design an admission allocation rule $\balloc: \reals_+\rightarrow[0,1]$ which maps quality $\qua$ to a stochastic allocation $\alloc$ of admitting the agent.  Practically, the principal's challenge is to optimally discriminate against resource-rich agent types with quality resulting from large effort-budgets, in favor of agents with quality resulting from high skill.  (Note, any ``reasonable" rule will inherently admit all high-skill-high-budget agents.)

The agent's utility is defined to be $-\infty$ if effort exceeds budget, and otherwise is defined to be the probability of allocation minus effort:
\begin{equation}
    \label{eqn:agentutilalloc}
    \util_{\agent}(\eff, \alloc) = \alloc-\eff
\end{equation}
\noindent which implicitly sets the agent's value of being admitted to 1.  Utility can be equivalently written as a function of the allocation rule $\balloc$ and either effort or quality:
\begin{align}
    \label{eqn:agentutilrule}
    \util_{\agent}(\balloc,\eff) &= \balloc(\skill\cdot\eff)-\eff
    & \text{or} && \util_{\agent}(\balloc,\qua) &= \balloc(\qua) - \sfrac{\qua}{\skill}
\end{align}
\noindent (Further, we may drop the input $\balloc$ where its assignment is clear from context.) The agent perceives the allocation rule $\balloc$ as a menu (for which the domain is quality space), albeit top-truncated at the agent's maximum quality set by $\quamax = \skill\cdot\budg$.  This perspective induces for the agent an optimal utility function $\util^*_{\agent}$ and an allocation rule $\alloc$ in skill space (which overloads notation):
\begin{align}
\label{eqn:agentoptvsrule}
    \util^*_{\agent}(\balloc, \skill) &\vcentcolon= \max_{\eff\in[0,\budg]} \balloc(\skill\cdot\eff) - \eff = \max_{\qua\in[0,\quamax]} \balloc(\qua) - \sfrac{\qua}{\skill}\\
\label{eqn:agentoptalloc}
    \alloc = \alloc(\balloc,\skill) &\vcentcolon= \balloc(\skill\cdot \left[\argmax_{e\in[0,\budg]} \balloc(\skill\cdot\eff)-\eff\right])
\intertext{For a given agent $\agent$, the principal's utility from admitting $\agent$ is $\util_{\princ}(\agent~|~\text{admitted})=\skill-\thresh$.  Thus, our principal's mechanism design problem is to maximize $\util_{\princ}(\game, \balloc)$ which is the expected utility from an admitted agent's skill versus the threshold, weighted by allocation probability:}
\label{eqn:princproblem}
    \max_{\balloc} \util_{\princ}(\game, \balloc) &\vcentcolon= \max_{\balloc}  \expecta_{\agent\sim(\sdist\times\bdist)}\left[ \alloc(\balloc,\skill)\cdot\left(\skill-\thresh \right)\right]
\end{align}

\paragraph*{Threshold Mechanisms} A natural mechanism to consider is a {\em threshold mechanism} with the threshold set in quality space.

\begin{definition}
\label{def:threshmech}
A {\em (deterministic) threshold mechanism} $\balloc^{\quath}$ sets a quality threshold $\quath\in\reals_+$ and admits an agent if and only if the agent exhibits public quality $\qua\geq\quath$. 
\end{definition}

\noindent The intuition for a threshold mechanism is that an agent who is able to exhibit the threshold quality with effort less than budget will put in the (minimal amount of) effort necessary to be admitted with probability 1; versus, an agent with maximum-quality $\quamax$ less than the threshold will put in zero effort and get passed.  Recall that the agent's skill and budget are independent in our setting.  The role of thresholds generally is to conditionally allocate agents in decreasing order of skill:

\begin{fact}
\label{fact:highskillfirst}
Given a population of agents as $\sdist\times\bdist$, consider the subset $\bsupp_{\bar{\budg}}$ of agent types which conditionally have a specific budget $\bar{b}$.  For a threshold mechanism with any $\quath>0$, the subset of $\bsupp_{\bar{\budg}}$ of agent skill-types which are admitted is upward-closed.
\end{fact}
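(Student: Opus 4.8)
The plan is to directly characterize, for the fixed budget $\bar\budg$, which skill types best-respond to the threshold mechanism $\balloc^{\quath}$ by exhibiting quality at least $\quath$, and then observe that this set is an up-set. The key simplification is that $\balloc^{\quath}$ is a step function: it allocates $1$ exactly when $\qua\geq\quath$ and $0$ otherwise. So for an agent $\agent=(\skill,\bar\budg)$ the optimization in \eqref{eqn:agentoptvsrule} degenerates. Any effort $\eff\in(0,\quath/\skill)$ is strictly dominated by $\eff=0$ (same allocation $0$, strictly larger cost), and any effort $\eff>\quath/\skill$ that is within budget is dominated by $\eff=\quath/\skill$ (same allocation $1$, smaller cost). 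Hence the only two candidate actions are $\eff=0$, with utility $0$, and $\eff=\quath/\skill$, with utility $1-\quath/\skill$, the latter being available only when it respects the budget, i.e. when $\quath/\skill\leq\bar\budg$, equivalently $\quamax=\skill\cdot\bar\budg\geq\quath$.

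The first step is then to record that such an agent is admitted (allocated $1$) if and only if the action $\eff=\quath/\skill$ is both feasible, $\skill\geq\quath/\bar\budg$, and weakly preferred to $\eff=0$, i.e. $1-\quath/\skill\geq 0$, equivalently $\skill\geq\quath$. Since $\bar\budg\in[0,1]$ we have $\quath/\bar\budg\geq\quath$, so feasibility is the binding constraint, and the set of admitted skill types within $\bsupp_{\bar\budg}$ is exactly $\{\skill\in\ssupp:\skill\geq\quath/\bar\budg\}$ (with the boundary case $\skill=\quath/\bar\budg$ resolved by whatever tie-breaking convention is in force, which does not affect the conclusion). The second step is to note that this is an up-set in $\ssupp$: $\quath/\bar\budg$ is a fixed number independent of $\skill$, so once a skill type clears this bar every larger skill type does too. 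That is precisely the claim.

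There is no substantive obstacle here; the only thing requiring a little care is making the dominance argument above airtight so that the agent's problem really does reduce to the binary choice $\{\eff=0,\ \eff=\quath/\skill\}$, after which monotonicity in $\skill$ is immediate. It is worth remarking that this is exactly the step that fails once the allocation rule is allowed to be stochastic rather than a single step: then intermediate effort levels can be optimal and the admitted (more precisely, positively-allocated) skill types need not form an up-set conditional on a budget, which is why the richer analysis developed in the remainder of the paper is needed.
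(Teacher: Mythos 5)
Your proof is correct and is essentially the argument the paper leaves implicit for this Fact: the threshold rule collapses the agent's problem to the binary choice between $\eff=0$ and $\eff=\sfrac{\quath}{\skill}$, admission is equivalent to $\skill\geq\sfrac{\quath}{\bar{\budg}}$ (feasibility subsumes rationality since $\bar{\budg}\leq 1$), and that set is upward-closed. One caveat on your closing remark: for a \emph{fixed} budget the positively-allocated skill types do form an up-set under \emph{any} monotone allocation rule, not just a step function --- the objective $\balloc(\qua)-\sfrac{\qua}{\skill}$ has increasing differences in $(\qua,\skill)$ and the feasible set $[0,\skill\bar{\budg}]$ grows with $\skill$, so the chosen quality and hence the allocation is weakly increasing in skill; the phenomenon that forces the paper's richer analysis is the comparison \emph{across} budgets (high-skill--low-budget versus low-skill--high-budget), not a failure of monotonicity within one budget class.
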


\noindent \Cref{fact:highskillfirst} implies that threshold mechanisms are sufficient for the special case in which there is only one budget type (with the proof of \Cref{prop:threshoptsinglebudg} in \Cref{a:proof_thresh}):

\begin{restatable}{proposition}{deterministic}
\label{prop:threshoptsinglebudg}
Assume that an agent has constant budget $\bar{\budg}$ on effort, i.e., the distribution $\bdist$ is a singular point mass.  The threshold mechanism $\balloc^{\quath}$ with $\quath = \thresh\cdot\bar{\budg}$ is optimal.
\end{restatable}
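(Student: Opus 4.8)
The plan is to exhibit a pointwise upper bound on each skill type's contribution to the principal's objective and then verify that the proposed threshold mechanism attains it, type by type.

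First I would record a trivial but decisive observation: for \emph{any} allocation rule $\balloc$ and any skill $\skill$, the induced allocation $\alloc(\balloc,\skill)$ lies in $[0,1]$, hence
\[
\alloc(\balloc,\skill)\cdot(\skill-\thresh)\;\le\;\max\{\skill-\thresh,\,0\}.
\]
Integrating this pointwise inequality against $\skill\sim\sdist$ in \eqref{eqn:princproblem} gives $\util_{\princ}(\game,\balloc)\le\expecta_{\skill\sim\sdist}\!\big[\max\{\skill-\thresh,0\}\big]$ for every $\balloc$; this is the benchmark that $\balloc^{\quath}$ must match.

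Next I would compute the agent's best response to $\balloc^{\quath}$ with $\quath=\thresh\cdot\bar{\budg}$. Under the point-mass assumption every agent has budget $\bar{\budg}$, so an agent with skill $\skill$ perceives the menu $\balloc^{\quath}$ truncated at maximum quality $\quamax=\skill\cdot\bar{\budg}$. If $\skill\ge\thresh$, then $\quamax\ge\quath$, so the agent can reach the threshold; doing so exactly costs effort $\quath/\skill=\thresh\bar{\budg}/\skill\le\bar{\budg}$ (feasible) and yields utility $1-\thresh\bar{\budg}/\skill\ge0$, which weakly beats exerting zero effort (utility $0$) and beats every other feasible effort (overshooting the threshold gives the same allocation with strictly more effort; undershooting gives allocation $0$). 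Hence $\alloc(\balloc^{\quath},\skill)=1$. If $\skill<\thresh$, then $\quamax=\skill\bar{\budg}<\quath$, so no feasible effort produces positive allocation and the agent optimally exerts zero effort, giving $\alloc(\balloc^{\quath},\skill)=0$. The knife-edge $\skill=\thresh$ is immaterial, since it contributes $\alloc\cdot(\skill-\thresh)=0$ regardless; this picture is also consistent with \Cref{fact:highskillfirst}, which already tells us the admitted skill set is upward-closed.

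Finally I would combine the two steps: $\alloc(\balloc^{\quath},\skill)\cdot(\skill-\thresh)=\max\{\skill-\thresh,0\}$ for $\sdist$-almost every $\skill$, so $\util_{\princ}(\game,\balloc^{\quath})=\expecta_{\skill\sim\sdist}[\max\{\skill-\thresh,0\}]$, which meets the upper bound from the first step and hence certifies optimality. I do not anticipate a deep obstacle; the only part that needs care is the agent's incentive-driven effort choice under $\balloc^{\quath}$ — in particular that hitting the threshold exactly is the unique \emph{optimal} effort for $\skill>\thresh$ — together with the observation that the measure-zero boundary type $\skill=\thresh$ does not affect the principal's objective.
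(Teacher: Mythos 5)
Your proposal is correct and follows essentially the same route as the paper: the paper also proves the stronger claim that $\balloc^{\quath}$ with $\quath=\thresh\cdot\bar{\budg}$ attains the offline-optimal benchmark, by checking that the minimal effort $\quath/\skill=(\thresh/\skill)\bar{\budg}$ is affordable and individually rational exactly when $\skill\geq\thresh$, so the mechanism pointwise-maximizes the principal's objective. Your explicit statement of the upper bound $\expecta_{\skill\sim\sdist}[\max\{\skill-\thresh,0\}]$ and the best-response computation match the paper's argument step for step.
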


\noindent Intuitively, \Cref{prop:threshoptsinglebudg} holds because single-budget is a simple setting in which quality-thresholds directly implement skill-thresholds, in particular for the principal's threshold $\thresh$.

To outline this section: \Cref{s:suboptofthresh} shows that threshold mechanisms are not optimal for arbitrary distributions $\bdist$ and thus, we will need more-complicated mechanism forms.  \Cref{s:feasanddiscriminate} quantifies agent feasibility to achieve a given quality-allocation pair and, given an allocation rule $\balloc$, discusses implications of feasibility for optimal design.  \Cref{s:graphinterpret} gives geometric interpretation of agent types $(\skill,\budg)$ and their demand under an allocation rule $\balloc$ (for input as quality $\qua$).

\subsection{Generalization of Threshold Mechanisms to Lottery Menus}
\label{s:suboptofthresh}

This section states that deterministic threshold mechanisms are \textit{not optimal} in general (when the distribution over budgets has multiple support). Consequently, we need to generalize the class of mechanisms being considered.  This section gives the sufficient extension to {\em lottery menus} (\Cref{def:lotterymenus}  below).

Insufficiency of deterministic thresholds is stated simply:

\begin{proposition}
\label{prop:threshinsuff} For admission games $\game$ in which the set of budgets is multiple, i.e. $|\bdist|>1$, (deterministic) threshold mechanisms are not optimal generally.
\end{proposition}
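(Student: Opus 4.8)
The plan is to prove the proposition by a single explicit counterexample: an admission game $\game=(\sdist,\bdist,\thresh)$ with exactly two budget types on which some stochastic allocation rule strictly outperforms every deterministic threshold mechanism. Since the principal's optimal value is at least the value attained by that particular rule, no threshold mechanism is optimal on this instance, which is exactly what is claimed.

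\textbf{Step 1: reduce the behavior of threshold mechanisms.} I would first record how $\balloc^{\quath}$ behaves. To be admitted at quality threshold $\quath$, an agent $(\skill,\budg)$ must exert effort $\quath/\skill$, which is affordable iff $\skill\budg\ge\quath$; and because $\budg\le 1$, the inequality $\skill\budg\ge\quath$ already forces $\skill\ge\quath$, so admission yields the nonnegative utility $1-\quath/\skill$. Hence under $\balloc^{\quath}$ an agent is admitted with probability $1$ exactly when its maximum quality $\skill\budg$ is at least $\quath$, and contributes nothing to the principal otherwise. With two skill types $\skill_L<\thresh<\skill_H$ and two budget types $\budg_L<\budg_H$, the principal's value from $\balloc^{\quath}$ is piecewise constant in $\quath$, changing only as $\quath$ crosses one of the four maximum qualities $\skill_L\budg_L<\skill_H\budg_L<\skill_L\budg_H<\skill_H\budg_H$; I arrange parameters so $\skill_L\budg_H>\skill_H\budg_L$, i.e.\ the low-skill/high-budget type can display a strictly higher score than the high-skill/low-budget type. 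The best threshold value is then the maximum of the (at most five) resulting expressions, each linear in the four type probabilities $p_{HH},p_{HL},p_{LH},p_{LL}$ and in $\skill_H-\thresh>0$ and $\thresh-\skill_L>0$.

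\textbf{Step 2: exhibit a dominating stochastic rule.} Next I would take the step allocation rule $\balloc$ equal to $0$ for $\qua<q_0$, equal to a constant $\alpha$ for $q_0\le\qua<q_1$, and equal to $1$ for $\qua\ge q_1$, where $q_0=\skill_H\budg_L$, $q_1=\skill_H\budg_H$, and $\alpha$ is chosen in the nonempty interval $[\budg_L,(\skill_H/\skill_L)\budg_L)$. The point of that interval is precisely the separation emphasized in the introduction: type $(\skill_H,\budg_L)$ can just afford $q_0$ with effort $\budg_L\le\alpha$ and so takes the $\alpha$-lottery; type $(\skill_L,\budg_H)$ \emph{can} afford $q_0$ (since $\skill_L\budg_H>q_0$) but only at effort $q_0/\skill_L=(\skill_H/\skill_L)\budg_L>\alpha$, so it strictly prefers to exert no effort and take allocation $0$; type $(\skill_L,\budg_L)$ cannot afford $q_0$ at all; and, provided $\budg_H$ is not too large, type $(\skill_H,\budg_H)$ strictly prefers the probability-$1$ option at $q_1$ to the $\alpha$-lottery. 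Consequently both low-skill types contribute $0$ and the rule's value is $(\skill_H-\thresh)(p_{HH}+\alpha\,p_{HL})$.

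\textbf{Step 3: compare, and the main obstacle.} Finally I would check that $(\skill_H-\thresh)(p_{HH}+\alpha\,p_{HL})$ beats every threshold value. All but one of the threshold configurations are dominated at once — they either admit a low-skill type at probability $1$, or admit $(\skill_H,\budg_L)$ at probability $0$ rather than $\alpha$, or admit nobody — so the only genuine comparison is against the threshold with $\quath\in(\skill_L\budg_L,\skill_H\budg_L]$, which admits $(\skill_H,\budg_H),(\skill_H,\budg_L),(\skill_L,\budg_H)$ all at probability $1$ for value $(\skill_H-\thresh)(p_{HH}+p_{HL})-(\thresh-\skill_L)p_{LH}$; the stochastic rule beats it precisely when $(\thresh-\skill_L)\,p_{LH}>(\skill_H-\thresh)(1-\alpha)\,p_{HL}$. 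The one real task, and the place where the parameters must be chosen compatibly, is to satisfy all of $\skill_L\budg_H>\skill_H\budg_L$, $\budg_L\le\alpha<(\skill_H/\skill_L)\budg_L$, $\budg_H$ small enough that $(\skill_H,\budg_H)$ prefers the top option, and a distribution putting enough weight on $p_{LH}$ relative to $p_{HL}$; this is feasible, for example at $\skill_L=1,\ \skill_H=3,\ \thresh=2,\ \budg_L=1/10,\ \budg_H=3/5,\ \alpha=1/5$ with the uniform distribution over the four types, where the stochastic rule earns $3/10$ while the best threshold earns only $1/4$. The rest is routine bookkeeping, and the same instance realizes the phenomenon flagged in the introduction: the high-skill/low-budget type is admitted with strictly higher probability ($\alpha$) than the low-skill/high-budget type ($0$), even though the latter can reach a strictly higher score.
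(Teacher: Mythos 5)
Your proposal is correct and takes essentially the same route as the paper: a proof by explicit $2$-skill $\times$ $2$-budget counterexample in which a partial lottery at an intermediate quality is affordable-and-rational for the high-skill/low-budget type but irrational for the low-skill/high-budget type, so that a stochastic step rule strictly beats every deterministic threshold (the paper uses a limiting instance with $\skilli[\lo]=1+\eps$ and compares against the offline optimum, while you use fixed parameters and enumerate the thresholds directly, but the mechanism of the separation is identical). One tiny slip: the interval for $\alpha$ should be open at $\budg_L$ (at $\alpha=\budg_L$ the high-skill/low-budget type is indifferent and, under the paper's tie-breaking rule, would take the zero option), though your concrete choice $\alpha=1/5>1/10$ is strictly interior and the instance goes through.
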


\noindent The proof is by counter-example -- we give the details and analysis of \Cref{ex:threshinsuff} in \Cref{a:exthreshinsufficient} where we conclude that all deterministic threshold mechanisms are dominated by stochastic allocation $\alloc=(1-\sfrac{\eps_{\lo}}{1+\eps_{\lo}}-\eps_{\alloc})$ with ``small" $\eps_{\alloc}$ for agents exhibiting at least a minimum quality.

Although we must now consider allocation rules $\balloc$ more generally than threshold mechanisms, without loss of generality, we may assume monotonicity of $\balloc$:

\begin{lemma}[Monotonicity]
\label{lem:monoballoc}
For every admission game $\game$, there exists an optimal allocation rule that is weakly monotone increasing.
\end{lemma}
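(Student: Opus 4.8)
The plan is to take an optimal allocation rule $\balloc$ and replace it by its running supremum $\ironed{\balloc}(\qua)\vcentcolon=\sup_{0\le\qua'\le\qua}\balloc(\qua')$, then show $\ironed{\balloc}$ is again optimal. That $\ironed{\balloc}$ is weakly monotone increasing, still maps into $[0,1]$, and dominates $\balloc$ pointwise is immediate from the definition. The content is that passing from $\balloc$ to $\ironed{\balloc}$ leaves the principal's objective $\util_{\princ}(\game,\cdot)$ unchanged, and I would obtain this by proving that every agent type $(\skill,\budg)$ has both the same optimal utility and the same realized allocation under the two rules (the budget enters only through the feasible range $[0,\quamax]$ with $\quamax=\skill\cdot\budg$).

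\emph{Optimal utility.} The bound $\util^*_{\agent}(\ironed{\balloc},\skill)\ge\util^*_{\agent}(\balloc,\skill)$ is immediate from $\ironed{\balloc}\ge\balloc$. For the reverse, given a feasible $\qua$ and $\eps>0$, choose $\qua'\in[0,\qua]$ with $\balloc(\qua')>\ironed{\balloc}(\qua)-\eps$; then $\qua'$ is feasible and $\sfrac{\qua'}{\skill}\le\sfrac{\qua}{\skill}$, so $\balloc(\qua')-\sfrac{\qua'}{\skill}>\ironed{\balloc}(\qua)-\sfrac{\qua}{\skill}-\eps$. Taking the supremum over $\qua$ and letting $\eps\to 0$ gives $\util^*_{\agent}(\balloc,\skill)\ge\util^*_{\agent}(\ironed{\balloc},\skill)$.

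\emph{Allocation.} I would show that the set of utility-maximizing qualities for type $(\skill,\budg)$ is the \emph{same} under $\balloc$ and $\ironed{\balloc}$, and that the two rules agree (as menus) on that set; since the realized allocation is the menu value at the chosen quality, this makes the allocation independent of any tie-breaking rule and identical for the two mechanisms. If $\qua^{\circ}$ maximizes $\balloc(\qua)-\sfrac{\qua}{\skill}$ over $[0,\quamax]$, then for $\qua'\in[0,\qua^{\circ}]$ optimality forces $\balloc(\qua')\le\balloc(\qua^{\circ})-\sfrac{\qua^{\circ}-\qua'}{\skill}\le\balloc(\qua^{\circ})$, so $\ironed{\balloc}(\qua^{\circ})=\balloc(\qua^{\circ})$, and by utility-invariance $\qua^{\circ}$ also maximizes under $\ironed{\balloc}$, with the same menu value. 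Conversely, a maximizer $\qua^{\bullet}$ of $\ironed{\balloc}$ must be the left endpoint of the maximal interval on which $\ironed{\balloc}$ is constant (a point just to its left would give the same value at strictly lower effort), so $\ironed{\balloc}$ jumps up at $\qua^{\bullet}$, which forces $\balloc(\qua^{\bullet})=\ironed{\balloc}(\qua^{\bullet})$; by utility-invariance $\qua^{\bullet}$ is then a maximizer under $\balloc$ too. Hence the maximizer sets coincide, the menus agree on them, the allocation $\alloc(\cdot,\skill)$ is unchanged, and so is the integrand of $\util_{\princ}(\game,\cdot)$ --- so $\ironed{\balloc}$ is optimal and monotone.

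The delicate step is the converse in the allocation argument: a priori, monotonizing could introduce a new utility-maximizing quality carrying a strictly larger allocation, which would \emph{hurt} the principal on a low-skill type $\skill<\thresh$; the ``left endpoint of a flat interval'' observation is exactly what rules this out. That observation --- and the well-posedness of ``$\argmax$'' in the first place --- uses the mild regularity that $\balloc$ is upper semi-continuous, which is implicitly assumed in writing ``$\argmax$'', is the natural hypothesis for a menu, and is preserved by taking the running supremum. An alternative, once an envelope formula for optimal utility is available, is to read off equality of the allocations directly from equality of the optimal-utility functions via $\alloc=\util^*_{\agent}+\skill\cdot\partial_{\skill}\util^*_{\agent}$, sidestepping the case analysis.
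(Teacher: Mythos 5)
Your proposal is correct and follows essentially the same route as the paper: the paper also replaces $\balloc$ by its ``ratcheted'' running maximum and observes that every quality where the two rules differ is dominated, for both rules, by the ratchet point (a weakly larger allocation at strictly less effort), so no agent's choice or allocation changes. Your write-up simply makes explicit the details the paper leaves implicit --- utility invariance, the coincidence of the maximizer sets, and the upper-semicontinuity needed for the $\argmax$ to be well posed.
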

\begin{proof}
For every allocation rule $\tilde{\balloc}$ that is strictly decreasing somewhere on its domain, the principal gets the same utility from the ``ratcheted" allocation rule $\bar{\balloc}(\tilde{\balloc})$ which increases the allocation in every decreasing region of $\tilde{\balloc}$ to be equal to the left end point of the region, i.e., flat on the region.  (The resulting $\bar{\balloc}(\tilde{\balloc})$ is weakly monotone increasing.)

Principal utility is the same for $\bar{\balloc}$ and $\tilde{\balloc}$ because every agent $\agent = (\skill,\budg)$ gets the same allocation: all qualities $\qua$ where $\bar{\balloc}(\qua)\neq \tilde{\balloc}(\qua)$ are ignored because they are dominated for both functions by the ``ratchet point"-quality (a weakly larger allocation requiring strictly less effort is preferred).
\end{proof}

\noindent Thus, in order to identify the optimal mechanism, we propose lottery menus:\footnote{\label{foot:lott} If we consider admitting multiple agents drawn independently from $\sdist\times\bdist$ and our utility is (independently) additive across decisions, it may be possible to negatively correlate admission decisions to target the total number of admits.  For example, if our setting is discrete and we choose an allocation rule $\balloc$, if $k_{\qua}$ agents apply with the same quality $\qua$, we may decide to run a {\em lottery} which admits exactly $\sfrac{1}{\balloc({\qua})}$ of the agents uniformly at random.}

\begin{definition}[Menu]
\label{def:lotterymenus}
A {\em lottery menu} mechanism is a (weakly) monotone allocation rule $\balloc$ with {\em menu options} $(\qua, \alloc=\balloc(\qua))$, where $\alloc$ is the allocation probability for an agent exhibiting quality $\qua$.
\end{definition}

\subsection{Leveraging Agent Feasibility to Improve Screening}
\label{s:feasanddiscriminate}

\noindent This section formalizes the feasibility for an agent to choose a given menu option.  Subsequently, this section explains how lottery menus effectively leverage feasibility to promote the principal's objective: decreasing allocation necessarily discriminates in favor of higher-skill agents (summarized below as \Cref{prop:x-q-bounds-s}; note, we can already observe this effect working in \Cref{ex:threshinsuff}).

Feasibility is due to (a) the budget constraint, and (b) a non-negative utility requirement:

\begin{definition}
\label{def:feasibility}
Menu option $(\qua, \alloc)$ is {\em feasible} for agent $\agent=(\skill,\budg)$ if: 
\begin{enumerate}
    \item {\em (affordability)} minimal effort $\eff^*= \sfrac{\qua}{\skill}$ (to achieve quality $\qua$) is at most $\budg$, i.e., $\eff^* \leq \budg$; and
    \item {\em (rationality)}  $(\qua,\alloc)$ induces non-negative utility for $\agent$, i.e., $\util_{\agent}(\eff^*,\alloc) = \alloc - \eff^*\geq 0$.
\end{enumerate}
\end{definition}

\begin{fact}
\label{fact:feasibility}
Menu option $(\qua,\alloc)$ is feasible for agent $\agent=(\skill,\budg)$ if and only if $\sfrac{\qua}{\skill}\leq\min\{\budg,\alloc\}$.  Upon choosing this option, $\agent$ achieves utility $\util_{\agent}(\qua) = \alloc - \sfrac{\qua}{\skill}$.
\end{fact}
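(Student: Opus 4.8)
The plan is to unpack \Cref{def:feasibility} directly, since the statement is essentially a repackaging of the two feasibility conditions together with the utility formula \eqref{eqn:agentutilalloc}. The one structural observation to record first is that, because quality is the deterministic product $\qua = \skill\cdot\eff$, the effort consistent with exhibiting quality exactly $\qua$ is uniquely $\eff^* = \sfrac{\qua}{\skill}$. Hence ``minimal effort to achieve quality $\qua$'' is in fact the only effort compatible with the menu option $(\qua,\alloc)$, and both feasibility conditions become conditions on the single quantity $\sfrac{\qua}{\skill}$.

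Given that, I would argue the equivalence in two lines. The affordability condition (part 1 of \Cref{def:feasibility}) is by definition $\eff^* \le \budg$, i.e.\ $\sfrac{\qua}{\skill}\le\budg$. The rationality condition (part 2) asks $\util_{\agent}(\eff^*,\alloc) = \alloc - \eff^* \ge 0$, where the equality is the definition of utility in \eqref{eqn:agentutilalloc}; this is exactly $\sfrac{\qua}{\skill}\le\alloc$. Conjoining the two inequalities yields $\sfrac{\qua}{\skill}\le\min\{\budg,\alloc\}$, and conversely that single inequality implies each of them, giving the claimed ``if and only if.'' For the utility expression, I would simply substitute the pinned-down effort $\eff^*=\sfrac{\qua}{\skill}$ into $\util_{\agent}(\eff,\alloc)=\alloc-\eff$, obtaining $\util_{\agent}(\qua) = \alloc - \sfrac{\qua}{\skill}$, which matches the function-of-quality form already recorded in \eqref{eqn:agentutilrule}.

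I do not anticipate a genuine obstacle: the content is a direct consequence of the definitions. The only point meriting a sentence of care is the uniqueness of the effort level realizing a prescribed quality, which is what licenses replacing ``minimal effort'' by ``the effort'' and treating feasibility as a joint bound on $\sfrac{\qua}{\skill}$; everything else is substitution.
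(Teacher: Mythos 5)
Your proposal is correct and matches the paper's (implicit) reasoning: the fact is stated without proof precisely because it is the direct conjunction of the affordability and rationality conditions in \Cref{def:feasibility}, each rewritten as a bound on $\sfrac{\qua}{\skill}$, with the utility formula following by substituting $\eff^*=\sfrac{\qua}{\skill}$ into \eqref{eqn:agentutilalloc}. Your added remark on the uniqueness of the effort realizing a given quality is a fair point of care and does not change the argument.
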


\noindent \Cref{fact:feasibility} implies that we can use stochastic (partial) allocation to improve the principal's expected utility by discouraging a low-skill agent from applying.  Consider two agents described qualitatively as: high-skill-low-budget ($\agenti[\hi]$) and low-skill-high-budget ($\agenti[\lo]$), where we naturally prefer to admit $\agenti[\hi]$.  Intuitively, we decrease $\alloc$ for a fixed $\bar{\qua}$, we get the following effects: (a) for larger $\budg$, the upper bound on $\sfrac{\bar{\qua}}{\skill}$ is set by $\alloc$ ``sooner" (as it decreases, rather than by budget); and (b) rationality is violated for  $\agenti[\lo]$ before it is violated for $\agenti[\hi]$.  Both effects (a) and (b) threaten $\agenti[\lo]$'s utility.  We state this formally as a {\em ceteris paribus} result, where dependence on feasibility is clear in the proof:

\begin{proposition}[The Lotteries-in-Screening Proposition]
\label{prop:x-q-bounds-s}
For a fixed quality $\bar{\qua}$, decreasing the allocation $\balloc(\bar{\qua})$ when an agent exhibits quality $\bar{\qua}$ increases the lower bound on the skill of agents who feasibly choose $(\bar{\qua},\balloc(\bar{\qua}))$.
\end{proposition}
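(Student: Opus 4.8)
The plan is to read the conclusion straight off the feasibility characterization. Fix the quality $\bar{\qua}$ and abbreviate $\alloc \vcentcolon= \balloc(\bar{\qua})$ for the allocation probability attached to this menu option. By \Cref{fact:feasibility}, the option $(\bar{\qua},\alloc)$ is feasible for an agent $\agent=(\skill,\budg)$ if and only if $\sfrac{\bar{\qua}}{\skill}\le\min\{\budg,\alloc\}$. Treating the degenerate case $\bar{\qua}=0$ separately (every agent can then select the option at zero effort, so there is nothing to prove), we may assume $\bar{\qua}>0$ and rearrange the feasibility inequality into a lower bound on skill: $\skill \ge \bar{\qua}/\min\{\budg,\alloc\}$, with the convention that the right-hand side is $+\infty$ when $\min\{\budg,\alloc\}=0$. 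Thus, holding the agent's budget $\budg$ fixed, the set of skills that can feasibly pick $(\bar{\qua},\alloc)$ is precisely the up-set $[\underline{\skill},\infty)$ with $\underline{\skill} = \underline{\skill}(\alloc,\budg) \vcentcolon= \bar{\qua}/\min\{\budg,\alloc\}$.

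With this explicit formula in hand, the remaining step is pure monotonicity. The map $\alloc \mapsto \min\{\budg,\alloc\}$ is weakly increasing, so $\alloc \mapsto \bar{\qua}/\min\{\budg,\alloc\}$ is weakly decreasing; equivalently, decreasing $\balloc(\bar{\qua})$ --- with $\bar{\qua}$ and $\budg$ unchanged --- weakly increases the skill threshold $\underline{\skill}$. This holds simultaneously for every budget level, so the pointwise lower bound on skill over the whole population is (weakly) pushed up, which is the assertion of the proposition. It is worth recording in the write-up exactly when the increase is strict: when $\alloc \le \budg$ the allocation constraint binds, $\underline{\skill} = \bar{\qua}/\alloc$, and the threshold rises strictly as $\alloc$ falls; when $\alloc \ge \budg$ the budget constraint binds, $\underline{\skill} = \bar{\qua}/\budg$, and the threshold is flat in $\alloc$. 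These are exactly effects (a) and (b) described before the statement --- the switch of the active upper bound on $\sfrac{\bar{\qua}}{\skill}$ from $\budg$ to $\alloc$, and the failure of the rationality constraint $\alloc - \sfrac{\bar{\qua}}{\skill}\ge 0$ at progressively higher skills --- now unified in the single quantity $\min\{\budg,\alloc\}$.

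There is essentially no hard step here; the entire content lives in \Cref{fact:feasibility}, and the proof is a one-line rearrangement followed by a monotonicity observation. The only points that need care in the exposition are the bookkeeping ones: stating that ``increase'' is meant weakly (it is strict precisely in the allocation-binding regime), handling $\bar{\qua}=0$ and $\alloc=0$ so that the ratio $\bar{\qua}/\min\{\budg,\alloc\}$ is well-defined, and emphasizing that this is a \emph{ceteris paribus} claim in which $\bar{\qua}$ and each agent's budget are frozen and only the allocation value $\balloc(\bar{\qua})$ at that single quality is varied. I would present the argument in exactly that order: invoke \Cref{fact:feasibility}, rearrange to $\skill \ge \bar{\qua}/\min\{\budg,\alloc\}$, then conclude by monotonicity of $\alloc\mapsto\min\{\budg,\alloc\}$.
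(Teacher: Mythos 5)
Your proof is correct and follows essentially the same route as the paper's: both rearrange the feasibility/zero-utility condition $\balloc(\bar{\qua}) = \sfrac{\bar{\qua}}{\skill^*}$ into a lower bound on skill and conclude that decreasing the allocation at fixed $\bar{\qua}$ pushes that bound up. You additionally fold the affordability constraint into the bound via $\min\{\budg,\alloc\}$ and record exactly when the increase is strict versus weak, which the paper's shorter proof (which works only with the rationality boundary) leaves implicit.
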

\begin{proof}
The agent's utility is the difference between allocation and effort: $\balloc(\bar{\qua})-\eff$.  Utility is 
0 for a marginally-skilled agent with skill $\skill^*$ who must put in effort $\eff^*=\balloc(\bar{\qua})$ to achieve quality $\bar{\qua}$.  We also have the abstract definition: $\qua = s\cdot \eff$.  Substituting from the definition, we have $\balloc(\bar{q}) = \sfrac{\bar{\qua}}{\skill^*}$.  The quality $\bar{\qua}$ is fixed, thus decreasing the left-hand side requires increasing the skill threshold $\skill^*$.
\end{proof}

\subsection{Geometric Interpretation}
\label{s:graphinterpret}

This section introduces geometric interpretations of the problem (that will be useful for our analysis of optimal mechanisms).  The first of these visualizations is graphical representation of an agent's feasible allocations.  Regions of feasibility map directly onto a graph of an allocation rule $\balloc$ which has quality space as its domain and allocation as its output. As exhibited in \Cref{fig:oneagentfeas}(Top) %EDIT (Left, Center)
which gives two graphic examples of these regions, we have the following geometric observations:

\begin{fact}
\label{fact:feasgeom}
An agent $\agent$ with skill $\skill$ (ignoring budget and affordability):
\begin{itemize}
    \item is partially identified by a ray out of the origin with slope $\sfrac{1}{\skill}$; this ray necessarily lower-bounds  $\agent$'s feasible region because this is the {\em zero-utility line}, i.e., points $(\qua,\alloc)$ on this line result in $\agent$ achieving utility of 0;\label{fact:geometry}
    \item who chooses a menu option $(\qua,\alloc)$ -- independent of being rational or not -- will get utility equal to the vertical difference between the chosen allocation $\alloc$ and the height $\qua\cdot(\sfrac{1}{\skill})$ of the zero-utility line at $\qua$ (which directly interprets from definitions: $\util_{\agent}(\qua) = \alloc - \sfrac{\qua}{\skill}$).
\end{itemize}
\end{fact}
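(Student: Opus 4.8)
The plan is to verify both bullets directly from the definition of agent utility in \Cref{eqn:agentutilrule}, namely that an agent $\agent$ exhibiting quality $\qua$ at allocation $\alloc$ obtains $\util_{\agent}(\qua) = \alloc - \sfrac{\qua}{\skill}$, together with the feasibility conditions of \Cref{def:feasibility}. Since budget and affordability are explicitly set aside in the hypothesis, only the rationality half of feasibility plays any role.

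For the first bullet, I would identify the \emph{zero-utility line} as the locus $\{(\qua,\alloc) : \util_{\agent}(\qua) = 0\}$ in the $(\qua,\alloc)$-plane. Solving $\alloc - \sfrac{\qua}{\skill} = 0$ yields $\alloc = \sfrac{\qua}{\skill}$, which, because quality is non-negative, traces a ray from the origin with slope $\sfrac{1}{\skill}$. That this ray lower-bounds $\agent$'s feasible region is then immediate from the rationality requirement of \Cref{def:feasibility} (equivalently \Cref{fact:feasibility}): feasibility forces $\alloc - \sfrac{\qua}{\skill} \geq 0$, so every feasible point lies weakly above the ray.

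For the second bullet, I would observe that the height of the zero-utility line at abscissa $\qua$ is $\qua \cdot \sfrac{1}{\skill} = \sfrac{\qua}{\skill}$; hence the vertical gap between a menu option's allocation $\alloc$ and that height equals $\alloc - \sfrac{\qua}{\skill}$, which is precisely $\util_{\agent}(\qua)$ by \Cref{eqn:agentutilrule}. Since this identity uses no inequality, it holds for an arbitrary menu option, rational or not, exactly as the statement claims.

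I do not expect a genuine obstacle here: the fact is essentially the utility formula re-expressed in graphical coordinates. The only subtleties worth stating carefully are that the locus is a ray rather than a full line (since $\qua \geq 0$), and that the lower-bound assertion invokes only the rationality condition, the budget/affordability constraint being deliberately ignored per the hypothesis.
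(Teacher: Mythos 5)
Your proposal is correct and matches the paper's treatment: the paper states this fact without a separate proof, regarding both bullets as direct readings of the utility formula $\util_{\agent}(\qua) = \alloc - \sfrac{\qua}{\skill}$ and the rationality half of \Cref{def:feasibility}, which is exactly what you verify. Your added care about the locus being a ray (since $\qua \geq 0$) and about which feasibility condition is invoked is consistent with, and slightly more explicit than, the paper's presentation.
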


\begin{figure}[t]

\begin{center}
\begin{tikzpicture}[scale = 0.44,pile/.style={->}]

\draw [pile] (-0.2,0) -- (10, 0);
\draw (0, -0.2) -- (0, 6.4);

\fill [Cyan!30] (0,0) -- (7.0,4.2) -- (7.0,5.4) -- (0.0,5.4) -- cycle;

\draw (9.0,-0.2) -- (9.0,0.2);
\draw (7.0,-0.2) -- (7.0,0.2);

\draw [thick, dashed] (7.0,5.4) -- (9.5,5.4);

\draw [thick, dashed] (7.0,0) -- (7.0,6.0);
\draw [thick, dashed] (9.0,0) -- (9.0,6.0);
\draw [blue, thick, dashed] (7.0,4.2) -- (9.0,5.4);

\draw[blue, thick] (0,0) -- (7.0,4.2);
\draw[blue, thick] (7.0,4.2) -- (7.0,5.4);
\draw[blue, thick] (0.0,5.4) -- (7.0,5.4);
\draw[blue, thick] (0.0,0.0) -- (0.0,5.4);

\draw (9.0,-0.5) node {\textcolor{blue}{\figsz $\skilli$}};
\draw (10, -0.5) node {$q$};
\draw (-0.5, 6.3) node {$x$};

\draw (-0.55, 0) node {\figsz $0$};
\draw (-0.55, 5.4) node {\figsz $1$};

\draw (7.0,-0.5) node {\textcolor{blue}{\figsz $\skilli\budg$}};

\draw (4.5,6.2) node {\figszparm $\budg<1$};

\end{tikzpicture}
\hspace{1cm}
\begin{tikzpicture}[scale = 0.44,pile/.style={->}]

\draw [pile] (-0.2,0) -- (10, 0);
\draw (0, -0.2) -- (0, 6.4);

\fill [Cyan!30] (0,0) -- (9.0,5.4)  -- (0.0,5.4) -- cycle;

\draw (9.0,-0.2) -- (9.0,0.2);

\draw [thick, dashed] (9.0,0) -- (9.0,6.0);

\draw[blue, thick] (0,0) -- (9.0,5.4);
\draw[blue, thick] (0.0,5.4) -- (9.0,5.4);
\draw[blue, thick] (0.0,0.0) -- (0.0,5.4);

\draw (9.0,-0.5) node {\textcolor{blue}{\figsz $\skilli$}};
\draw (10, -0.5) node {$q$};
\draw (-0.5, 6.3) node {$x$};

\draw (-0.55, 0) node {\figsz $0$};
\draw (-0.55, 5.4) node {\figsz $1$};

\draw (4.5,6.2) node {\figszparm $\budg=1$};

\end{tikzpicture}
\begin{tikzpicture}[scale = 0.44,pile/.style={->}]

\draw [pile] (-0.2,0) -- (10, 0);
\draw (0, -0.2) -- (0, 6.4);

\fill [Cyan!30] (0,0) -- (7.0,4.2) -- (7.0,5.4) -- (4.0,5.4) -- (4.0,3.6) -- cycle;

\fill [Purple!40] (0,0) -- (4.0, 3.6) -- (4.0,5.4) -- (0,5.4) -- cycle;

\draw (9.0,-0.2) -- (9.0,0.2);
\draw (4.0,-0.2) -- (4.0,0.2);
\draw (6.0,-0.2) -- (6.0,0.2);
\draw (7.0,-0.2) -- (7.0,0.2);

\draw [thick, dashed] (7.0,5.4) -- (9.5,5.4);

\draw [thick, dashed] (4.0,0) -- (4.0,6.0);
\draw [gray, dashed] (6.0,0) -- (6.0,6.0);
\draw [thick, dashed] (7.0,0) -- (7.0,6.0);
\draw [gray, dashed] (9.0,0) -- (9.0,6.0);
\draw [blue, thick, dashed] (7.0,4.2) -- (9.0,5.4);

\draw[blue, thick] (0,0) -- (7.0,4.2);
\draw[blue, thick] (7.0,4.2) -- (7.0,5.4);
\draw[blue, thick] (4.0,5.4) -- (7.0,5.4);
\draw[red, thick] (0,0) -- (4.0,3.6);
\draw[red, thick] (4.0,3.6) -- (4.0,5.4);
\draw[red, thick] (4.0,5.4) -- (0,5.4);
\draw[red, thick] (0,5.4) -- (0,0);
\draw[red, thick, dashed] (4.0,3.6) -- (6.0,5.4);

\draw (9.0,-0.5) node {\figsz \textcolor{blue}{$\skilli[\hi]$}};
\draw (10, -0.5) node {$q$};
\draw (-0.5, 6.3) node {$x$};

\draw (-0.55, 0) node {\figsz $0$};
\draw (-0.55, 5.4) node {\figsz $1$};

\draw (4.0,-0.5) node {\figsz \textcolor{red}{$\skilli[\lo]\budg$}};
\draw (6.0, -0.5) node {\figsz \textcolor{red}{$\skilli[\lo]$}};
\draw (7.0,-0.5) node {\figsz \textcolor{blue}{$\skilli[\hi]\budg$}};

\end{tikzpicture}
\end{center} 
%EDIT \caption{(Left, Center) A menu option is a point $(\qua,\alloc)$ with coordinates respectively from quality space $\reals_+$ and allocation space $[0,1]$.  The blue regions are feasible for agent $\agent = (\skilli,\budg)$, i.e., $\agent$ can select these menu options (when they exist) and achieve non-negative utility.  The regions' lower-bound line has slope $\sfrac{1}{\skilli}$.  (Right) The red region is feasible for an agent $\agenti[\lo]=(\skilli[\lo],\budg)$. The blue region (which entirely encompasses red) is feasible for agent $\agenti[\hi]=(\skilli[\hi],\budg)$. Regarding discussion of \Cref{prop:x-q-bounds-s} in \Cref{s:feasanddiscriminate}, observe how for fixed quality set by $\qua= \skilli[\lo]\cdot\budg$, it is possible to use decreased allocation awarded to a fixed quality (at/below the vertical boundary between red and blue regions), in order to exclusively admit a high-skill agent.}
\caption{(Top) A menu option is a point $(\qua,\alloc)$ with coordinates respectively from quality space $\reals_+$ and allocation space $[0,1]$.  The blue regions are feasible for agent $\agent = (\skilli,\budg)$, i.e., $\agent$ can select these menu options (when they exist) and achieve non-negative utility.  The regions' lower-bound line has slope $\sfrac{1}{\skilli}$.  (Bottom) The red region is feasible for an agent $\agenti[\lo]=(\skilli[\lo],\budg)$. The blue region (which entirely encompasses red) is feasible for agent $\agenti[\hi]=(\skilli[\hi],\budg)$. Regarding discussion of \Cref{prop:x-q-bounds-s} in \Cref{s:feasanddiscriminate}, observe how for fixed quality set by $\qua= \skilli[\lo]\cdot\budg$, it is possible to use decreased allocation awarded to a fixed quality (at/below the vertical boundary between red and blue regions), in order to exclusively admit a high-skill agent.}
\label{fig:oneagentfeas}
\label{fig:twoskillfeas}
\end{figure}

\noindent From the points of \Cref{fact:feasgeom}, agent types partitioned by skill $\skilli\in\ssupp$ are identified with their respective zero-utility lines. We illustrate this in \Cref{fig:twoskillfeas}(Bottom) %EDIT (Right)
by expanding its (Top)%EDIT (Left, Center) 
graphics to show a setting with two skill types: low skill $\skilli[\lo]$ and high skill $\skilli[\hi]$.  Within this context, we give formal definitions:

\begin{definition}
\label{def:lowskillline}
The {\em low-skill agents' line} is their zero-utility line with slope $\sfrac{1}{\skilli[\lo]}$ on the (quality, allocation) graph for (budget-unconstrained) low-skill agents.  Similarly, the {\em high-skill agents' line} is their zero-utilily line with slope $\sfrac{1}{\skilli[\hi]}$.  Generally, we refer to zero-utility lines as {\em skill lines}.
\end{definition}

\section{The Optimal Mechanism for 2-skill, Discrete-budget Types}
\label{s:2budget}

This section solves the discrete-type setting for a principal with skill threshold $\thresh$ and a stochastic agent $\agent=(\skill,\budg)$ with type-space defined by two skill-types with $\skilli[\lo]<\thresh<\skilli[\hi]$ and $n$ budget-types with $0<\budgi[1] < \budgi[2] < \ldots < \budgi[n]$.  Due to the discrete type-space, the optimal mechanism may not be unique.  \Cref{thm:2skill_nbudget} is sufficient to identify an optimal mechanism, which is a {\em slanted-stair function}:

\begin{definition}
\label{def:slantedstair}
A {\em slanted-stair function} $f:\reals_+\rightarrow[0,1]$ (as an allocation rule) has $f(0)=0$; and is a weakly increasing function that begins as a sequence of line segments that all have the same (constant), positive derivative.  Each line segment has open lower bound and closed upper bound.  (The function's output must reach 1 and is identically 1 for larger inputs.)

We refer to the line segments as {\em slanted-steps}.  We refer to the (necessarily positive) vertical gaps between slanted-steps as {\em jumps}.
\end{definition}

\noindent For a set of types $T$, let $\Delta(T)$ be the probability simplex over the elements of $T$.  Before giving our main result, we state an interesting observation: there will be nothing in the proof of \Cref{thm:2skill_nbudget} that requires the independence of $\sdist$ and $\bdist$.  Thus to state a stronger main result, we define a {\em correlated} admission game by $\gamecor = (\sdist, \bcorset, \thresh)$ where $\bcorset$ is a set of conditional budget-distributions: one budget-distribution corresponding to each skill-type with positive support in $\sdist$.\footnote{\label{foot:commonbudgsupportwlog} Assuming discrete budget-distributions, while elements of $\bcorset$ may have distinct support, it is without loss of generality to assume that they all have common, enumerated support $\budgi[1],\ldots,\budgi[n]$ because any locally-unused budget type $\budgi$ can be locally assigned probability 0.}

Agents in a correlated game have the same description as in the original, independent game.  By contrast, the principal's objective must be updated to reflect the correlation:

\begin{equation}
    \label{eqn:princproblemcorr}
    \max_{\balloc} \util_{\princ}(\gamecor, \balloc) \vcentcolon= \max_{\balloc}  \expecta_{\agent\sim(\sdist,~\bcorset)}\left[ \alloc(\balloc,\skill)\cdot\left(\skill-\thresh \right)\right]
\end{equation}

\begin{restatable}[Main Result]{theorem}{main}
\label{thm:2skill_nbudget}
Given a correlated admission game $\gamecor = (\sdist = \Delta(\{\skilli[\lo],\skilli[\hi]\}),\\\bcorset=\{\bdist_{\skilli[\lo]}=\Delta_{\skilli[\lo]}(\{\budgi[1],\ldots,\budgi[n]\}),\bdist_{\skilli[\hi]}=\Delta_{\skilli[\hi]}(\{\budgi[1],\ldots,\budgi[n]\})\},\thresh)$ with $0<\skilli[\lo]<\thresh<\skilli[\hi]$ and $0<\budgi[1] < \budgi[2] < \ldots < \budgi[n]$.  There exists an optimal mechanism $\balloc^*$ for the correlated admission game $\gamecor$ that is a slanted-stair function $f$ with constant slope equal to $\sfrac{1}{\skilli[\lo]}$ and at most one jump, and with:
\begin{enumerate}
    \item the region of the first slanted-step characterized by: equality to the {\em low-skill agents' zero-utility line};
    \item the quality-index at which $f$ jumps $\qua^{\text{{\em jump}}}$ (if it exists) characterized by: occurring either at quality $\quai[0]=0$, or occurring at some maximum-possible quality exhibited by some low-skilled agent $\agenti[\lo,i]=(\skilli[\lo],\budgi)$, i.e., at some $\qua^{\text{{\em jump}}} = \quamaxi[\lo,i]=\skilli[\lo]\cdot\budgi$;
    \item the region of the second slanted-step (if it exists) characterized by: the quality at which $f$ intersects the allocation-of-1 horizontal line is the maximum-possible quality exhibited by some high-skilled agent $\agenti[\hi,j]=(\skilli[\hi],\budgi[j])$, i.e., at some $\qua^{\alloc=1} = \quamaxi[\hi,j]=\skilli[\hi]\cdot\budgi[j]$. 
\end{enumerate}
\end{restatable}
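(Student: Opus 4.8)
The plan is to start from an arbitrary optimal allocation rule---which by \Cref{lem:monoballoc} may be taken weakly monotone---and push it into the stated canonical form through a sequence of modifications, each preserving or increasing $\util_{\princ}(\gamecor,\balloc)$. By \Cref{fact:feasibility}, each of the $2n$ types $\agenti[\lo,i]=(\skilli[\lo],\budgi)$ and $\agenti[\hi,j]=(\skilli[\hi],\budgi[j])$ selects the point on the graph of $\balloc$, among qualities $\qua\le\skill\budg$, that lies highest above its zero-utility line of slope $\sfrac{1}{\skill}$; the only qualities that can be a type's optimal choice are $0$, the ``cap qualities'' $\quamaxi[\lo,i]=\skilli[\lo]\budgi$ and $\quamaxi[\hi,j]=\skilli[\hi]\budgi[j]$, and qualities where $\balloc$ changes slope. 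So the first step is to argue $\balloc$ may be taken piecewise-linear with breakpoints and jumps only at cap qualities (and at $0$): on any open interval strictly between consecutive caps no type's budget binds, so I can replace $\balloc$ there by a flat segment followed by a jump at the right endpoint (when the local slope exceeds $\sfrac{1}{\skilli[\hi]}$), or by a flat segment (when it does not), without changing any type's best option; and, crucially, a flat segment of height $h$ over $[c,c')$ capped by a jump to $h+(c'-c)/\skilli[\lo]$ is outcome-equivalent, up to tie-breaking, to a slope-$\sfrac{1}{\skilli[\lo]}$ ramp on $(c,c')$, which is why the canonical form is expressed with slanted steps of slope exactly $\sfrac{1}{\skilli[\lo]}$ rather than flats plus jumps.

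Next I would normalize the \emph{first} slanted step to lie on the low-skill agents' zero-utility line, giving item~(1). Along any slope-$\sfrac{1}{\skilli[\lo]}$ ramp a low-skill agent is indifferent and earns rent equal to the vertical gap to its line, while a high-skill agent---whose effort is cheaper, $\sfrac{1}{\skilli[\hi]}<\sfrac{1}{\skilli[\lo]}$---strictly prefers to ride the ramp rightward to the limit of its budget. Raising the first ramp above the low-skill line would only hand low-skill agents positive rents and larger allocations, which is strictly bad since $\skilli[\lo]<\thresh$; lowering it below the line would only shrink the allocations obtained by high-skill agents with small budgets; so the line itself is optimal. Under principal-favorable tie-breaking (or an $\eps$-perturbation), a low-skill agent whose budget is exhausted within this first step may be taken to choose $\qua=0$ and receive allocation $0$.

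It then remains to show (i) one jump suffices and (ii) the jump quality $\qua^{\text{jump}}$ and the quality $\qua^{\alloc=1}$ where $\balloc$ meets $1$ sit at cap qualities. For (i), once the combinatorial data---which of the $2n$ types selects which step---is fixed, the principal's objective is linear in the step heights subject to linear implementability constraints (monotonicity, the $[0,1]$ range, each type preferring its assigned step, feasibility via \Cref{fact:feasibility}), so an optimum is attained at an extreme point of this polytope, and I would show such extreme points have at most one jump beyond the slope-$\sfrac{1}{\skilli[\lo]}$ ramp---equivalently, that any rule with two such jumps is weakly dominated by extending one ramp through the other jump. For (ii), with a single jump both $\qua^{\text{jump}}$ and $\qua^{\alloc=1}$ affect the objective only through which discrete types fall on each side; holding that partition fixed, the objective moves monotonically as each slides within an inter-cap interval, so it is optimized by pushing $\qua^{\text{jump}}$ to an endpoint in $\{0\}\cup\{\skilli[\lo]\budgi\}$ and $\qua^{\alloc=1}$ to some $\skilli[\hi]\budgi[j]$, which are items~(2) and~(3); enumerating the $O(n^2)$ such pairs also yields the polynomial-time algorithm.

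I expect the main obstacle to be step~(i), ``at most one jump'': pinning down the extreme-point structure of the implementability polytope for this two-dimensional-type screening problem---and verifying that the perturbation collapsing two jumps into one can always be carried out feasibly and without loss---requires careful bookkeeping of how every low- and high-skill type re-optimizes. The slope-normalization and first-step reductions also need care to confirm that high-skill agents with intermediate budgets never lose allocation when interior segments are reshaped.
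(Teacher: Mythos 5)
Your overall strategy---start from a weakly monotone optimal rule and push it into the single-jump slanted-stair form by a chain of objective-preserving reductions---is the same as the paper's, which proceeds via \Cref{lem:lemmaX} (never below the low-skill line), \Cref{lem:lemmaY} (derivative at least $\sfrac{1}{\skilli[\lo]}$), \Cref{lem:lemmaA} (derivative exactly $\sfrac{1}{\skilli[\lo]}$ between jumps), \Cref{lem:lemmaZ}, and \Cref{lem:lemmaB} (at most one jump, with the first step on the low-skill line), before pinning the jump and the allocation-one point at cap qualities. However, the step you yourself flag as the main obstacle---``at most one jump''---is exactly the paper's \Cref{lem:lemmaB}, and you never actually carry it out. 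You offer two alternatives without executing either: a vertex of your implementability polytope is not obviously a one-jump rule (the constraints binding at a vertex may be incentive constraints rather than the monotonicity or $[0,1]$-range constraints that would force a jump to close), and the perturbation that collapses two jumps into one must track how every low- and high-skill type re-partitions as the intermediate step height moves, which is precisely where the content of the lemma lives. As written, the central claim of the theorem is asserted rather than proved.

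There is also a subtler problem in your first reduction. For a general monotone rule a type's optimum need not sit at a cap quality or a kink (consider a concave stretch), and replacing an inter-cap segment over $[c,c')$ by a flat at $\balloc(c)$ followed by a jump to $\balloc(c')$ is not guaranteed to leave ``every type's best option'' unchanged: a low-skill type whose optimum was interior to $(c,c')$ has that option deleted, and after the surgery the right endpoint with the larger allocation $\balloc(c')$ may become its best feasible choice, strictly \emph{increasing} a low-skill allocation and hence hurting the principal since $\skilli[\lo]<\thresh$. The paper's ordering of reductions is designed to avoid exactly this: first force the rule above the low-skill line and force slope at least $\sfrac{1}{\skilli[\lo]}$, so that every type's choice is pinned at its cap, and only then flatten the slope to exactly $\sfrac{1}{\skilli[\lo]}$ so that the tie-breaking rule of \Cref{def:tiebreak} pulls low-skill agents back to the left end of their step. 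You should either adopt that ordering or prove that your surgery never raises any low-skill type's allocation.
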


\noindent (Note, optimal assignment of mechanism parameters and the given characterizations of \Cref{thm:2skill_nbudget} are sufficient to identify the height of the vertical jump, starting from the low-skill agents' line.)

The proof of \Cref{thm:2skill_nbudget} depends on a sequence of lemmas which we state at the end of this section.  The proofs of \Cref{thm:2skill_nbudget} and its supporting lemmas appear in the main version of the paper. 
 Graphically, the optimal menu (which may be discrete, corresponding to our discrete setting) will qualitatively have the single-jump structure of \Cref{fig:opt_2xn}(Top) %EDIT (Left)
with menu options on only two line segments (as two slanted-steps).  Multi-jump structures are precluded, such as the three-slanted-steps in \Cref{fig:subopt_2xn}(Bottom). %EDIT (Right).

\begin{figure}[t]
\begin{center}
\begin{tikzpicture}[scale = 0.44,pile/.style={->}]

\draw [pile] (-0.2,0) -- (15.5, 0);
\draw (0, -0.2) -- (0, 6.4);

\draw (13.50,-0.2) -- (13.50,0.2);
\draw (5.4,-0.2) -- (5.4,0.2);
\draw (9.0,-0.2) -- (9.0,0.2);
\draw (10.8,-0.2) -- (10.8,0.2);

\draw [dashed] (0.0,5.4) -- (14.5,5.4);

\draw [gray, dashed] (5.4,0) -- (5.4,6.0);
\draw [gray, dashed] (9.0,0) -- (9.0,6.0);
\draw [gray, dashed] (10.8,0) -- (10.8,6.0);
\draw [gray, dashed] (13.5,0) -- (13.5,6.0);

\draw[blue, dashed] (0,0) -- (13.5,5.4);
\draw[red, dashed] (0.0,0.0) -- (10.8,5.4);
\draw[ForestGreen, thick] (0,-0.04) -- (5.4,2.66);
\draw[ForestGreen, thick] (5.4,3.6) -- (9.0, 5.4);

\draw[fill=black] (0,0) circle (0.12cm);
\draw[fill=black] (2.7,1.31) circle (0.12cm);
\draw[fill=black] (3.6,1.76) circle (0.12cm);
\draw[fill=black] (5.4,2.66) circle (0.12cm);

\draw[fill=white] (5.4, 3.6) circle (0.12cm);
\draw[fill=black] (6.3,4.05) circle (0.12cm);
\draw[fill=black] (8.1,4.95) circle (0.12cm);
\draw[fill=black] (9.0,5.4) circle (0.12cm);

\draw (13.5,-0.5) node {\figsz \textcolor{blue}{$\skilli[\hi] $}};
\draw (15.5, -0.5) node {$q$};
\draw (-0.5, 6.3) node {$x$};

\draw (-0.55, 0) node {\figsz $0$};
\draw (-0.55, 5.4) node {\figsz $1$};

\draw (5.4,-0.5) node {\figsz \textcolor{red}{$\skilli[\lo]\budgi$}};
\draw (10.8, -0.5) node {\figsz \textcolor{red}{$\skilli[\lo]$}};
\draw (9.0,-0.5) node {\figsz \textcolor{blue}{$\skilli[\hi]\budgi[j]$}};

\end{tikzpicture}
\hspace{0.6cm}
\begin{tikzpicture}[scale = 0.44,pile/.style={->}]

\draw [pile] (-0.2,0) -- (15.5, 0);
\draw (0, -0.2) -- (0, 6.4);

\draw (13.50,-0.2) -- (13.50,0.2);
\draw (5.4,-0.2) -- (5.4,0.2);
\draw (9.0,-0.2) -- (9.0,0.2);
\draw (10.8,-0.2) -- (10.8,0.2);

\draw [dashed] (0.0,5.4) -- (14.5,5.4);

\draw [gray, dashed] (5.4,0) -- (5.4,6.0);
\draw [gray, dashed] (9.0,0) -- (9.0,6.0);
\draw [gray, dashed] (10.8,0) -- (10.8,6.0);
\draw [gray, dashed] (13.5,0) -- (13.5,6.0);

\draw[blue, dashed] (0,0) -- (13.5,5.4);
\draw[red, dashed] (0.0,0.0) -- (10.8,5.4);
\draw[ForestGreen, thick] (0,-0.04) -- (5.4,2.66);
\draw[ForestGreen, thick] (5.4,3.25) -- (8.1, 4.6);
\draw[ForestGreen, thick] (8.1, 4.95) -- (9.0, 5.4);

\draw[fill=black] (0,0) circle (0.12cm);
\draw[fill=black] (2.7,1.31) circle (0.12cm);
\draw[fill=black] (3.6,1.76) circle (0.12cm);
\draw[fill=black] (5.4,2.66) circle (0.12cm);

\draw[fill=white] (5.4, 3.25) circle (0.12cm);
\draw[fill=black] (6.3,3.7) circle (0.12cm);
\draw[fill=black] (8.1,4.6) circle (0.12cm);
\draw[fill=white] (8.1,4.95) circle (0.12cm);
\draw[fill=black] (9.0,5.4) circle (0.12cm);

\draw (13.5,-0.5) node {\figsz \textcolor{blue}{$\skilli[\hi] $}};
\draw (15.5, -0.5) node {$q$};
\draw (-0.5, 6.3) node {$x$};

\draw (-0.55, 0) node {\figsz $0$};
\draw (-0.55, 5.4) node {\figsz $1$};

\draw (5.4,-0.5) node {\figsz \textcolor{red}{$\skilli[\lo]\budgi$}};
\draw (10.8, -0.5) node {\figsz \textcolor{red}{$\skilli[\lo]$}};
\draw (9.0,-0.5) node {\figsz \textcolor{blue}{$\skilli[\hi]\budgi[j]$}};

\end{tikzpicture}
\end{center} 
\caption{(Top) A one-jump, slanted-stair allocation curve $\balloc$ (solid green) with $\qua^{\text{jump}}=\skilli[\lo]\cdot\budgi$ and $\qua^{\alloc=1} = \skilli[\hi]\cdot\budgi[j]$. The black dots are an example of discrete menu options. 
Recall that agent utility is interpretable as the vertical difference between allocation and (zero-utility) skill line.  Any low-skilled agent $\agenti[\lo]=(\skilli[\lo],\budgi[k])$ with $\quamaxi[\lo,k]=\skilli[\lo]\cdot\budgi[k]\leq \skilli[\lo]\cdot\budgi$ will choose menu option $(0,0)$ (per the tie-breaking rule, see \Cref{def:tiebreak}).  Any low-skilled agent with $\quamaxi[\lo,k]=\skilli[\lo]\cdot\budgi[k]> \skilli[\lo]\cdot\budgi$ will choose $(\skilli[\lo]\budgi+\epsilon,\balloc(\skilli[\lo]\budgi+\epsilon))$ with $\epsilon\rightarrow0$. Each high-skilled agent $\agenti[\hi]=(\skilli[\hi],\budgi[k])$ with $k < j$ will achieve maximum quality $\quamaxi[\hi,k]=\skilli[\hi]\cdot\budgi[k]<\skilli[\hi]\budgi[j]$; and those with $k\geq j$ will achieve quality $\skilli[\hi]\budgi[j]$ (and are allocated with probability 1).  (Bottom) A two-jump, slanted-stair allocation curve $\balloc$ (solid green).}
\label{fig:opt_2xn}
\label{fig:subopt_2xn}
\end{figure}

The statement of \Cref{thm:2skill_nbudget} induces the following corollary regarding the polynomial running time of a brute-force algorithm that searches over the possible combinations of jump-points and jump-heights, which is sufficient to find the optimal algorithm of the statement's setting.

\begin{corollary}[Running Time]
\label{cor:2skill_nbudget_runtime}
Given a correlated admission game $\gamecor = (\sdist = \Delta(\{\skilli[\lo],\skilli[\hi]\}),\\\bcorset=\{\bdist_{\skilli[\lo]}=\Delta_{\skilli[\lo]}(\{\budgi[1],\ldots,\budgi[n]\}),\bdist_{\skilli[\hi]}=\Delta_{\skilli[\hi]}(\{\budgi[1],\ldots,\budgi[n]\})\},\thresh)$ with $0<\skilli[\lo]<\thresh<\skilli[\hi]$ and $0<\budgi[1] < \budgi[2] < \ldots < \budgi[n]$ and -- per \Cref{thm:2skill_nbudget} -- the sufficient, discrete search space for an optimal algorithm.

The optimal mechanism may be identified by a brute-force search over the $O(n^2)$ unknown combinations of parameters of the optimal characterization (\Cref{thm:2skill_nbudget}). The time to evaluate each allocation rule (resulting from a combination of parameters) also runs in polynomial time.
\end{corollary}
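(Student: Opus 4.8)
The plan is to reduce the search for an optimal mechanism to a finite enumeration whose correctness is immediate from \Cref{thm:2skill_nbudget}, and then to bound both the number of candidates and the cost of scoring each one. By \Cref{thm:2skill_nbudget}, some optimal $\balloc^*$ is a slanted-stair function (\Cref{def:slantedstair}) with common slope $1/\skilli[\lo]$ and at most one jump, whose first slanted-step is the low-skill agents' zero-utility line, whose jump quality $\qua^{\text{jump}}$ — if present — lies in the finite set $\{0\}\cup\{\skilli[\lo]\budgi[i]:1\le i\le n\}$, and whose second slanted-step — if present — reaches allocation $1$ at a quality $\qua^{\alloc=1}\in\{\skilli[\hi]\budgi[j]:1\le j\le n\}$. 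The first observation is that, since the first step and the common slope are already fixed, the pair $(\qua^{\text{jump}},\qua^{\alloc=1})$ determines the whole rule: the bottom of the jump must sit at allocation $v=1-(\qua^{\alloc=1}-\qua^{\text{jump}})/\skilli[\lo]$ so that a line of slope $1/\skilli[\lo]$ through it passes through $(\qua^{\alloc=1},1)$. I would therefore enumerate the $(n+1)\cdot n=O(n^2)$ pairs $(\qua^{\text{jump}},\qua^{\alloc=1})$ (together with the $O(n)$ degenerate variants in which the jump goes straight to $1$, and the zero-jump rule that is simply the low-skill line up to $\qua=\skilli[\lo]$ — neither changes the $O(n^2)$ count), discarding any pair whose induced $v$ fails to yield a valid slanted-stair (e.g.\ $v\notin(\qua^{\text{jump}}/\skilli[\lo],1]$, or $\qua^{\alloc=1}<\qua^{\text{jump}}$).

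For each surviving candidate $\balloc$ I would compute $\util_{\princ}(\gamecor,\balloc)$ directly from \Cref{eqn:princproblemcorr} by solving each of the $2n$ types' demand problems $\max_{\eff\in[0,\budg]}\balloc(\skill\cdot\eff)-\eff$. The point is that $\balloc$ is piecewise linear with only four pieces — the slope-$1/\skilli[\lo]$ segment through the origin, the jump, the second slope-$1/\skilli[\lo]$ segment, and the flat piece at allocation $1$ — so each demand problem is solved in $O(1)$ time by inspecting the pieces lying at or below the truncation quality $\skill\cdot\budg$. Concretely, a low-skill type ($\skill=\skilli[\lo]$) has an objective that is constant along each slanted-step, so the model's tie-breaking convention (\Cref{def:tiebreak}) sends it to $(0,0)$ when $\skilli[\lo]\budg\le\qua^{\text{jump}}$ and otherwise to the (open) left end of the second step, for allocation $v$; a high-skill type ($\skill=\skilli[\hi]$) has an objective with positive net slope $1/\skilli[\lo]-1/\skilli[\hi]>0$ along each slanted-step and slope $-1/\skilli[\hi]<0$ on the flat piece, so it takes quality $\min\{\skilli[\hi]\budg,\qua^{\alloc=1}\}$, and one checks its induced utility is always nonnegative (for instance a high-skill type that reaches the flat piece gets utility $1-\qua^{\alloc=1}/\skilli[\hi]=1-\budgi[j]\ge 0$), so rationality never forces the $(0,0)$ option. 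Summing the probability-weighted terms $\alloc(\balloc,\skill)\cdot(\skill-\thresh)$ over the $2n$ types then yields the objective in $O(n)$ time.

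Iterating this $O(n)$-time evaluation over the $O(n^2)$ candidates and returning a maximizer costs $O(n^3)$ overall, which is polynomial, and the returned rule is optimal for $\gamecor$ because \Cref{thm:2skill_nbudget} guarantees that an optimal rule appears among the enumerated candidates; this is exactly what the corollary asserts. Since \Cref{thm:2skill_nbudget} is taken as given, I do not expect a genuinely hard step — the argument is essentially bookkeeping — and I anticipate the only places needing care to be (i) pruning the $(\qua^{\text{jump}},\qua^{\alloc=1})$ pairs that do not induce a valid slanted-stair and (ii) resolving the low-skill types' indifference via the tie-breaking rule; both are immediate from the explicit piecewise-linear form of $\balloc$.
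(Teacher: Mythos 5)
Your proposal is correct and takes essentially the same route as the paper, which treats the corollary as an immediate consequence of \Cref{thm:2skill_nbudget}: enumerate the $O(n)$ candidate jump qualities $\qua^{\text{jump}}\in\{0\}\cup\{\skilli[\lo]\budgi\}$ against the $O(n)$ candidate full-allocation qualities $\qua^{\alloc=1}\in\{\skilli[\hi]\budgi[j]\}$, note that each pair pins down the jump height and hence the whole slanted-stair rule, and evaluate each of the $O(n^2)$ candidates in polynomial time by solving the $2n$ types' demand problems. Your additional bookkeeping (the explicit formula for the jump height, the pruning of invalid pairs, and the tie-breaking resolution for indifferent low-skill types) is all consistent with the paper's conventions and only makes explicit what the paper leaves implicit.
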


\subsection{Discussion of Optimal Characterization in \Cref{thm:2skill_nbudget}}
\label{s:discussmainthm}

Having a characterization of optimal mechanisms, we would like to understand qualitatively their performance.  We will discuss two dimensions of efficacy: (1) mechanism performance, of course, as the originally-defined objective; and (2) {\em fairness}, which informally is a measurement of how well outcomes-per-agent-type conform to some definition of what outcomes the agents {\em arguably deserve}, specifically compared to other agents' type-outcome pairs.

Regarding mechanism performance, we know that the single-jump, slanted-stair characterization of \Cref{thm:2skill_nbudget} improves on the (deterministic) threshold mechanisms of \Cref{def:threshmech} which are not generally optimal (by \Cref{prop:threshinsuff}), except for games with convenient distributions $\sdist$ and $\bdist$ (e.g., \Cref{prop:threshoptsinglebudg}).  On the other hand, optimal mechanism performance still falls short of the {\em offline optimal} benchmark 
which has full information and which is generally unachievable; rather, we may use it as a reference mechanism to which we compare performance:

\begin{definition}
\label{def:offopt}
Given a stochastic agent $\agent=(\skill,\budg)$, the {\em offline optimal mechanism} for a principal requiring skill-threshold $\thresh$ -- which is assumed to know the realized skill type of the agent as $\hat{\skill}\sim\sdist$ -- admits the agent if $\hat{\skill} > \thresh$ and only if $\hat{\skill}\geq \thresh$; and this admission decision is independent of the agent's realized budget type $\hat{\budg}\sim\bdist$.
\end{definition}

\noindent The offline optimal mechanism is unconditionally optimal, as it fully allocates every agent with skill above the threshold and fully rejects every agent with skill below it.  In order to increase the performance of mechanisms beyond what is possible from \Cref{thm:2skill_nbudget} -- i.e., from standard mechanism design subject to agents' incentive compatibility constraints -- in \Cref{s:subsidy} we consider a modified admission problem in which the agent may have exogenous access to a {\em subsidy}.

Regarding fairness, we first must consider the philosophical concept of what comparisons between distinct agents' type-outcome pairs may arise as fair or as unfair within the parameters of our model (\Cref{s:setting}).  Loosely summarizing: our agents independently have higher or lower skills and higher or lower budgets; and by best-responding to a given allocation rule based on skill and budget, agents are consequently admitted with larger or smaller probability.  Reasonably, agent ``skill" is positively correlated with student value and agent budget is independent, so we posit that higher skill types are {\em more-deserving} of being admitted than lower types, 
independent of budget.  Moreover, the degree of worthiness should increase with increasing {\em cardinal} difference in skill types.

Thus, we consider the following concept of fairness: regardless of budget, larger (admission) allocations given to lower-skilled agents are comparatively judged to be unfair outcomes as the higher-skilled type is more-deserving; and the larger the skill-difference, the larger the unfairness.  Furthermore the strict contrapositive also holds: comparatively larger allocations given to higher-skilled agents are more fair.  However, the choice of function used to measure technically the unfairness of an allocation rule remains debatable.

From the following intuition, the mechanism design problem of our admission-game model should be positively aligned with objectives resulting from our concept of fairness.  First, recall the principal's utility from admitting an agent $\agent$, which is $\util_{\princ}(\agent~|~\text{admitted})=\skill-\thresh$.  Given this utility function, the principal has a precise, cardinal utility measure over admitting agent skill-types, which has both order and cardinality aligned with fairness as desired, regardless of the technical fairness measure. 
 I.e., the principal is incentivized to choose an allocation rule that increases fairness.  In at least one sense, this is strictly true, which moreover motivates the principal's objective function itself (see equation \eqref{eqn:princproblemcorr}) as a formal example of fairness measure:
 
 \begin{fact}
 \label{fact:princfairaligned}
 Where incentive compatibility permits, the principal is incentivized to inherently prefer that between two agent types with different skill levels, the agent type with higher skill will receive the larger allocation probability.
\end{fact}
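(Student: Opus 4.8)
The plan is to read the principal's objective \eqref{eqn:princproblemcorr} as a linear functional of the per-type admission probabilities and to notice that its coefficients are ordered by skill. Expanding the expectation, $\util_{\princ}(\gamecor,\balloc)=\sum_{\agent=(\skill,\budg)}\Pr[\agent]\cdot\alloc(\balloc,\skill)\cdot(\skill-\thresh)$, so with the rule $\balloc$ held fixed the factor multiplying the admission probability $\alloc(\balloc,\skill)$ of a type $\agent=(\skill,\budg)$ is $\Pr[\agent](\skill-\thresh)$; equivalently, the principal's marginal gain per unit of admitted probability mass of a type with skill $\skill$ equals $\skill-\thresh$, which is strictly increasing in $\skill$ and independent of $\budg$. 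This one observation is the crux; the rest is bookkeeping.

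Next I would turn ``the principal prefers'' into a precise domination statement that is insensitive to the (possibly unequal) type probabilities. Fix two types $\agenti[\lo]$ and $\agenti[\hi]$ with $\skilli[\lo]<\skilli[\hi]$, and suppose a candidate outcome admits $\agenti[\lo]$ with probability $a$ and $\agenti[\hi]$ with probability $b$, where $a>b$. I would compare it with the mass-conserving reallocation that lowers $\agenti[\lo]$'s admission probability by $\Delta_{\lo}>0$ and raises $\agenti[\hi]$'s by $\Delta_{\hi}>0$ with $\Pr[\agenti[\lo]]\Delta_{\lo}=\Pr[\agenti[\hi]]\Delta_{\hi}$ (call this common value $m$), the magnitudes chosen so that the two new admission probabilities become equal (they land at a convex combination of $a$ and $b$, so both stay in $[0,1]$ and the inversion is removed). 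The change in $\util_{\princ}$ is $-m(\skilli[\lo]-\thresh)+m(\skilli[\hi]-\thresh)=m(\skilli[\hi]-\skilli[\lo])>0$, regardless of where $\skilli[\lo]$ and $\skilli[\hi]$ lie relative to $\thresh$. Hence the principal strictly prefers the reallocated outcome, in which the higher-skill type no longer receives the smaller allocation; so any outcome exhibiting a skill-inversion is strictly dominated, on the principal's own objective, by one without it.

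Finally I would address the qualifier ``where incentive compatibility permits.'' The reallocation above ranks \emph{outcomes} only; whether it is realizable by an allocation rule $\balloc$ is governed by the agents' affordability and rationality constraints (\Cref{def:feasibility}), and indeed the whole reason the stochastic mechanisms of \Cref{thm:2skill_nbudget} are needed is that incentive compatibility can force the principal to tolerate a larger allocation for a lower-skill type. Whenever those constraints are slack the previous paragraph applies; in the extreme case of no incentive constraints at all one recovers the offline optimal mechanism of \Cref{def:offopt}, which admits every type with $\skill>\thresh$ with probability $1$ and every type with $\skill<\thresh$ with probability $0$ and is therefore (strictly, across the threshold) monotone in skill.

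I do not expect a real obstacle: the claim falls out of the shape of the objective, and the only mildly delicate point --- stating the ``prefers'' relation cleanly despite unequal type probabilities --- is handled by the mass-conserving swap. Because the argument uses only linearity of the objective and the monotone-in-skill coefficients, it applies verbatim to the independent game \eqref{eqn:princproblem} and to arbitrarily many skill types.
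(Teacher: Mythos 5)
Your proposal is correct and takes essentially the same route as the paper, which supports this Fact only by the observation that the per-admit utility $\skill-\thresh$ is increasing in skill, so the (linear) objective has skill-monotone coefficients on the per-type admission probabilities. Your mass-conserving swap is a careful formalization of that one-line observation, with the minor added virtue of handling two skill types on the same side of $\thresh$ (where simply zeroing out the lower type would not obviously help), a case the paper's informal remark does not need in its setting with $\skilli[\lo]<\thresh<\skilli[\hi]$.
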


\begin{restatable}{corollary}{princasfairmeasure}
\label{cor:princ-as-fairness-dominates} Given a %(discrete) 
correlated admission game $\gamecor = (\sdist = \Delta(\{\skilli[\lo],\skilli[\hi]\}),\\\bcorset=\{\bdist_{\skilli[\lo]},\bdist_{\skilli[\hi]}\},\thresh)$ with $0<\skilli[\lo]<\thresh<\skilli[\hi]$ and $0<\budgi[1] < \budgi[2] < \ldots < \budgi[n]$.  For the fair mechanism design problem which maximizes the fairness measure set equal to the principal's utility function, the optimal mechanism and characterization of optimal mechanisms are determined identically to \Cref{thm:2skill_nbudget}.
\end{restatable}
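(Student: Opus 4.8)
The plan is to observe that \Cref{cor:princ-as-fairness-dominates} is a statement about relabeling, not re-optimizing. By hypothesis, the ``fair mechanism design problem'' is the one whose objective --- the fairness measure --- is \emph{set equal to} the principal's utility functional $\util_{\princ}(\gamecor,\balloc)$ of \eqref{eqn:princproblemcorr}, while the feasible set of mechanisms (monotone allocation rules / lottery menus, with agents best-responding exactly as in \Cref{s:setting}) is unchanged. Hence the fair problem and the principal's problem of \Cref{s:2budget} are literally the same optimization instance: the same objective maximized over the same domain. They therefore have the same set of optimizers, and in particular the structural characterization established for the principal's problem transfers verbatim. The proof is essentially this observation together with an appeal to \Cref{thm:2skill_nbudget}.

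First I would check that taking the fairness measure to be $\util_{\princ}(\gamecor,\balloc) = \expecta_{\agent\sim(\sdist,\bcorset)}[\alloc(\balloc,\skill)\cdot(\skill-\thresh)]$ is a legitimate instance of the fairness concept articulated in \Cref{s:discussmainthm}. Each skill type $\skill$ enters additively with weight $\skill-\thresh$: this weight is strictly increasing in $\skill$; it is negative for $\skill<\thresh$, so extra allocation shifted toward a below-threshold (less-deserving) type \emph{decreases} the measure; it is positive for $\skill>\thresh$, so extra allocation toward a more-deserving type \emph{increases} it; and its cardinal magnitude $\lvert\skill-\thresh\rvert$ grows with distance from the threshold, matching the desideratum that larger skill gaps carry larger (un)fairness. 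This is precisely the alignment recorded in \Cref{fact:princfairaligned}, now read as ``the measure is improved by shifting allocation toward the higher-skill type wherever incentive compatibility permits.'' With this in hand, maximizing the fairness measure \emph{is} maximizing $\util_{\princ}(\gamecor,\cdot)$.

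Then I would invoke \Cref{thm:2skill_nbudget} directly: for the two-skill, $n$-budget correlated game $\gamecor$ with $0<\skilli[\lo]<\thresh<\skilli[\hi]$ and $0<\budgi[1]<\cdots<\budgi[n]$, it already exhibits an optimal $\balloc^*$ for $\max_{\balloc}\util_{\princ}(\gamecor,\balloc)$ --- the slanted-stair function of constant slope $\sfrac{1}{\skilli[\lo]}$ with at most one jump, first step on the low-skill zero-utility line, jump at $\qua=0$ or at some $\skilli[\lo]\budgi$, second step reaching allocation $1$ at some $\skilli[\hi]\budgi[j]$ --- together with the full characterization of optimizers. Since ``the fair mechanism design problem'' denotes maximizing the identical functional over the identical class of mechanisms, $\balloc^*$ is optimal for it and the characterization is the same, which is the claim. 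There is no real obstacle beyond recognizing this equivalence; the one point to state carefully is scope --- the corollary (and its proof) pins the fairness measure to this one functional, and makes no assertion about the broader family of conceivable fairness measures discussed only informally in \Cref{s:discussmainthm}.
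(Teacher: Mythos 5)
Your proposal is correct and matches the paper's own (very brief) argument: the paper likewise just observes that with the fairness measure set equal to the principal's utility, the fair design problem is the identical optimization, so \Cref{thm:2skill_nbudget} applies verbatim. Your additional check that this functional instantiates the informal fairness desiderata of \Cref{s:discussmainthm} is a reasonable elaboration of the paper's appeal to \Cref{fact:princfairaligned}, not a departure.
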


\noindent  Second, the offline optimal mechanism can illustrate the alignment between the principal's mechanism design incentives and fairness.  On one hand, offline optimal represents perfect -- albeit generally unachievable -- performance for the mechanism.  On the other hand, by giving allocation 1 to an upward-closed set of skill-types above $\thresh$, allocation 0 to a downward-closed set of skill-types below $\thresh$, and any constant allocation to skill-type exactly $\thresh$, the allocation is arguably fair because no rejected skill-type can protest for increased allocation on the basis that it is strictly more-deserving than any admitted skill-type.  Thus, the offline optimal mechanism as ideal-objective further aligns the principal and fairness.

For purposes of space, we defer discussion of a third intuitive perspective supporting the alignment of optimal mechanisms and fairness to \Cref{a:compare-ss-to-dt}.

\subsection{The Proof of \Cref{thm:2skill_nbudget}}
\label{s:proofoutlinemainthm}

We need one more critical detail to set up the proof of \Cref{thm:2skill_nbudget}.  Depending on allocation rule $\balloc$, an agent $\agent$ may be indifferent between a set of quality-allocation menu options that are optimal for $\agent$.  To address this, we define our tie-breaking rule:

\begin{definition}
\label{def:tiebreak}
When an agent's set of optimal menu options is multiple, the {\em tie-breaking rule} is: 
all agents choose {\em the smallest menu option of the set}.  (Note, ``smallest" is the same in either dimension of quality or allocation.)
\end{definition}

\noindent This tie-breaking rule is material for our results: it is sufficient to break ties optimally in favor of the principal's objective.\footnote{\label{foot:tiebreakex} This tie-breaking rule is justified similarly to tie-breaking in other areas of mechanism design, e.g., in auctions with a revenue objective in which agents with value equal to price are assumed to buy, in favor of the designer's objective.  Intuitively, the justification is that small perturbations to the design can achieve the same outcome within arbitrary (lossy) required precision; so instead, we simplify the analysis by allowing ties and breaking them favorably, rather than accounting for a notation-heavy perturbation.}  Recalling that utility is equal to the vertical difference between the allocation and the height of the zero-utility line (\Cref{fact:feasgeom}), the key effect of tie-breaking is observed in \Cref{fig:opt_2xn}: within a region of a single slanted-step, {\em low-skill agents are indifferent everywhere and choose the minimal allocation at the left endpoint of the region}.  This tie-breaking rule applies for all result statements and proofs in this paper.

As an overview, the proof of \Cref{thm:2skill_nbudget} proceeds as a search for the optimal mechanism.  This search is organized as a sequence of reductions of the search space: it starts with an allocation rule that is monotone (\Cref{lem:monoballoc} on page~\pageref{lem:monoballoc}) but is otherwise arbitrary; and then with each successive lemma, we prove that it is sufficient to restrict attention to a smaller set of allocation rules.  \Cref{lem:lemmaB} is the last reduction in the sequence and states that the optimal mechanism must be a slanted-stair function (\Cref{def:slantedstair}) with at most one jump.  The final proof of \Cref{thm:2skill_nbudget} starts from the statement of \Cref{lem:lemmaB} and proves the additional details in its own statement.

All of the following lemmas assume the same setting as the statement of \Cref{thm:2skill_nbudget}, which is: given a correlated admission game $\gamecor = (\sdist = \Delta(\{\skilli[\lo],\skilli[\hi]\}), \bcorset=\{\bdist_{\skilli[\lo]}=\Delta_{\skilli[\lo]}(\{\budgi[1],\ldots,\budgi[n]\}),\bdist_{\skilli[\hi]}=\Delta_{\skilli[\hi]}(\{\budgi[1],\ldots,\budgi[n]\})\},\thresh)$ with $0<\skilli[\lo]<\thresh<\skilli[\hi]$ and $0<\budgi[1] < \budgi[2] < \ldots < \budgi[n]$.

With this overview in place, the sequence of reductions of the search space is:

\begin{restatable}[Lower bound]{lemma}{lemmaX}
\label{lem:lemmaX}
An optimal allocation rule is never under the low-skill agents' line.
\end{restatable}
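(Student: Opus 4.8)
The plan is to argue by a pointwise "lifting" exchange argument: given any monotone allocation rule $\balloc$ (which we may assume by \Cref{lem:monoballoc}), I construct a modified rule $\balloc'$ that coincides with $\balloc$ except on the quality interval where $\balloc$ dips strictly below the low-skill agents' line $\qua\mapsto \sfrac{\qua}{\skilli[\lo]}$, and on that interval I raise $\balloc'$ up to equal that line (capping at $1$, though on the relevant interval the line value is at most $1$ anyway since below the line $\balloc<1$ is automatic where the line is below $1$). I then show (i) $\balloc'$ is still a valid, monotone allocation rule, and (ii) principal utility weakly increases, so an optimal rule never strictly drops below the low-skill line. First I would record the geometric setup from \Cref{fact:feasgeom}: an agent of skill $\skill$ has zero-utility line of slope $\sfrac{1}{\skill}$, and for $\skill = \skilli[\lo]$ this is precisely the low-skill line; any menu point strictly below this line is infeasible (negative utility) for every low-skill agent and also lies below the $\skilli[\hi]$ line, so no high-skill agent would ever want it when a weakly-higher-allocation point at weakly-lower quality is available.

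The key steps, in order, are: (1) Let $I = \{\qua : \balloc(\qua) < \sfrac{\qua}{\skilli[\lo]}\}$ and define $\balloc'(\qua) = \max\{\balloc(\qua), \min\{\sfrac{\qua}{\skilli[\lo]}, 1\}\}$. Check monotonicity of $\balloc'$: it is the pointwise max of two weakly increasing functions, hence weakly increasing; and it stays in $[0,1]$ by the cap. (2) Show that no agent's chosen menu option changes in a way that hurts the principal. For a low-skill agent $\agenti[\lo] = (\skilli[\lo],\budgi)$: its optimal utility under $\balloc$ is $\util^*(\balloc,\skilli[\lo]) = \max_{\qua\le\skilli[\lo]\budgi}[\balloc(\qua) - \sfrac{\qua}{\skilli[\lo]}]$, and since every point newly added in $\balloc'$ lies exactly on the low-skill line (utility $0$) or below the old value, raising $\balloc$ to $\balloc'$ on $I$ adds only points of zero utility to a low-skill agent and never lowers any value; by the tie-breaking rule (\Cref{def:tiebreak}) the low-skill agent still picks the smallest optimal point, which is at most as large as before — in particular its allocation does not increase, which is good for the principal since $\skilli[\lo] - \thresh < 0$. (3) For a high-skill agent $\agenti[\hi] = (\skilli[\hi],\budgi[k])$: $\balloc' \ge \balloc$ pointwise, and the standard menu/demand argument shows that weakly raising a monotone menu weakly raises every buyer's chosen allocation — formally, compare the point the high-skill agent chose under $\balloc$, say $(\qua^\dagger,\balloc(\qua^\dagger))$ with $\qua^\dagger\le\skilli[\hi]\budgi[k]$: under $\balloc'$ the same quality is still affordable and has allocation $\balloc'(\qua^\dagger)\ge\balloc(\qua^\dagger)$, and the new optimal allocation $\alloc'(\balloc',\skilli[\hi])$ is at least $\balloc'(\qua^\dagger)\ge\balloc(\qua^\dagger)=\alloc(\balloc,\skilli[\hi])$ (using \Cref{fact:feasibility} to confirm feasibility is preserved — affordability is unchanged, and rationality only gets easier when allocation rises). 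Since $\skilli[\hi]-\thresh>0$, this weakly increases the high-skill contribution. (4) Conclude $\util_{\princ}(\gamecor,\balloc') \ge \util_{\princ}(\gamecor,\balloc)$, so there is an optimal rule not strictly below the low-skill line.

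The main obstacle I anticipate is step (2)/(3) done carefully with the tie-breaking rule: I need to be sure that when I add zero-utility points on the low-skill line, a low-skill agent does not switch to a point that, while giving the same utility $0$, has \emph{larger} allocation than its previous choice — this is exactly what the "smallest optimal menu option" tie-break in \Cref{def:tiebreak} rules out, but the argument must invoke it explicitly, and I must also handle the boundary subtlety that the newly added points form a half-open/closed structure (the low-skill agent with $\quamaxi[\lo,i]$ exactly at the right endpoint of an interval of $I$). A secondary subtlety is confirming that capping at $1$ causes no trouble: on the portion of $I$ where $\sfrac{\qua}{\skilli[\lo]} > 1$ we would be raising to $1$, but there $\balloc(\qua)$ could already be less than $1$; raising it to $1$ only helps high-skill agents and, for low-skill agents, allocation $1$ at quality $\qua > \skilli[\lo]$ still gives them negative utility so they won't take it — so the cap is harmless, but this case distinction should be spelled out. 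Everything else is routine monotone-menu bookkeeping.
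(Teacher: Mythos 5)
Your overall strategy --- lift $\balloc$ pointwise up to the capped low-skill line $\qua\mapsto\min\{\sfrac{\qua}{\skilli[\lo]},1\}$ and show that low-skill allocations weakly fall while high-skill allocations weakly rise --- is sound, and your steps (1) and (2) are handled correctly, including the tie-breaking subtlety. (The paper defers its own proof of this lemma to the full version, so I am judging your argument on its own terms; as with the neighboring lemmas, the statement should be read existentially, which is exactly what an exchange argument delivers.) The genuine gap is in step (3). The ``standard menu/demand argument'' you invoke is false: pointwise raising a monotone menu does \emph{not} in general weakly raise an agent's chosen allocation. For instance, with menu options $(1,0.5)$ and $(2,0.6)$ and an agent of skill $20$, the agent picks the second option (utility $0.5$ versus $0.45$); raising the first option to $(1,0.56)$ makes the agent switch to it (utility $0.51$ versus $0.5$), and its allocation drops from $0.6$ to $0.56$. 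Correspondingly, your inequality $\alloc'(\balloc',\skilli[\hi])\geq\balloc'(\qua^\dagger)$ does not follow from optimality: the high-skill agent's new optimum may sit at a smaller quality with a smaller allocation but higher utility. What you have actually shown is only that the high-skill agent's \emph{utility} weakly rises.

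The conclusion is nevertheless true for your particular lift, but the argument must use its structure. Since $\balloc'(\qua)\geq\sfrac{\qua}{\skilli[\lo]}$ for every $\qua\leq\skilli[\lo]$, a high-skill agent's utility at any such point is at least $\qua\left(\sfrac{1}{\skilli[\lo]}-\sfrac{1}{\skilli[\hi]}\right)$, which is strictly increasing in $\qua$. So suppose the new choice $\qua_1$ had allocation below the old allocation $\balloc(\qua^\dagger)$; monotonicity of $\balloc'\geq\balloc$ forces $\qua_1<\qua^\dagger$. If $\qua_1$ is a lifted point with $\balloc'(\qua_1)=\sfrac{\qua_1}{\skilli[\lo]}<1$, then any affordable $\qua\in(\qua_1,\min\{\qua^\dagger,\skilli[\lo]\}]$ --- a nonempty set --- yields strictly higher utility, contradicting optimality of $\qua_1$; if $\balloc'(\qua_1)=1$ the allocation has not fallen; and if $\qua_1$ is unlifted, then $\qua_1$ was already weakly better than $\qua^\dagger$ under $\balloc$ at a smaller quality, contradicting the tie-broken choice of $\qua^\dagger$. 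Replacing your one-line appeal with this case analysis closes the gap, and the rest of the proof goes through.
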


\begin{restatable}[Strong monotonicity]{lemma}{lemmaY}
\label{lem:lemmaY}
There exists an optimal allocation rule $\balloc^*$ that everywhere has a derivative lower bound set by $\sfrac{1}{\skilli[\lo]}$ (the slope of the low-skill agents' line). 
\end{restatable}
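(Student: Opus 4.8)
The plan is to start from an optimal allocation rule $\balloc$ that is weakly monotone (\Cref{lem:monoballoc}) and nowhere below the low-skill line (\Cref{lem:lemmaX}), and to ``steepen'' it. Write $g(\qua)=\balloc(\qua)-\sfrac{\qua}{\skilli[\lo]}$ for the utility a budget-unconstrained low-skill agent gets from exhibiting quality $\qua$; by \Cref{lem:lemmaX} we have $g\ge 0$ everywhere. Let $\tilde g(\qua)=\sup_{\qua'\le\qua}g(\qua')$ be its running maximum and define
\[
  \balloc^*(\qua)\;=\;\min\!\Bigl(1,\ \tfrac{\qua}{\skilli[\lo]}+\tilde g(\qua)\Bigr).
\]
Since $\tilde g$ is nonnegative and non-decreasing, $\balloc^*$ takes values in $[0,1]$, is weakly monotone, and dominates $\balloc$ pointwise (so it too stays above the low-skill line). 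Set $Q^*=\inf\{\qua:\balloc^*(\qua)=1\}$, which is finite because $\sfrac{\qua}{\skilli[\lo]}\ge 1$ already at $\qua=\skilli[\lo]$. On $[0,Q^*)$ we have $\balloc^*(\qua)=\sfrac{\qua}{\skilli[\lo]}+\tilde g(\qua)$, which rises at rate at least $\sfrac{1}{\skilli[\lo]}$ (with upward jumps wherever $\tilde g$ jumps), while on $[Q^*,\infty)$ we have $\balloc^*\equiv 1$; so $\balloc^*$ has the claimed derivative lower bound. The remaining task is to show $\balloc^*$ is still optimal, i.e.\ $\util_{\princ}(\gamecor,\balloc^*)\ge\util_{\princ}(\gamecor,\balloc)$.

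Because $g\ge 0$, every affordable quality is also rational for any agent (\Cref{fact:feasibility}): the low-skill surplus at $\qua$ is $g(\qua)\ge 0$, and the high-skill surplus is $\balloc(\qua)-\sfrac{\qua}{\skilli[\hi]}\ge g(\qua)\ge 0$ since $\skilli[\lo]<\skilli[\hi]$; and this persists under $\balloc^*\ge\balloc$. So under either rule each agent $(\skill,\budgi[k])$ maximizes allocation-minus-effort over the whole interval $[0,\skill\budgi[k]]$ of affordable qualities. I first claim every low-skill agent gets \emph{exactly the same} allocation under $\balloc^*$ as under $\balloc$. Under $\balloc$ it picks the smallest maximizer $\qua_0$ of $g$ on $[0,\skilli[\lo]\budgi[k]]$, attaining $\max_{[0,\skilli[\lo]\budgi[k]]}g$. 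Under $\balloc^*$ its objective is $\balloc^*(\qua)-\sfrac{\qua}{\skilli[\lo]}=\min\bigl(1-\sfrac{\qua}{\skilli[\lo]},\,\tilde g(\qua)\bigr)$, equal to the non-decreasing $\tilde g$ on $[0,Q^*]$ and to the decreasing $1-\sfrac{\qua}{\skilli[\lo]}$ beyond $Q^*$; moreover for $\qua>Q^*$ we have $g(\qua)\le 1-\sfrac{\qua}{\skilli[\lo]}<\tilde g(Q^*)$, so the maximum of $g$ on $[0,\skilli[\lo]\budgi[k]]$ occurs at some $\qua_0\le Q^*$ and the $\balloc^*$-objective attains its maximum at the very same $\qua_0$. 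Since $\qua_0\le Q^*$ the cap is inactive there, so $\balloc^*(\qua_0)=\sfrac{\qua_0}{\skilli[\lo]}+\tilde g(\qua_0)=\sfrac{\qua_0}{\skilli[\lo]}+g(\qua_0)=\balloc(\qua_0)$.

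I next claim every high-skill agent gets a \emph{weakly larger} allocation under $\balloc^*$. On $[0,Q^*]$ its $\balloc^*$-objective is $\qua\bigl(\sfrac{1}{\skilli[\lo]}-\sfrac{1}{\skilli[\hi]}\bigr)+\tilde g(\qua)$, strictly increasing since $\skilli[\lo]<\skilli[\hi]$, and on $[Q^*,\infty)$ it is the decreasing $1-\sfrac{\qua}{\skilli[\hi]}$; hence a high-skill agent $(\skilli[\hi],\budgi[k])$ chooses quality $\min(\skilli[\hi]\budgi[k],Q^*)$. If $\skilli[\hi]\budgi[k]\le Q^*$ it gets $\balloc^*(\skilli[\hi]\budgi[k])\ge\balloc(\skilli[\hi]\budgi[k])\ge\balloc(\qua_H)$ for its old choice $\qua_H\le\skilli[\hi]\budgi[k]$ (using $\balloc^*\ge\balloc$ and monotonicity of $\balloc$); and if $\skilli[\hi]\budgi[k]>Q^*$ it gets $\balloc^*(Q^*)=1\ge\balloc(\qua_H)$. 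Either way the allocation does not drop. Since $\skilli[\lo]<\thresh<\skilli[\hi]$, in \eqref{eqn:princproblemcorr} the low-skill types carry weight $\skilli[\lo]-\thresh<0$ on allocations that are unchanged and the high-skill types carry weight $\skilli[\hi]-\thresh>0$ on allocations that weakly increase; therefore $\util_{\princ}(\gamecor,\balloc^*)\ge\util_{\princ}(\gamecor,\balloc)$, so $\balloc^*$ is optimal.

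The step I expect to be the main obstacle is the first claim: verifying that the running-max steepening leaves every low-skill agent's \emph{realized} allocation (not merely its utility) unchanged, consistently with the tie-breaking rule of \Cref{def:tiebreak}. The delicate points are that the maximizer $\qua_0$ always lies at or below $Q^*$ --- so that $\balloc^*$ coincides with $\sfrac{\qua}{\skilli[\lo]}+\tilde g(\qua)$ there and the cap never interferes --- and that the smallest maximizer of the $\balloc^*$-objective equals the smallest maximizer $\qua_0$ of $g$. Handling the jumps of $\balloc$ (hence of $g$ and $\tilde g$) cleanly in these suprema is the only place real-analytic care is needed; since the type space is discrete, only finitely many qualities matter and this can be made fully elementary.
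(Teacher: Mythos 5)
Your proof is correct and uses what is essentially the intended reduction: replace $\balloc$ by the running-max envelope $\min\bigl(1,\ \sfrac{\qua}{\skilli[\lo]}+\tilde g(\qua)\bigr)$, verify that every low-skill agent's chosen menu option (hence allocation) is preserved under the tie-breaking rule of \Cref{def:tiebreak} because $\tilde g$ first attains each level exactly where $g$ does, while every high-skill agent's objective becomes strictly increasing up to the cap so its allocation weakly rises, and conclude from the signs of $\skilli[\lo]-\thresh$ and $\skilli[\hi]-\thresh$. The only blemishes are cosmetic: $g\ge 0$ holds only where the low-skill line lies below $1$ (i.e., $\qua\le\skilli[\lo]$), which is all your argument actually uses since $Q^*\le\skilli[\lo]$; and the supremum-attainment/open-endpoint issues you flag are real but are treated at the same level of informality as the paper's own handling of limiting menu choices, and in the cases where they bite the low-skill allocation can only decrease, which further helps the principal.
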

 
\begin{restatable}[Constant allocation slope]{lemma}{lemmaA}
\label{lem:lemmaA}
There exists an optimal allocation rule $\balloc^*$ that is a slanted-stair function, i.e., it everywhere has constant derivative equal to $\sfrac{1}{\skilli[\lo]}$ (the slope of a low-skill agents' line), allowing for arbitrary, discretely-indexed, positive, vertical jumps.
\end{restatable}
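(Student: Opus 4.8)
I would start from the optimal rule $\balloc^*$ furnished by \Cref{lem:lemmaY}: it is monotone, optimal, and everywhere has slope at least $\sfrac{1}{\skilli[\lo]}$, and being optimal it also satisfies \Cref{lem:lemmaX}, so $G(\qua) \vcentcolon= \balloc^*(\qua) - \sfrac{\qua}{\skilli[\lo]} \ge 0$ and is weakly increasing. The goal is to ``collapse the surplus slope into jumps'': produce a slanted-stair function $\tilde{\balloc}$ of constant slope $\sfrac{1}{\skilli[\lo]}$ that every high-skill agent perceives identically to $\balloc^*$ and that no low-skill agent can exploit to increase its allocation; optimality of $\balloc^*$ then forces $\tilde{\balloc}$ optimal too. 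Two structural observations drive everything. First, along any segment of slope $\sfrac{1}{\skilli[\lo]}$ the high-skill objective $\balloc(\qua) - \sfrac{\qua}{\skilli[\hi]}$ has positive slope $\sfrac{1}{\skilli[\lo]} - \sfrac{1}{\skilli[\hi]}$ and it only rises across an up-jump; so under \emph{any} slope-$\ge\sfrac{1}{\skilli[\lo]}$ rule (in particular under $\tilde{\balloc}$) every high-skill agent $\agenti[\hi,j]=(\skilli[\hi],\budgi[j])$ exhausts its budget and receives $\balloc(\skilli[\hi]\budgi[j])$. Second, in \eqref{eqn:princproblemcorr} low-skill types carry the negative coefficient $\skilli[\lo]-\thresh$ and high-skill types the positive coefficient $\skilli[\hi]-\thresh$, so the principal weakly gains whenever low-skill allocations drop while high-skill allocations hold fixed.

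\textbf{The construction.} Let $0=c_0<c_1<\cdots<c_m=\skilli[\hi]\budgi[n]$ enumerate $\{0\}\cup\{\skilli[\lo]\budgi[i]\}_i\cup\{\skilli[\hi]\budgi[j]\}_j$, the set of all agents' maximum qualities (so no agent exhibits quality above $c_m$). On each cell $(c_k,c_{k+1}]$: if $\balloc^*(c_{k+1})-\balloc^*(c_k)>\sfrac{(c_{k+1}-c_k)}{\skilli[\lo]}$ (an ``active'' cell) put $\tilde{\balloc}(\qua)\vcentcolon=\balloc^*(c_{k+1})-\sfrac{(c_{k+1}-\qua)}{\skilli[\lo]}$ there --- a slope-$\sfrac{1}{\skilli[\lo]}$ segment pinned to the right endpoint $(c_{k+1},\balloc^*(c_{k+1}))$, so that the leftover height $\balloc^*(c_{k+1})-\balloc^*(c_k)-\sfrac{(c_{k+1}-c_k)}{\skilli[\lo]}>0$ surfaces as a jump at $c_k$; otherwise $\balloc^*$ is already a slope-$\sfrac{1}{\skilli[\lo]}$ segment with no jump on the cell and we keep $\tilde{\balloc}\vcentcolon=\balloc^*$. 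Beyond $c_m$ extend $\tilde{\balloc}$ in any manner consistent with \Cref{def:slantedstair} (slope $\sfrac{1}{\skilli[\lo]}$ up to $1$, then flat); this is irrelevant since no agent reaches there. Using that $\balloc^*$ has slope at least $\sfrac{1}{\skilli[\lo]}$ one checks routinely that $\tilde{\balloc}$ is a valid slanted-stair function with values in $[0,1]$, that $\tilde{\balloc}(c_k)=\balloc^*(c_k)$ at every $c_k$, and --- crucially --- that $\tilde{\balloc}\ge\balloc^*$ on $[0,c_m]$ (pinning a slope-$\sfrac{1}{\skilli[\lo]}$ segment to the right endpoint keeps it above $\balloc^*$).

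\textbf{Why the principal does not lose, and the obstacle.} Each high-skill agent exhausts its budget under both rules and $\skilli[\hi]\budgi[j]$ is one of the $c_k$, so it obtains $\tilde{\balloc}(\skilli[\hi]\budgi[j])=\balloc^*(\skilli[\hi]\budgi[j])$ --- unchanged. For a low-skill agent $\agenti[\lo,i]$, its objective on $[0,\skilli[\lo]\budgi[i]]$ is the weakly increasing function $\balloc(\qua)-\sfrac{\qua}{\skilli[\lo]}$, and by the leftward tie-break (\Cref{def:tiebreak}) it selects the smallest quality attaining the maximal value on that interval; since $\skilli[\lo]\budgi[i]$ is a $c_k$, that maximal value equals $\balloc^*(\skilli[\lo]\budgi[i])-\budgi[i]$ under \emph{both} rules, and because $\tilde{\balloc}\ge\balloc^*$ the transformed objective attains it no later than the original --- so the agent stops at a weakly smaller quality, hence at a weakly smaller allocation. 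Combining with the sign observation, $\util_{\princ}(\gamecor,\tilde{\balloc})\ge\util_{\princ}(\gamecor,\balloc^*)$, and optimality of $\balloc^*$ forces equality, so the slanted-stair $\tilde{\balloc}$ is optimal. I expect this last step to be the main obstacle: it is essential that \emph{every} low-skill budget point $\skilli[\lo]\budgi[i]$ appear among the $c_k$, since ``pinning to the right endpoint'' raises the allocation at the left of a cell, and a low-skill agent whose budget point fell in a cell's interior could be tie-broken onto that larger allocation, hurting the principal. One must also be careful with the open/closed endpoint conventions of \Cref{def:slantedstair} and the $\epsilon\to 0$ limiting of the low-skill agents' choices (cf.\ the remark after \Cref{thm:2skill_nbudget}).
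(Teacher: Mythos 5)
Your overall strategy is sound and is the natural one given the lemma sequence: starting from the \Cref{lem:lemmaY} rule, flatten each cell between consecutive agent breakpoints to slope $\sfrac{1}{\skilli[\lo]}$, pin to the \emph{right} endpoint so that every $\skilli[\hi]\budgi[j]$ and $\skilli[\lo]\budgi[i]$ keeps its original allocation, and then argue that high-skill agents (whose utility is strictly increasing under any slope-$\ge\sfrac{1}{\skilli[\lo]}$ rule) are unaffected while low-skill agents, tie-broken to the left end of their now-flat indifference plateau, receive weakly less. The sign structure of \eqref{eqn:princproblemcorr} then closes the argument. Pinning right rather than left is the correct choice (pinning left would strictly lower $\tilde{\balloc}(\skilli[\hi]\budgi[j])$ and hurt the principal), and your insistence that every $\skilli[\lo]\budgi[i]$ be a breakpoint is exactly the right worry.

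There is one concrete place where the construction as written breaks: the cell in which $\balloc^*$ saturates at $1$. Let $\qua_1$ be the first quality with $\balloc^*(\qua_1)=1$ and suppose $\qua_1$ lies strictly inside a cell $(c_k,c_{k+1})$. If that cell is ``active,'' the pinned segment $\balloc^*(c_{k+1})-\sfrac{(c_{k+1}-\qua)}{\skilli[\lo]}$ lies \emph{strictly below} $\balloc^*$ on $[\qua_1,c_{k+1})$, so the claim $\tilde{\balloc}\ge\balloc^*$ fails there, and the comparison of low-skill maximizer locations (which uses that pointwise inequality) no longer goes through verbatim for low-skill agents whose budget point lies at or beyond $c_{k+1}$. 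If the cell is \emph{not} active, then ``keep $\tilde{\balloc}=\balloc^*$'' leaves a piece that rises at slope possibly exceeding $\sfrac{1}{\skilli[\lo]}$ and then flattens at $1$, so $\tilde{\balloc}$ is not a slanted-stair function at all. The fix is cheap but should be stated: adjoin $\qua_1$ to the breakpoint set and, on $(c_k,\qua_1]$, pin the slope-$\sfrac{1}{\skilli[\lo]}$ segment to $(\qua_1,1)$, with $\tilde{\balloc}\equiv 1$ thereafter. This restores both $\tilde{\balloc}\ge\balloc^*$ everywhere and the slanted-stair form (the definition permits the last slanted-step to terminate at allocation $1$ at a quality that is not an agent's maximum quality; matching $\qua^{\alloc=1}$ to some $\skilli[\hi]\budgi[j]$ is part of \Cref{thm:2skill_nbudget}, not of this lemma). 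A second, purely cosmetic point: \Cref{def:slantedstair} requires $f(0)=0$, so you should set $\tilde{\balloc}(0)=0$ and let any positive $\balloc^*(0)$ surface as a jump at $\quai[0]=0$; this is harmless since the objective at $0^+$ weakly dominates the objective at $0$ for every type. With these two repairs your argument is complete.
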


\begin{restatable}[A corner-case exclusion]{lemma}{lemmaZ}
\label{lem:lemmaZ}
There exists an optimal allocation rule $\balloc^*$ for which the optimal menu option of the agent type with smallest maximum-quality gives 0-allocation.  (This agent is $\agent= (\skilli[\lo],\budgi[1])$ with $\quamaxi[\lo,1] = \skilli[\lo]\cdot\budgi[1]$.)
\end{restatable}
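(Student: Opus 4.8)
The plan is to start from an optimal slanted-stair rule $\balloc^*$, which exists by \Cref{lem:lemmaA}, and show it can be chosen so that its first slanted-step coincides with the low-skill agents' line over the whole interval $[0,\skilli[\lo]\budgi[1]]$; the statement then follows at once. Indeed $\skilli[\lo]\budgi[1]$ is \emph{strictly} the smallest maximum-quality in the game (as $\skilli[\lo]\budgi[1]<\skilli[\lo]\budgi[k]$ for $k\ge 2$ and $\skilli[\lo]\budgi[1]<\skilli[\hi]\budgi[k]$ for all $k$), so the agent $\agent=(\skilli[\lo],\budgi[1])$ has every feasible option at quality at most $\skilli[\lo]\budgi[1]$; by \Cref{fact:feasgeom} a low-skill agent's utility at a menu option equals the vertical gap between the option and its zero-utility line, and this gap is constant along a step of slope $\sfrac{1}{\skilli[\lo]}$, so if $\balloc^*$ equals the low-skill line on $[0,\skilli[\lo]\budgi[1]]$ then $\agent$ is indifferent over its entire optimal set and, by the tie-breaking rule (\Cref{def:tiebreak}), picks the smallest option $(0,0)$, with allocation $0$.

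The work is to produce such an optimal rule. By \Cref{lem:lemmaX}, $\balloc^*$ is never below the low-skill line, so --- being a slanted-stair of slope $\sfrac{1}{\skilli[\lo]}$ with $\balloc^*(0)=0$ --- its first step has the form $\sfrac{\qua}{\skilli[\lo]}+c$ for a constant $c\ge 0$, and $\balloc^*$ fails to coincide with the low-skill line on $[0,\skilli[\lo]\budgi[1]]$ only if $c>0$ (a jump at $\qua=0$) or there is a jump reachable by $\agent$ strictly inside $(0,\skilli[\lo]\budgi[1]]$. The interior jumps are removed by a one-sided exchange argument: no type other than $\agent$ has maximum-quality in $[0,\skilli[\lo]\budgi[1]]$, and every high-skill type reaches every quality up to $\skilli[\hi]\budgi[1]>\skilli[\lo]\budgi[1]$; so if $\balloc^*$ has a jump reachable by $\agent$ strictly inside $(0,\skilli[\lo]\budgi[1]]$, take the rightmost such jump, at location $\ell$, and slide it a small distance $\delta>0$ to the left. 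This alters $\balloc^*$ only on $(\ell-\delta,\ell)$, where it is raised, hence not at all on $[\ell,\infty)\supseteq[\skilli[\lo]\budgi[1],\infty)$: every high-skill allocation --- read off at a quality $\ge\skilli[\hi]\budgi[1]\ge\ell$ --- is unchanged, while every low-skill allocation is unchanged except for the types whose optimal option was pinned to the jump at $\ell$, which includes $\agent$, and which each drop by $\sfrac{\delta}{\skilli[\lo]}$. Since low-skill types contribute $\skilli[\lo]-\thresh<0$ to $\util_{\princ}(\gamecor,\balloc^*)$, the perturbed rule is strictly better, contradicting optimality. Hence an optimal slanted-stair has no jump reachable by $\agent$ strictly inside $(0,\skilli[\lo]\budgi[1]]$, and a jump placed at $\skilli[\lo]\budgi[1]$ as the closed right endpoint of the first step already leaves $\agent$ at allocation $0$.

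There remains the corner case: an optimal slanted-stair carrying a jump at $\qua=0$, of some height $\sigma>0$, which lifts \emph{every} type's allocation by $\sigma$ and in particular forces $\agent$ to allocation $\sigma$. The plan is to relocate this mass forward, to a quality reached by every high-skill type --- for instance $\skilli[\hi]\budgi[1]$, or the largest $\skilli[\lo]\budgi[j]$ not exceeding it --- but not by $\agent$ (possible because $\skilli[\lo]\budgi[1]<\skilli[\hi]\budgi[1]$). After relocation every high-skill type still collects the $\sigma$-boost, while $\agent$, together with every low-skill type whose maximum-quality is at most the new jump location, drops to allocation $0$. One must then check the relocation does not decrease $\util_{\princ}(\gamecor,\cdot)$: only low-skill allocations change, the zeroed-out ones strictly help, and the low-skill types that now pick the post-jump option incur a bounded loss that must be shown dominated by that gain. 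This domination is the crux: were it to fail, then either the rightmost jump of $\balloc^*$ would admit a beneficial leftward slide as in the previous step, or the $\sigma$-mass would be strictly better placed still further forward, so that $\balloc^*$ could not have been optimal. Combined with a final application of the exchange argument to the relocated rule, this yields an optimal slanted-stair equal to the low-skill line on $[0,\skilli[\lo]\budgi[1]]$. I expect this domination step --- along with the bookkeeping around the open/closed conventions of \Cref{def:slantedstair} and the limiting behaviour of the tie-breaking rule near $\qua=0$, handled as elsewhere via an arbitrarily small perturbation (cf.\ footnote~\ref{foot:tiebreakex}) --- to be where essentially all the difficulty lies.
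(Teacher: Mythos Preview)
The paper defers the proof of \Cref{lem:lemmaZ} to its full version, so a direct comparison of approaches is not possible; what follows is an assessment of your argument on its own terms.

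Your slide argument for a jump strictly inside $(0,\skilli[\lo]\budgi[1])$ is essentially sound: moving the rightmost such jump left by $\delta$ keeps every high-skill allocation fixed (their optimal quality exceeds $\skilli[\lo]\budgi[1]>\ell$ and sits on the same line segment) while every low-skill type anchored to that jump has its tie-broken pick move left, losing $\delta/\skilli[\lo]$ in allocation. That is a strict improvement, so no optimal slanted-stair carries such a jump.

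The genuine gap is the jump-at-$0$ case, and your own proposal does not close it. When you relocate the height-$\sigma$ jump from $0$ to some $q_0\in[\skilli[\lo]\budgi[1],\skilli[\hi]\budgi[1])$, every low-skill type $(\skilli[\lo],\budgi[k])$ with $\skilli[\lo]\budgi[k]>q_0$ that previously tie-broke to $0^+$ (allocation $\approx\sigma$) now tie-breaks to $q_0^+$ and receives allocation $q_0/\skilli[\lo]+\sigma$, an \emph{increase} of $q_0/\skilli[\lo]$. Since these types contribute negatively to $\util_{\princ}$, this change can strictly hurt the principal and overwhelm the gain from zeroing out $(\skilli[\lo],\budgi[1])$. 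Concretely, with $\skilli[\lo]=1$, $\skilli[\hi]=3$, $\thresh=1.5$, $\budgi[1]=0.1$, $\budgi[2]=0.9$, $\Pr[\skill=\skilli[\lo]]=\Pr[\skill=\skilli[\hi]]=\tfrac12$, $\Pr[\budg=\budgi[1]\mid\skilli[\hi]]=1$, and $\Pr[\budg=\budgi[1]\mid\skilli[\lo]]$ tiny, the best single-jump rule places the jump at $0$ with height $0.7$ (utility $0.575$), while relocating to $\skilli[\lo]\budgi[1]$ gives at most $0.552$, and pushing $q_0$ further right past $\skilli[\hi]\budgi[1]$ strips allocation from $(\skilli[\hi],\budgi[1])$. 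Your fallback --- ``if relocation fails, either the rightmost jump admits a leftward slide or the $\sigma$-mass is better placed further forward'' --- does not follow: the leftward slide was already exhausted in reaching $q=0$, and moving forward is exactly what was just shown to be harmful. So the ``domination step'' you flag as the crux is not merely bookkeeping; as written it needs a different idea (for instance, coupling the relocation with a simultaneous change in jump \emph{height}, or an argument that ties the optimal $\sigma$ at $q=0$ to the existence of an equally good rule elsewhere).
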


\begin{restatable}[Sufficiency of at-most one jump]{lemma}{lemmaB}
\label{lem:lemmaB}
There exists an optimal allocation rule $\balloc^*$ that is a slanted-stair function with at most one jump; furthermore, if there is a jump in a given $\balloc^*$, then its allocation in the region of the first slanted-step must be equal to the low-skilled agents' line.
\end{restatable}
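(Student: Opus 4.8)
\emph{Proof plan.} By \Cref{lem:lemmaA} and \Cref{lem:lemmaZ} I would start from an optimal rule $\balloc^*$ that is a slanted-stair function (\Cref{def:slantedstair}) in which the agent $(s_L,b_1)$ of smallest maximum quality gets allocation $0$. The first step is to extract what this forces. A slanted-stair is the low-skill agents' line plus a discrete sequence of positive jumps; if a jump sat at quality $0$, or at any quality below $s_L b_1$, then $(s_L,b_1)$ could reach the left endpoint of the elevated step just after it and would receive positive allocation. Hence there is no jump at $0$ and the first jump is at some quality $\rho_1\ge s_L b_1$, so the first slanted-step already coincides with the low-skill agents' zero-utility line. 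Consequently the ``furthermore'' clause is automatic once we prove at most one jump, and the whole problem reduces to bounding the number of jumps.

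Next I would fix notation and read off agent behaviour. Let the jumps be at $\rho_1<\dots<\rho_J$ with heights $\eta_1,\dots,\eta_J>0$, let $o_i=\sum_{\ell\le i}\eta_\ell$ with $o_0=0$, so that on the step region $R_i=(\rho_i,\rho_{i+1}]$ the rule equals $q/s_L+o_i$, and let $Q^*=s_L(1-o_J)$ be where it reaches $1$. Because every step has slope \emph{exactly} $1/s_L$ (the slope of the low-skill line), a low-skill agent is indifferent all along the highest-offset step it can reach, so by the tie-breaking rule (\Cref{def:tiebreak}) the agent $(s_L,b_k)$ picks the left endpoint of $R_{i^*(k)}$ with $i^*(k)=\max\{i:\rho_i< s_L b_k\}$, getting allocation $\rho_{i^*(k)}/s_L+o_{i^*(k)}$. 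A high-skill agent has a strictly flatter zero-utility line, so it strictly prefers the right endpoint inside a step and prefers to climb to a higher step whenever it can reach that step's interior; thus $(s_H,b_j)$ stops at quality $\min(\rho_{i^\dagger(j)+1},\,s_H b_j)$ on $R_{i^\dagger(j)}$ with $i^\dagger(j)=\max\{i:\rho_i< s_H b_j\}$ (reading $\rho_{J+1}=Q^*$), getting allocation $\min(\rho_{i^\dagger(j)+1},\,s_H b_j)/s_L+o_{i^\dagger(j)}$. Since $\util_\princ(\gamecor,\balloc)=\sum_{\text{types}}\Pr[\text{type}]\,(s-\tau)\,x_{\text{type}}$ is linear in these allocations (and uses no independence), this yields a closed form in the parameters $\rho_i,\eta_i$.

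The core step is an exchange argument on two consecutive jumps. Suppose $J\ge 2$, pick any $1\le i\le J-1$, and consider the family of slanted-stairs obtained by replacing $(\eta_i,\eta_{i+1})$ with $(\eta_i+t,\,\eta_{i+1}-t)$ for $t\in[-\eta_i,\eta_{i+1}]$. This changes only the offset of the single step $R_i$, from $o_i$ to $o_i+t$, and leaves every other $\rho$, every other $o$, and $Q^*$ fixed; at $t=-\eta_i$ the jump at $\rho_i$ vanishes and at $t=\eta_{i+1}$ the jump at $\rho_{i+1}$ vanishes. Let $U(t)$ be the resulting principal utility. The only types whose allocation depends on $t$ are the low-skill types with $i^*(k)=i$ and the high-skill types with $i^\dagger(j)=i$, and for each of these the allocation equals $c+o_i+t$ for a $t$-independent constant $c$; since the index sets are determined by the $\rho_i$'s and do not move as $t$ ranges over the open interval, $U$ is affine there. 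At the two endpoints, where $R_i$ merges with a neighbour, I would check directly that every affected high-skill type ends up at the same quality--allocation pair (no downward discontinuity), while every low-skill type that is pushed onto a lower step strictly loses allocation; since $s_L-\tau<0$, that loss can only raise $U$. Hence whichever direction the affine $U$ slopes, the corresponding endpoint gives $U\ge U(0)$ and kills a jump, while staying a slanted-stair and keeping $\rho_1\ge s_L b_1$. Iterating drives $J$ down to at most $1$, proving the lemma.

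\emph{Main obstacle.} The delicate part is the endpoint bookkeeping in the exchange argument: checking, across the handful of cases (the merged neighbour of $R_i$ being the origin step, an interior step, or the capped final step; and each relevant agent being budget-limited before and/or after the merge), that high-skill types never lose allocation and that the low-skill types that change steps always do. This is precisely where the tie-breaking rule and the exact slope $1/s_L$ of every step matter, and where \Cref{lem:lemmaA} (slanted-stair form) and \Cref{lem:lemmaZ} (forcing $o_0=0$ and $\rho_1\ge s_L b_1$) are used.
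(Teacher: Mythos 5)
Your proof is correct: starting from an optimal slanted-stair rule satisfying \Cref{lem:lemmaA} and \Cref{lem:lemmaZ}, the linear exchange $(\eta_i,\eta_{i+1})\mapsto(\eta_i+t,\eta_{i+1}-t)$ makes the principal's objective affine in $t$ on the open interval (since only the offset of $R_i$ moves and the agents' step assignments are pinned down by the fixed $\rho$'s and the tie-breaking rule), and at either endpoint high-skill allocations are continuous while any low-skill allocation that changes can only drop, which helps the objective because $s_L-\tau<0$; iterating removes jumps until at most one remains, and \Cref{lem:lemmaZ} forces the first step onto the low-skill line. This is essentially the argument the paper's lemma sequence sets up (its own proof is deferred to the full version), and your flagged endpoint bookkeeping resolves exactly as you describe.
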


\noindent The proofs for each lemma in this sequence appear in the full version of the paper.

{\section{Mechanisms for Agents with Subsidized Effort}
\label{s:subsidy}

This section considers agent subsidies directly in {\em effort-space}.  A budget on effort implies a time-constraint.  Effort-subsidies are an intervention that increases the agent's effort-budget by freeing up an agent's time spent on other obligatory activities.  Technically, we consider subsidies as uniform, additive increases to agents' budget constraints.  These subsidies are offered {\em unconditionally}: agents may spend the time on an outside-option (leisure) activity; or they may invest the time in effort, which they experience as {\em costly} (i.e., as the opportunity cost of the forfeited leisure time).  E.g., subsidies may be provided by performing time-costly tasks for agents' benefit (like uniformly offering free postal pickup/delivery) -- freed from the burden of the task, agents enjoy leisure or spend their time exerting effort in our model.

The main goal of this section is to solve for the characterization of the optimal mechanism of the (modified) admission game which has expanded setting parameters that make it possible to consider a combined-question of screening and {\em design of unconditional subsidies}.  \Cref{cor:2skill_nbudget-subsy} states that its characterization is the same as \Cref{thm:2skill_nbudget}.  We also show that this subsidies setting can only help the principal's objective (in \Cref{prop:possubsyimprovesprinc}).

\subsection{The Setting with Subsidies}
\label{s:subsidysetting}

We add the following elements to the correlated setting of \Cref{thm:2skill_nbudget} (based on \Cref{s:setting}).

The mechanism designer may a priori offer to the agent $\agent = (\skill,\budg)$ an {\em effort-budget subsidy} $\subsy$ from a non-negative range, i.e., the subsidy is $\subsy\in[\subsylo,\subsyhi]$.  The agent accepts the whole subsidy unconditionally and the agent's new budget is $\budg+\subsy$.

It is not possible to restrict access to the subsidy to sub-classes of agent-types: not to high-skill agents and not to disadvantaged agents. 
 The constant subsidy amount is necessarily available to each type indiscriminately because the realizations of an agent's skill/budget types are unknown at the time of the offer, i.e., at the time of subsidized-mechanism design.  While we can not use uniform subsidies to discriminate directly, we will be able to improve the {\em principal's objective} using the following  observation: given an optimal single-jump, slanted-stair allocation (as characterized by \Cref{thm:2skill_nbudget}), note that the budget constraint binds for {\em all} high-skill agents receiving allocation less than 1 and they would benefit from relaxing the budget constraint; but for almost all low-skill agents, the budget constraint is not binding because their utility is constant on each slanted-step.  This first-order-condition analysis suggests that high-skill agents will voluntarily convert unconditional subsidies to effort and increased allocation, whereas low-skill agents will not.

The subsidy (to increase effort-budget) is exogenous as if enacted and paid by an unrelated third party at no cost to the mechanism.  E.g., in an admission problem, the school may be a city's unique, public, magnet high school.  The subsidy may be paid uniformly to each eligible applicant by a citywide scholarship program which is separate from the school's admissions office but which has the money to provide the subsidy (up to $\subsyhi$ per student) and {\em must support} a citywide goal of maximizing utility from specifically the magnet school's admissions policies.  In this case, the magnet school admissions office (as our model's principal) optimizes $\subsy\in[\subsylo,\subsyhi]$ and the scholarship program must approve it.

For this \Cref{s:subsidy}, the updated correlated admission game with subsidies is given by $\gamedy=(\sdist,\bcorset,\thresh,\subsylo,\subsyhi)$.  For a given subsidy $\subsy>0$, agent $\agent=(\skill,\budg)$ has maximum quality $\quamdy=\skill\cdot(\budg+\subsy)$, which is larger than the maximum quality without the subsidy ($\quamax=\skill\cdot\budg$).  The agent's updated optimal utility function $\utilv^*_{\agent}$ and updated optimal allocation rule $\allocdy$ in skill space -- subject to allocation rule $\balloc$ -- are:
\begin{align}
\label{eqn:agentoptvsrulesubsy}
    \utilv^*_{\agent}(\balloc, \skill, \subsy) &\vcentcolon= \max_{\eff\in[0,\budg+\subsy]} \balloc(\skill\cdot\eff) - \eff = \max_{\qua\in[0,\quamdy]} \balloc(\qua) - \sfrac{\qua}{\skill}\\
\label{eqn:agentoptallocsubsy}
    \alloc = \allocdy(\balloc,\skill,\subsy) &\vcentcolon= \balloc(\skill\cdot \left[\argmax_{e\in[0,\budg+\subsy]} \balloc(\skill\cdot\eff)-\eff\right])
\end{align}

\noindent In equation~\eqref{eqn:agentoptallocsubsy}, note that because the subsidy is unconditional, the agent pays the full cost of effort $\eff$, including the (opportunity) cost of effort above the original budget $\budg$.

For a given agent $\agent$, the principal's utility from admitting $\agent$ \underline{remains the function} $\util_{\princ}(\agent~|~\text{admitted})=\skill-\thresh$.  Thus, the principal's updated mechanism design problem is to maximize $\utilv_{\princ}(\gamedy, \balloc, \subsy)$ which is the expected utility from an admitted agent's skill versus the principal's threshold $\thresh$, weighted by allocation probability according to $\allocdy$ (which accounts for the subsidy):

\begin{equation}
\label{eqn:princproblemsubsy}
    \max_{\balloc,~\subsy\in[\subsylo,\subsyhi]} \utilv_{\princ}(\gamedy, \balloc, \subsy) \vcentcolon= \max_{\balloc}  \expecta_{\agent\sim(\sdist,~\bcorset)}\left[ \allocdy(\balloc,\skill,\subsy)\cdot\left(\skill-\thresh \right)\right]
\end{equation}

\subsection{Results with Subsidies}
\label{s:subsidyresults}

The main result of this section is: the optimal mechanism when agents have access to unconditional subsidies has the same characterization as the original game, as described in \Cref{thm:2skill_nbudget}.  Moreover, we are immediately ready to state and prove it as a corollary:

\begin{corollary}
\label{cor:2skill_nbudget-subsy}
Given a correlated admission game with subsidies $\gamedy=(\sdist = \Delta(\{\skilli[\lo],\skilli[\hi]\}),\\\bcorset=\{\bdist_{\skilli[\lo]}=\Delta_{\skilli[\lo]}(\{\budgi[1],\ldots,\budgi[n]\}),\bdist_{\skilli[\hi]}=\Delta_{\skilli[\hi]}(\{\budgi[1],\ldots,\budgi[n]\})\},\thresh,\subsylo,\subsyhi)$ with $0<\skilli[\lo]<\thresh<\skilli[\hi]$ and $0<\budgi[1] < \budgi[2] < \ldots < \budgi[n]$.  The structure of the optimal mechanism has the same characterization as the standard game, as given in \Cref{thm:2skill_nbudget}.
\end{corollary}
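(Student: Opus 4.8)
\textbf{The plan} is to observe that fixing the subsidy level turns the problem into an ordinary correlated admission game to which \Cref{thm:2skill_nbudget} applies verbatim, and then to dispatch the outer optimization over the subsidy by a compactness/finiteness argument.

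First I would show that, for each fixed $\subsy \in [\subsylo,\subsyhi]$, the subsidized game with the subsidy pinned to $\subsy$ is \emph{literally} a correlated admission game of the kind covered by \Cref{thm:2skill_nbudget}. From \eqref{eqn:agentoptvsrulesubsy}--\eqref{eqn:agentoptallocsubsy} the agent's induced utility and allocation under $(\balloc,\subsy)$ are exactly those of an agent with budget $\budg+\subsy$ and no subsidy facing $\balloc$ (the agent pays the full cost of effort throughout $[0,\budg+\subsy]$), and the principal's objective \eqref{eqn:princproblemsubsy} with $\subsy$ held fixed is exactly \eqref{eqn:princproblemcorr} for the correlated game $\gamecor_\subsy \vcentcolon= (\sdist,\bcorset_\subsy,\thresh)$ whose conditional budget distributions have support $\{\budgi[1]+\subsy,\ldots,\budgi[n]+\subsy\}$ with the original probabilities. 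This game meets the hypotheses of \Cref{thm:2skill_nbudget}: the skills are unchanged with $\skilli[\lo]<\thresh<\skilli[\hi]$, and $0<\budgi[1]+\subsy<\cdots<\budgi[n]+\subsy$ since $\subsy\ge\subsylo\ge 0$ and $\budgi[1]>0$. (If one insists on budgets in $[0,1]$, note any rational agent exerts effort at most its allocation, which is at most $1$, so replacing each $\budgi+\subsy$ by $\min\{\budgi+\subsy,1\}$ changes no agent's set of feasible menu options.) Applying \Cref{thm:2skill_nbudget} to $\gamecor_\subsy$ yields an optimal $\balloc^*_\subsy$ that is a slanted-stair function with slope $\sfrac{1}{\skilli[\lo]}$, at most one jump with its first slanted-step on the low-skill agents' line, jump at $\quai[0]=0$ or at $\skilli[\lo](\budgi+\subsy)$ for some $i$, and reaching allocation $1$ at $\skilli[\hi](\budgi[j]+\subsy)$ for some $j$. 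Since the maximum quality of agent $(\skilli[\lo],\budgi)$ in the subsidized game is precisely $\skilli[\lo](\budgi+\subsy)$ and similarly for high-skill agents, this \emph{is} the characterization of \Cref{thm:2skill_nbudget} stated in the subsidized environment.

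Next I would handle the outer maximization over $\subsy\in[\subsylo,\subsyhi]$ in \eqref{eqn:princproblemsubsy}: it suffices to show the joint supremum over $(\balloc,\subsy)$ is attained, since by the previous step any maximizer has the claimed structure. Enumerate the $O(n^2)$ combinatorial types of an optimal mechanism --- the index $i$ of the low-skill budget at which the jump occurs (or the degenerate cases ``no jump'' and ``jump at $\qua=0$'') and the index $j$ of the high-skill budget at which allocation first equals $1$. For a fixed type, the constraints of \Cref{thm:2skill_nbudget} determine the whole slanted-stair $\balloc$ as a function of $\subsy$ (the slope is $\sfrac{1}{\skilli[\lo]}$, the jump abscissa and the allocation-$1$ abscissa are affine in $\subsy$, and --- as noted after \Cref{thm:2skill_nbudget} --- these pin down the jump height), so the principal's value for that type is a continuous function of $\subsy$ on the compact interval $[\subsylo,\subsyhi]$ and attains a maximum there; taking the best over the finitely many types produces a global optimizer $(\balloc^*,\subsy^*)$ with exactly the structure of \Cref{thm:2skill_nbudget}. (Alternatively, one argues that the value function $\subsy\mapsto\max_{\balloc}\utilv_{\princ}(\gamedy,\balloc,\subsy)$ is continuous on the compact $[\subsylo,\subsyhi]$ and hence attains its max; this route also meshes with \Cref{prop:possubsyimprovesprinc}.)

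The genuinely delicate step is the last one --- making the optimization over the continuum of subsidy levels rigorous and confirming that nothing in the proof of \Cref{thm:2skill_nbudget} covertly used $\budgi\le 1$ rather than just $0<\budgi[1]<\cdots<\budgi[n]$ (since $\budgi+\subsy$ may exceed $1$), which I would resolve with the capping observation above. Everything else is bookkeeping: once \Cref{thm:2skill_nbudget} is in hand, fixing $\subsy$ and relabeling budgets as $\budgi+\subsy$ is all that is needed, so the characterization --- slanted-stair, slope $\sfrac{1}{\skilli[\lo]}$, at most one jump, first step on the low-skill line, jump at $0$ or at a low-skill agent's maximum quality, allocation $1$ reached at a high-skill agent's maximum quality --- transfers unchanged.
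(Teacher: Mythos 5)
Your proposal is correct and follows essentially the same route as the paper: fix the subsidy level, observe that the subsidized game with a pinned subsidy is literally a correlated admission game with budgets shifted to $\budgi+\subsy$ (this is \Cref{lem:subsidystatements}(2)), and invoke \Cref{thm:2skill_nbudget}. The only difference is that you explicitly justify the existence of an optimal $\subsy^*$ via the finite-type/compactness argument, whereas the paper simply asserts ``consider an optimal assignment $\subsy^*$'' and acknowledges it relies on theoretical existence without identifying it --- your version is slightly more careful on that point.
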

\begin{proof}
As part of identifying the optimal mechanism -- according to equation~\eqref{eqn:princproblemsubsy} -- the designer selects an optimal assignment to the subsidy variable $\subsy\in[\subsylo,\subsyhi]$.

Consider an optimal assignment $\subsy^*$ (any element of the $\argmax$ is fine).  The optimal mechanism associated with $\subsy^*$ must be the same as the optimal mechanism for an alternative game $\gamedy'$ which sets parameters $\sdist, \bcorset, \thresh$ to be the same as $\gamedy$, but which assigns the endpoints of allowable subsidies to both be $\subsy^*$, i.e., $\gamedy'$ has $\subsylo=\subsyhi=\subsy^*$.

This corollary then follows directly from \Cref{lem:subsidystatements}(2) below.
\end{proof}

\noindent While \Cref{cor:2skill_nbudget-subsy} is sufficient to give us characterization, it does not give us an algorithm to find the optimal mechanism because it uses theoretical existence of the optimal subsidy $\subsy^*$ without identifying it.

The following observations regarding correlated admission games with subsidies are straightforward.  Omitted proofs in this section appear in the full version of the paper.
\begin{restatable}{lemma}{subsystatements}
\label{lem:subsidystatements}
A correlated admission game with subsidies is $\gamedy=(\sdist = \Delta(\{\skilli[\lo],\skilli[\hi]\}), \bcorset=\{\bdist_{\skilli[\lo]}=\Delta_{\skilli[\lo]}(\{\budgi[1],\ldots,\budgi[n]\}),\bdist_{\skilli[\hi]}=\Delta_{\skilli[\hi]}(\{\budgi[1],\ldots,\budgi[n]\})\},\thresh,\subsylo,\subsyhi)$. 
 Consider arbitrary $\gamedy$, i.e., consider its inputs as variables.
\begin{enumerate}
    \item Without loss of generality, we may reduce $\gamedy$ to a correlated game $\gamedy'$ which has $\subsylo'=0$.
    \item A game $\gamedy$ fixing an exact subsidy by setting $\subsylo=\subsyhi$ is equivalently described by a game $\gamecor_{\gamedy}$ and thus is characterized by the statement of \Cref{thm:2skill_nbudget}.
    \item If $\subsylo=0$, then expanding the original correlated admission game $\gamecor = (\sdist, \bcorset, \thresh)$ to consider admissions with subsidies -- formulated as the updated game $\gamedy$ -- can only increase the utility of the principal.
    \item Given $\sdist, \bcorset,\thresh$, 
    there exists a minimal subsidy upper bound $\subsyhi^m$ such that for all $\subsyhi\geq \subsyhi^m$, the optimal mechanism achieves the offline optimal performance (see \Cref{def:offopt}), i.e., it is able to perfectly discriminate between high-skill and low-skill agent types regardless of their budgets.
\end{enumerate}
\end{restatable}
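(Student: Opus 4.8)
The plan is to settle parts (1)--(3) by structural equivalences and to prove part (4) by an explicit construction together with an upward-closedness argument.

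For parts (1)--(3) the one idea needed is that the subsidy is unconditional and the agent still pays the full opportunity cost of every unit of effort, so by \eqref{eqn:agentoptvsrulesubsy} an agent's behavior depends on its base budget $\budgi$ and the subsidy $\subsy$ only through the sum $\budgi+\subsy$, and neither that cap nor the per-unit effort cost can tell ``budget'' apart from ``subsidy.'' For (1), given $\gamedy$ with $\subsylo>0$ I would form $\gamedy'$ with the same $\sdist$ and $\thresh$, budget types $\budgi+\subsylo$ (keeping conditional probabilities), and subsidy range $[0,\subsyhi-\subsylo]$; the shift $\subsy\mapsto\subsy-\subsylo$ is a bijection of the two subsidy ranges preserving every effective cap and hence the objective \eqref{eqn:princproblemsubsy}. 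For (2), when $\subsylo=\subsyhi=\subsy$ the same observation identifies $\gamedy$ with the correlated game $\gamecor_{\gamedy}$ that has the same $\sdist,\thresh$ and budget types $\budgi[1]+\subsy<\dots<\budgi[n]+\subsy$ (still positive, strictly increasing), so \Cref{thm:2skill_nbudget} applies verbatim. For (3), when $\subsylo=0$ the feasible choice $\subsy=0$ collapses $\gamedy$ to $\gamecor$, so the optimum of \eqref{eqn:princproblemsubsy} is a maximum over a set of options that contains those of the original problem and can only be larger.

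For part (4) I would first use (1) to assume $\subsylo=0$, and then exhibit, for all sufficiently large $\subsyhi$, a (subsidy, mechanism) pair that admits every high-skill agent with probability $1$ and rejects every low-skill agent; since $\skilli[\lo]<\thresh<\skilli[\hi]$ this is exactly offline-optimal performance (\Cref{def:offopt}), and as the offline optimum upper-bounds every mechanism it is in particular what the optimal mechanism attains. The construction is the threshold mechanism $\balloc^{\quath}$ of \Cref{def:threshmech} with $\quath=\skilli[\lo]$, paired with any subsidy $\subsy$ large enough that $\budgi[1]+\subsy\geq\sfrac{\skilli[\lo]}{\skilli[\hi]}$. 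Two checks finish it. First, a low-skill agent of any budget sees only menu options with $\qua<\skilli[\lo]$ (allocation $0$, utility $-\sfrac{\qua}{\skilli[\lo]}\le 0$) and, where affordable, options with $\qua\ge\skilli[\lo]$ (allocation $1$, utility $1-\sfrac{\qua}{\skilli[\lo]}\le 0$), so its best utility $0$ is attained at $(0,0)$ and by \Cref{def:tiebreak} it selects $(0,0)$ and is rejected. Second, a high-skill agent of any budget has maximum quality $\skilli[\hi](\budgi[k]+\subsy)\ge\skilli[\hi](\budgi[1]+\subsy)\ge\skilli[\lo]$, so the option $(\skilli[\lo],1)$ is affordable and yields utility $1-\sfrac{\skilli[\lo]}{\skilli[\hi]}>0$, strictly better than every option with $\qua<\skilli[\lo]$ (allocation $0$, nonpositive utility) and every option with $\qua>\skilli[\lo]$ (allocation $1$, utility $1-\sfrac{\qua}{\skilli[\hi]}$, strictly smaller); so it selects $\qua=\skilli[\lo]$ and is admitted. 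Thus offline-optimal performance is achievable whenever $\subsyhi\ge\sfrac{\skilli[\lo]}{\skilli[\hi]}-\budgi[1]$.

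To turn this into a \emph{minimal} bound, let $W$ be the set of caps $\subsyhi$ under which offline-optimal performance is achievable. The previous paragraph gives $W\neq\emptyset$, and $W$ is upward closed because a witnessing pair $(\balloc,\subsy)$ for some $\subsyhi$ uses a subsidy $\subsy\le\subsyhi$ and therefore also witnesses every larger cap; hence $\subsyhi^m\vcentcolon=\inf W$ is well defined and $W$ contains all $\subsyhi>\subsyhi^m$. The step I expect to be the main obstacle is showing the infimum is actually attained, i.e.\ $\subsyhi^m\in W$. The bound $\sfrac{\skilli[\lo]}{\skilli[\hi]}-\budgi[1]$ above is only an upper bound on $\subsyhi^m$, since an optimal slanted-stair rule can instead hide its jump inside a quality interval $(\skilli[\lo](\budgi[n]+\subsy),\,\skilli[\hi](\budgi[1]+\subsy)]$ that no low-skill agent can reach, so the true threshold also depends on the budget spread $\budgi[n]-\budgi[1]$ relative to $\skilli[\hi]/\skilli[\lo]$. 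I would pin down $\subsyhi^m$ by characterizing the two ways offline-optimal screening can be realized -- namely $\skilli[\hi](\budgi[1]+\subsy)\ge\skilli[\lo]$ (the threshold mechanism) or $\skilli[\hi](\budgi[1]+\subsy)>\skilli[\lo](\budgi[n]+\subsy)$ (the hidden jump) -- and then establishing attainment from the fact that the first (non-strict) of these inequalities cuts out a closed set of separating subsidies, flagging the degenerate boundary $\skilli[\hi]\budgi[1]=\skilli[\lo]\budgi[n]$ as the one case needing extra care.
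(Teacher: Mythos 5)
The paper defers the proof of this lemma to its full version, so there is no in-paper argument to compare against; judged on its own terms, your treatment of parts (1)--(3) is correct and is surely the intended one: the agent's problem \eqref{eqn:agentoptvsrulesubsy} depends on $(\budg,\subsy)$ only through the sum $\budg+\subsy$, so shifting every budget by $\subsylo$ absorbs the mandatory part of the subsidy, a fixed subsidy $\subsylo=\subsyhi=\subsy$ yields a correlated game with budgets $\budgi[1]+\subsy<\dots<\budgi[n]+\subsy$ to which \Cref{thm:2skill_nbudget} applies verbatim, and free disposal of the subsidy variable at $\subsy=0$ gives part (3). Your achievability construction for part (4) --- the threshold mechanism at $\quath=\skilli[\lo]$ paired with any $\subsy\geq\sfrac{\skilli[\lo]}{\skilli[\hi]}-\budgi[1]$, using \Cref{def:tiebreak} to resolve the low-skill agents' zero-utility tie in the principal's favor --- is also correct, as is the upward-closedness of the set $W$ of sufficient caps.

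The gap is in the attainment step of part (4). You have correctly located the obstacle, but the proposed fix does not close it. Let $A$ be the set of subsidies at which perfect separation is achievable; your own two-route characterization shows $A$ is the union of the closed ray $\{\subsy:\skilli[\hi](\budgi[1]+\subsy)\geq\skilli[\lo]\}$ with a set on which the \emph{strict} inequality $\skilli[\hi](\budgi[1]+\subsy)>\skilli[\lo](\budgi[n]+\subsy)$ must hold. Strictness here is forced for \emph{every} allocation rule, not just thresholds: any menu option with allocation $1$ that the $(\skilli[\hi],\budgi[1])$ type can afford with strictly positive utility is, at equality of the two maximum qualities, exactly affordable for the $(\skilli[\lo],\budgi[n])$ type, who then earns utility $1-(\budgi[n]+\subsy)>0$ and is admitted. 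When this open route binds strictly below the closed one --- e.g.\ $\skilli[\lo]=1$, $\skilli[\hi]=2$, $\budgi[1]=0.1$, $\budgi[n]=0.3$ gives $A=(0.1,\infty)$ while the closed ray only begins at $\subsy=0.4$ --- the infimum of $A$, and hence of $W$, is not attained, so no \emph{minimal} $\subsyhi^m$ exists. The closedness of the first case therefore cannot deliver attainment in general, and the issue is not confined to the degenerate boundary $\skilli[\hi]\budgi[1]=\skilli[\lo]\budgi[n]$ that you flag: either the statement must be read with $\subsyhi^m=\inf W$ and ``for all $\subsyhi>\subsyhi^m$,'' or an additional argument is needed for $\inf W\in W$, which the example above shows is unavailable as stated.
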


\noindent \Cref{lem:subsidystatements}(3) is fairly obvious: if $\subsylo=0$, then the principal has the option of ``free disposal" of the subsidy-variable and can do no worse than the game without subsidies.  The more interesting statement is that the principal's objective can only improve for $\subsylo>0$ generally:

\begin{restatable}{proposition}{positivesubsyimprovesprinc}
\label{prop:possubsyimprovesprinc}
    For arbitrary $\subsylo\geq0$, expanding the original correlated admission game $\gamecor = (\sdist, \bcorset, \thresh)$ to consider admissions with subsidies can only increase the utility of the principal.
\end{restatable}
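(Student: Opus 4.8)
The plan is to reduce the claim to a monotonicity property of the designer's optimal value under a uniform increase of every agent's budget, and then to establish that property by transporting an optimal mechanism. Write $V(\cdot)$ for the principal's optimal value in a given admission game. In $\gamedy=(\sdist,\bcorset,\thresh,\subsylo,\subsyhi)$ the designer is free to pick any $\subsy\in[\subsylo,\subsyhi]$, and fixing the subsidy at an exact value $\subsy$ turns $\gamedy$ into the ordinary correlated game $\gamecor_{\subsy}$ obtained from $\gamecor$ by replacing each budget $\budgi$ with $\budgi+\subsy$ (this is \Cref{lem:subsidystatements}(2)). Hence $V(\gamedy)=\max_{\subsy\in[\subsylo,\subsyhi]}V(\gamecor_{\subsy})\ge V(\gamecor_{\subsylo})$, so it suffices to prove the budget-monotonicity statement $V(\gamecor_{\subsy})\ge V(\gamecor)$ for every $\subsy\ge 0$.

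For budget-monotonicity I would start from an optimal slanted-stair rule $\balloc^{*}$ for $\gamecor$ (\Cref{thm:2skill_nbudget}) and define, for $\gamecor_{\subsy}$, the \emph{quality-shifted} rule $\balloc_{\subsy}(\qua)=\balloc^{*}(\qua-\skilli[\lo]\subsy)$ for $\qua\ge\skilli[\lo]\subsy$ and $\balloc_{\subsy}(\qua)=0$ otherwise. The design idea is an asymmetry in who benefits from the extra budget: a low-skill agent must burn exactly its $\subsy$ extra units of effort just to reach the non-trivial part of the shifted menu, so it effectively faces the original $\gamecor$-menu, merely translated; a high-skill agent reaches the same content after burning only $\skilli[\lo]\subsy/\skilli[\hi]<\subsy$ units, so it stands strictly farther along than in $\gamecor$. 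The proof then reduces to two per-type comparisons between the outcome in $\gamecor_{\subsy}$ under $\balloc_{\subsy}$ and the outcome in $\gamecor$ under $\balloc^{*}$: (a) every low-skill type's equilibrium allocation weakly \emph{decreases} --- it takes the translated copy of the option it chose in $\gamecor$ (same allocation, using the tie-break of \Cref{def:tiebreak}), or, once $\subsy$ exceeds the single jump height of $\balloc^{*}$, collapses to $(0,0)$; and (b) every high-skill type's equilibrium allocation weakly \emph{increases}, using that $\balloc^{*}$ is monotone (\Cref{lem:monoballoc}), that $\balloc^{*}$ never dips below the low-skill line (\Cref{lem:lemmaX}) so a high-skill agent always has a strictly-positive-utility option, and that the high-skill optimal-utility function induced by a slanted stair is single-peaked in quality space (increasing up to the quality $\qua^{\alloc=1}=\skilli[\hi]\budgi[j]$ at which $\balloc^{*}$ reaches $1$, decreasing afterward, by \Cref{lem:lemmaA} and \Cref{thm:2skill_nbudget}(3)), so enlarging the reachable quality interval only moves the chosen option rightward, hence weakly upward. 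Since a low-skill admission contributes $\skilli[\lo]-\thresh<0$ and a high-skill admission contributes $\skilli[\hi]-\thresh>0$, (a) and (b) together give $V(\gamecor_{\subsy})\ge\util_{\princ}(\gamecor_{\subsy},\balloc_{\subsy})\ge\util_{\princ}(\gamecor,\balloc^{*})=V(\gamecor)$.

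The main obstacle is a boundary case in (b): the $\skilli[\lo]\subsy/\skilli[\hi]$ units a high-skill agent must burn to clear the dead zone can, when $\skilli[\hi]$ is close to $\skilli[\lo]$ and $\subsy$ is large, exceed the utility that agent was getting in $\gamecor$, so it prefers $(0,0)$ and its allocation drops to $0$, violating (b). I would resolve this by splitting on $\subsy$. When $\subsy$ is large enough that $\budgi[1]+\subsy\ge\skilli[\lo]/\skilli[\hi]$ --- the same regime in which \Cref{lem:subsidystatements}(4) forces offline-optimal behaviour --- the capped-line menu $\balloc(\qua)=\min\{\qua/\skilli[\lo],1\}$ already traps every low-skill type at $(0,0)$ while admitting every high-skill type with probability one, so $V(\gamecor_{\subsy})$ attains the offline-optimal value of \Cref{def:offopt}, which upper-bounds $V(\gamecor)$; nothing more is needed. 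On the complementary bounded range of $\subsy$ one verifies the collapse cannot occur --- it is automatically impossible whenever $\skilli[\hi]\ge 2\skilli[\lo]$, and more generally the dead-zone burn stays below the high-skill agent's $\gamecor$-utility throughout this range --- or, for a single uniform construction, one replaces the dead-zone value $0$ in $\balloc_{\subsy}$ by the capped low-skill line $\min\{\qua/\skilli[\lo],1\}$, which leaves every high-skill type a strictly-positive-utility option so it never collapses. The delicate step is then the bookkeeping: re-running comparison (a) on the newly added segment (low-skill allocations must still be pinned at or below their $\gamecor$ values there) while re-running the single-peakedness argument of (b) against the raised menu.
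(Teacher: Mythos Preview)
Your high-level plan matches the paper's sketch exactly: fix $d=\subsylo$, and transform the optimal no-subsidy rule $\balloc^{*}$ into a rule for the $d$-subsidized game under which every low-skill type's allocation weakly decreases and every high-skill type's weakly increases. You also correctly isolate the one real obstacle, the dead-zone burn $\skilli[\lo]d/\skilli[\hi]$ that can push a high-skill agent to collapse to $(0,0)$. But both of your resolutions have gaps.

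The claim that collapse cannot occur in the complementary range $\budgi[1]+d<\skilli[\lo]/\skilli[\hi]$ is false. Take $\skilli[\lo]=1$, $\skilli[\hi]=1.1$, and $\qua^{\alloc=1}=\skilli[\hi]\budgi[j]$ with $\budgi[j]=0.9$, so $h=1-\qua^{\alloc=1}/\skilli[\lo]=0.01$. A high-skill agent who reaches $\qua^{\alloc=1}$ has $\gamecor$-utility $1-\budgi[j]=0.1$, but the burn $d/1.1$ exceeds this for every $d>0.11$, well inside the complementary range (which extends up to roughly $0.9$ when $\budgi[1]$ is small). The ``filled dead zone'' alternative does prevent collapse, but it breaks (b) in exactly these cases: once the shifted second step has negative utility, the high-skill agent retreats to the top of the filled ramp at $(\skilli[\lo]d,d)$ and receives allocation $d$, which is strictly less than the allocation $1$ it got in $\gamecor$.

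A cleaner transformation avoids both problems. Do not shift the first step of $\balloc^{*}$ (shifting puts it below the low-skill line, which is precisely what creates the burn); instead \emph{extend} the first step along the low-skill line out to $\qua^{\text{jump}}+\skilli[\lo]d$ and shift only the second step right by $\skilli[\lo]d$. This rule is monotone exactly when $d\le h$ (the jump height of $\balloc^{*}$), and because the extended first step sits on the low-skill line, every high-skill type has a strictly positive-utility option, so collapse is impossible; (a) and (b) then follow by the same single-peakedness reasoning you outlined. For $d>h$, use the capped low-skill line $\min(\qua/\skilli[\lo],1)$: low-skill types tie-break to $(0,0)$, and a direct check (using $h=1-\qua^{\alloc=1}/\skilli[\lo]$ together with $d>h>\skilli[\lo]h/\skilli[\hi]$) shows each high-skill type's new allocation $\min(\skilli[\hi](\budgi[k]+d)/\skilli[\lo],1)$ dominates its $\gamecor$-allocation. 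The natural case split is therefore at $d=h$, not at $d=\skilli[\lo]/\skilli[\hi]-\budgi[1]$.
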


\noindent In the proof of \Cref{prop:possubsyimprovesprinc}, we consider specifically the subsidy $d=\subsylo>0$ and (deterministically) transform the optimal allocation rule without subsidies into a new allocation rule with weakly larger performance given the uniform, unconditional agent's budget-subsidy $\subsylo$.  In particular in comparison to the optimal allocation without subsidies, the new allocation gives all low-skill agent-types weakly smaller allocation, and gives all high-skill agent-types weakly larger allocation.

This new allocation rule is not necessarily optimal for its (subsidized) setting, but by dominating the original setting, its existence proves that the principal's objective can only improve.  On the other hand, the new allocation rule may harm the agents' utilities (for any agent type, except low-skill-low-budget agents who already get 0-allocation before subsidies and who still get 0).  While this assessment is not a final judgment (because the new allocation is not necessarily optimal), it is consistent with observations in \citet{hu2019disparate} which showed that subsidies for disadvantaged agents might harm their utilities.

\bibliographystyle{apalike}

\bibliography{forc23_bib_paper06}

\begin{thebibliography}{}

\bibitem[Ahmadi et~al., 2021]{Ahmadi2021TheSP}
Ahmadi, S., Beyhaghi, H., Blum, A., and Naggita, K. (2021).
\newblock The strategic perceptron.
\newblock In Bir{\'{o}}, P., Chawla, S., and Echenique, F., editors, {\em {EC}
  '21: The 22nd {ACM} Conference on Economics and Computation, Budapest,
  Hungary, July 18-23, 2021}, pages 6--25. {ACM}.

\bibitem[Ahmadi et~al., 2022]{Ahmadi2022On}
Ahmadi, S., Beyhaghi, H., Blum, A., and Naggita, K. (2022).
\newblock On classification of strategic agents who can both game and improve.
\newblock In Celis, L.~E., editor, {\em 3rd Symposium on Foundations of
  Responsible Computing, {FORC} 2022, June 6-8, 2022, Cambridge, MA, {USA}},
  volume 218 of {\em LIPIcs}, pages 3:1--3:22. Schloss Dagstuhl -
  Leibniz-Zentrum f{\"{u}}r Informatik.

\bibitem[Alon et~al., 2020]{Alon2020MultiagentEM}
Alon, T., Dobson, M., Procaccia, A., Talgam-Cohen, I., and Tucker-Foltz, J.
  (2020).
\newblock Multiagent evaluation mechanisms.
\newblock {\em In Proceedings of the AAAI Conference on Artificial
  Intelligence}, 34(02):1774--1781.

\bibitem[Bechavod et~al., 2020]{Bechavod2020CausalFD}
Bechavod, Y., Ligett, K., Wu, Z.~S., and Ziani, J. (2020).
\newblock Causal feature discovery through strategic modification.
\newblock {\em ArXiv}, abs/2002.07024.

\bibitem[Braverman and Garg, 2020]{braverman2020role}
Braverman, M. and Garg, S. (2020).
\newblock The role of randomness and noise in strategic classification.
\newblock In Roth, A., editor, {\em 1st Symposium on Foundations of Responsible
  Computing, {FORC} 2020, June 1-3, 2020, Harvard University, Cambridge, MA,
  {USA} (virtual conference)}, volume 156 of {\em LIPIcs}, pages 9:1--9:20.
  Schloss Dagstuhl - Leibniz-Zentrum f{\"{u}}r Informatik.

\bibitem[Br\"{u}ckner and Scheffer, 2011]{adversarial_games_pred}
Br\"{u}ckner, M. and Scheffer, T. (2011).
\newblock Stackelberg games for adversarial prediction problems.
\newblock In {\em Proceedings of the 17th ACM SIGKDD International Conference
  on Knowledge Discovery and Data Mining}, KDD ’11, page 547–555, New York,
  NY, USA. Association for Computing Machinery.

\bibitem[Corbett{-}Davies and Goel, 2018]{corbett2018measure}
Corbett{-}Davies, S. and Goel, S. (2018).
\newblock The measure and mismeasure of fairness: {A} critical review of fair
  machine learning.
\newblock {\em CoRR}, abs/1808.00023.

\bibitem[Dong et~al., 2018]{revealed_preferences}
Dong, J., Roth, A., Schutzman, Z., Waggoner, B., and Wu, Z.~S. (2018).
\newblock Strategic classification from revealed preferences.
\newblock In {\em Proceedings of the 2018 ACM Conference on Economics and
  Computation}, EC ’18, page 55–70, New York, NY, USA. Association for
  Computing Machinery.

\bibitem[Feng et~al., 2023]{FHL23}
Feng, Y., Hartline, J.~D., and Li, Y. (2023).
\newblock Simple mechanisms for non-linear agents.
\newblock In Bansal, N. and Nagarajan, V., editors, {\em Proceedings of the
  2023 {ACM-SIAM} Symposium on Discrete Algorithms, {SODA} 2023, Florence,
  Italy, January 22-25, 2023}, pages 3802--3816. {SIAM}.

\bibitem[Frankel and Kartik, 2019]{Frankel2019ImprovingIF}
Frankel, A.~M. and Kartik, N. (2019).
\newblock Improving information from manipulable data.
\newblock {\em arXiv: Theoretical Economics}.

\bibitem[Gaitonde et~al., 2023]{GLLLS23}
Gaitonde, J., Li, Y., Light, B., Lucier, B., and Slivkins, A. (2023).
\newblock Budget pacing in repeated auctions: Regret and efficiency without
  convergence.
\newblock In Kalai, Y.~T., editor, {\em 14th Innovations in Theoretical
  Computer Science Conference, {ITCS} 2023, January 10-13, 2023, MIT,
  Cambridge, Massachusetts, {USA}}, volume 251 of {\em LIPIcs}, pages
  52:1--52:1. Schloss Dagstuhl - Leibniz-Zentrum f{\"{u}}r Informatik.

\bibitem[Haghtalab et~al., 2020]{Haghtalab2020MaximizingWW}
Haghtalab, N., Immorlica, N., Lucier, B., and Wang, J.~Z. (2020).
\newblock Maximizing welfare with incentive-aware evaluation mechanisms.
\newblock In Bessiere, C., editor, {\em Proceedings of the Twenty-Ninth
  International Joint Conference on Artificial Intelligence, {IJCAI-20}}, pages
  160--166. International Joint Conferences on Artificial Intelligence
  Organization.
\newblock Main track.

\bibitem[Hardt et~al., 2016]{Hardt2016}
Hardt, M., Megiddo, N., Papadimitriou, C., and Wootters, M. (2016).
\newblock Strategic classification.
\newblock In {\em Proceedings of the 2016 ACM Conference on Innovations in
  Theoretical Computer Science}, ITCS ’16, page 111–122, New York, NY, USA.
  Association for Computing Machinery.

\bibitem[Harris et~al., 2021]{harris2021stateful}
Harris, K., Heidari, H., and Wu, Z.~S. (2021).
\newblock Stateful strategic regression.
\newblock In Ranzato, M., Beygelzimer, A., Dauphin, Y.~N., Liang, P., and
  Vaughan, J.~W., editors, {\em Advances in Neural Information Processing
  Systems 34: Annual Conference on Neural Information Processing Systems 2021,
  NeurIPS 2021, December 6-14, 2021, virtual}, pages 28728--28741.

\bibitem[Hu et~al., 2019]{hu2019disparate}
Hu, L., Immorlica, N., and Vaughan, J.~W. (2019).
\newblock The disparate effects of strategic manipulation.
\newblock In danah boyd and Morgenstern, J.~H., editors, {\em Proceedings of
  the Conference on Fairness, Accountability, and Transparency, FAT* 2019,
  Atlanta, GA, USA, January 29-31, 2019}, pages 259--268. {ACM}.

\bibitem[Jung et~al., 2020]{JKLPRV20}
Jung, C., Kannan, S., Lee, C., Pai, M.~M., Roth, A., and Vohra, R. (2020).
\newblock Fair prediction with endogenous behavior.
\newblock In Bir{\'{o}}, P., Hartline, J.~D., Ostrovsky, M., and Procaccia,
  A.~D., editors, {\em {EC} '20: The 21st {ACM} Conference on Economics and
  Computation, Virtual Event, Hungary, July 13-17, 2020}, pages 677--678.
  {ACM}.

\bibitem[Kleinberg et~al., 2018]{KLMR18}
Kleinberg, J., Ludwig, J., Mullainathan, S., and Rambachan, A. (2018).
\newblock Algorithmic fairness.
\newblock {\em AEA Papers and Proceedings}, 108:22--27.

\bibitem[Kleinberg and Raghavan, 2019]{Kleinberg2018HowDC}
Kleinberg, J. and Raghavan, M. (2019).
\newblock How do classifiers induce agents to invest effort strategically?
\newblock In {\em Proceedings of the 2019 ACM Conference on Economics and
  Computation}, EC '19, page 825–844, New York, NY, USA. Association for
  Computing Machinery.

\bibitem[Kleinberg, 2018]{Kleinberg18}
Kleinberg, J.~M. (2018).
\newblock Inherent trade-offs in algorithmic fairness.
\newblock In Psounis, K., Akella, A., and Wierman, A., editors, {\em Abstracts
  of the 2018 {ACM} International Conference on Measurement and Modeling of
  Computer Systems, {SIGMETRICS} 2018, Irvine, CA, USA, June 18-22, 2018},
  page~40. {ACM}.

\bibitem[Laffont and Robert, 1996]{laffont1996optimal}
Laffont, J.-J. and Robert, J. (1996).
\newblock Optimal auction with financially constrained buyers.
\newblock {\em Economics Letters}, 52(2):181--186.

\bibitem[Maskin, 2000]{maskin2000auctions}
Maskin, E.~S. (2000).
\newblock Auctions, development, and privatization: Efficient auctions with
  liquidity-constrained buyers.
\newblock {\em European Economic Review}, 44(4):667--681.

\bibitem[Miller et~al., 2020]{Miller2019StrategicCI}
Miller, J., Milli, S., and Hardt, M. (2020).
\newblock Strategic classification is causal modeling in disguise.
\newblock In {\em Proceedings of the 37th International Conference on Machine
  Learning, ICML 2020, 13-18 July 2020, Virtual Event}, volume 119 of {\em
  Proceedings of Machine Learning Research}, pages 6917--6926. PMLR.

\bibitem[Milli et~al., 2019]{Milli2018TheSC}
Milli, S., Miller, J., Dragan, A.~D., and Hardt, M. (2019).
\newblock The social cost of strategic classification.
\newblock In {\em Proceedings of the Conference on Fairness, Accountability,
  and Transparency}, FAT* '19, page 230–239, New York, NY, USA. Association
  for Computing Machinery.

\bibitem[Pai and Vohra, 2014]{pai2014optimal}
Pai, M.~M. and Vohra, R. (2014).
\newblock Optimal auctions with financially constrained buyers.
\newblock {\em J. Econ. Theory}, 150:383--425.

\bibitem[Rambachan et~al., 2020]{RKLM20}
Rambachan, A., Kleinberg, J., Ludwig, J., and Mullainathan, S. (2020).
\newblock An economic perspective on algorithmic fairness.
\newblock {\em AEA Papers and Proceedings}, 110:91--95.

\bibitem[Stiglitz and Weiss, 1981]{SW-81}
Stiglitz, J. and Weiss, A. (1981).
\newblock Credit rationing in markets with imperfect information.
\newblock {\em American Economic Review}, 71(3):393--410.

\bibitem[Stiglitz, 1975]{sti-75}
Stiglitz, J.~E. (1975).
\newblock {The Theory of \&quot;Screening,\&quot; Education, and the
  Distribution of Income}.
\newblock {\em American Economic Review}, 65(3):283--300.

\bibitem[Xiao et~al., 2020]{xiao2020optimal}
Xiao, S., Wang, Z., Chen, M., Tang, P., and Yang, X. (2020).
\newblock Optimal common contract with heterogeneous agents.
\newblock {\em Proceedings of the AAAI Conference on Artificial Intelligence},
  34(05):7309--7316.

\end{thebibliography}

 \newpage
\begin{appendix}
\section{Deferred Proofs of Propositions and Lemmas}
\label{a:appendix_a}

\subsection{Proof that a Threshold Mechanism is Optimal for Single-budget}
\label{a:proof_thresh}

\deterministic* 
\begin{proof}
The optimality of $\balloc^{\quath}$ in fact follows from the stronger statement in \Cref{lem:threshoptsinglebudgoffline} (below).
\end{proof}

\noindent The offline optimal mechanism (\Cref{def:offopt}) is generally unachievable.  Despite that caveat, it is possible to achieve the offline optimal mechanism for the special case of singular budgets, as subsequently stated in \Cref{lem:threshoptsinglebudgoffline}.  Recall the intuition given in the main body of the paper: ``\Cref{prop:threshoptsinglebudg} holds because single-budget is a simple setting in which quality-thresholds directly implement skill-thresholds, in particular for the principal's threshold $\thresh$."

\begin{lemma}
\label{lem:threshoptsinglebudgoffline}
Assume that an agent has constant budget $\bar{\budg}$ on effort, i.e., the distribution $\bdist$ is a singleton point mass.  {\em Without directly observing the realization of the agent's skill $\hat{\skill}\sim\sdist$}, the threshold mechanism $\balloc^{\quath}$ with $\quath = \thresh\cdot\bar{\budg}$ is {\em offline optimal}.
\end{lemma}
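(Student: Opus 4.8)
The plan is to compute directly, for the threshold mechanism $\balloc^{\quath}$ with $\quath = \thresh\cdot\bar{\budg}$, the allocation that an agent of each realized skill level obtains, and to verify that the induced allocation rule in skill space coincides with an instance of the offline optimal mechanism of \Cref{def:offopt} — so that $\balloc^{\quath}$, despite not observing $\hat\skill$, attains the offline optimal principal utility. Throughout I use that the budget is a positive variable bounded above by $1$ (recall $\bsupp\subseteq[0,1]$), so $0<\bar{\budg}\le 1$.

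First I would reduce the agent's menu-selection problem to a comparison of just two options. Under $\balloc^{\quath}$ every quality $\qua\ge\quath$ yields allocation $1$ and every $\qua<\quath$ yields allocation $0$; hence among the feasible options (\Cref{fact:feasibility}) the option $(0,0)$, with utility $0$, weakly dominates every option $(\qua,0)$ with $0<\qua<\quath$ (same allocation, strictly more effort), while the option $(\quath,1)$, when feasible, weakly dominates every option $(\qua,1)$ with $\qua>\quath$. So the agent's optimal choice is $(0,0)$ or $(\quath,1)$, the latter giving utility $1-\quath/\skill = 1-\thresh\bar{\budg}/\skill$. By \Cref{fact:feasibility}, $(\quath,1)$ is feasible iff $\thresh\bar{\budg}/\skill\le\min\{\bar{\budg},1\}=\bar{\budg}$, i.e.\ iff $\skill\ge\thresh$; here affordability is the binding condition, since rationality $\thresh\bar{\budg}/\skill\le 1$ already follows from $\skill\ge\thresh\ge\thresh\bar{\budg}$.

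Then I would split on $\skill$ versus $\thresh$. If $\skill<\thresh$, then $(\quath,1)$ is infeasible, the agent takes $(0,0)$, and its allocation is $0$. If $\skill>\thresh$, then $1-\thresh\bar{\budg}/\skill > 1-\bar{\budg}\ge 0$, so $(\quath,1)$ is strictly preferred to $(0,0)$ and the allocation is $1$. If $\skill=\thresh$, the utility of $(\quath,1)$ is $1-\bar{\budg}\ge 0$: when $\bar{\budg}<1$ it is strictly preferred, giving allocation $1$; when $\bar{\budg}=1$ it ties with $(0,0)$ and the tie-breaking rule of \Cref{def:tiebreak} selects $(0,0)$, giving allocation $0$. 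In every case the allocation is independent of the (fixed) budget and equals $1$ on $\skill>\thresh$, $0$ on $\skill<\thresh$, and a fixed constant on $\skill=\thresh$ — precisely the form of an offline optimal mechanism in \Cref{def:offopt}. Hence $\balloc^{\quath}$ achieves the offline optimal value, which in particular re-proves \Cref{prop:threshoptsinglebudg} since the offline optimal mechanism upper-bounds principal utility. The only point needing care — and the sole ``obstacle'' — is the boundary type $\skill=\thresh$ together with the corner $\bar{\budg}=1$, where affordability and rationality of $(\quath,1)$ hold with equality and the tie-break is decisive; one notes that the offline optimal benchmark is itself pinned down only up to an arbitrary constant at $\skill=\thresh$, and that this type contributes $0$ to the objective anyway, so the match is exact. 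Everything else is the elementary computation above.
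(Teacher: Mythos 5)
Your proof is correct and takes essentially the same route as the paper's: both reduce to checking that the threshold quality $\quath=\thresh\cdot\bar{\budg}$ is affordable and rational precisely when $\sfrac{\thresh}{\skill}\leq 1$, i.e.\ when $\skill\geq\thresh$, so the induced skill-space allocation matches \Cref{def:offopt}. Your treatment is somewhat more explicit about the two-option menu reduction and the boundary type $\skill=\thresh$ (and the $\bar{\budg}=1$ tie-break), which the paper's proof passes over, but the underlying argument is the same.
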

\begin{proof}

We will show that $\balloc^{\quath}$ is offline optimal by showing that it gives allocation 1 to every (randomized) agent skill-type which gives positive utility to the principal, and gives allocation 0 to every agent skill-type which gives negative utility to the principal, thus pointwise-maximizing the principal's utility function.

For agent $\agent=(\skill, \bar{\budg})$, the minimum effort required to reach threshold $\quath$ is $\effth = \sfrac{\quath}{\skill}= \left(\sfrac{\thresh}{\skill}\right)\bar{\budg}$.  Then $\eff'\leq\bar{\budg}$ is affordable (and rational) for $\agent$ if and only if $\sfrac{\thresh}{\skill}\leq 1$, an inequality which itself is true if and only the principal's utility $\skill-\thresh\geq 0$ (from admitting $\agent$; see page~\pageref{eqn:princproblem}).  By setting the quality threshold to be the maximum achievable by the skill level $\thresh$ (which corresponds to 0-utility for skill-type $\thresh$), the mechanism allocates to exactly the upward closed set of all agent types from which it receives positive utility (\Cref{fact:highskillfirst}), and no others.\qedhere

\end{proof}

\subsection{Example of Insufficiency of Deterministic Mechanisms}
\label{a:exthreshinsufficient}

The following \Cref{ex:threshinsuff} provides the proof-by-counterexample for \Cref{prop:threshinsuff}.

\begin{example}
\label{ex:threshinsuff}
Admission game admission game $\game = (\sdist, \bdist, \thresh)$ is defined as follows.  

Agent $\agent$ has discrete skill space with two types (i.e, $|\sdist|=2$) with low skill $\skilli[\lo] = 1+\eps_{\lo}$ (for $\eps_{\lo}\rightarrow0$) and high skill $\skilli[\hi] = 2$.  Agent $\agent$ has discrete budget space with two types ($|\bdist|=2$) with low budget $\budg_{\lo} = \sfrac{1}{2}$ and high budget $b_{\hi} = 1$.  The distributions $\sdist$ and $\bdist$ have positive mass on each element of their respective supports but otherwise we leave them indeterminate.  The principal $\princ$'s skill threshold to measure utility is $\thresh=\sfrac{3}{2}$.
\end{example}

\noindent The following analysis will show that for the setting of \Cref{ex:threshinsuff}, all deterministic threshold mechanisms are dominated by stochastic allocation $\alloc=(1-\sfrac{\eps_{\lo}}{1+\eps_{\lo}}-\eps_{\alloc})$
for agents exhibiting quality at least 1.

As a starting point, consider the deterministic threshold mechanism $\balloc^{\quath}$ which picks $\quath = 1$.  The following gives initial analysis of an agent with high skill type $\skill_{\hi}$:

\begin{itemize}
    \item minimum effort to achieve $\quath$ is: $\eff_{\hi}=\sfrac{1}{2}$;
    \item utility from achieving threshold $\quath$ is: $1-\sfrac{1}{2}=\sfrac{1}{2}$;
    \item an agent of type $(\skill_{\hi},\budg_{\hi})$ will put in effort to be admitted (given $\quath =1$ and furthermore, whenever $\quath < 2$);
    \item an agent of type $(\skill_{\hi},\budg_{\lo})$ will also put in effort to be admitted, but critically, can not put in effort to be admitted if $\quath$ is increased above 1 by any $\eps_{\qua}>0$ because this agent-type $(\skill_{\hi},\budg_{\lo})$ is bounded by maximum quality $\quamax=\skill_{\hi}\cdot\budg_{\lo} = 2\cdot\sfrac{1}{2}=1$.
\end{itemize}
    
\noindent Alternatively, the following gives initial analysis of an agent with low skill type $\skill_{\lo}$:
\begin{itemize}
    \item minimum effort to achieve $\quath$ is: $\eff_{\lo}=\sfrac{1}{1+\eps_{\lo}}$;
    \item utility from achieving threshold $\quath$ is: $1-\sfrac{1}{1+\eps_{\lo}} = \sfrac{\eps_{\lo}}{1+\eps_{\lo}}$;
    \item an agent of type $(\skill_{\lo},\budg_{\hi})$ will put in effort to be admitted (given $\quath =1$);
    \item an agent of type $(\skill_{\lo},\budg_{\lo})$ will put in 0 effort (because maximum quality is less than $\quath$).
\end{itemize}

\noindent Offline-optimal (\Cref{def:offopt}) allocates all agents with skill $(\skill_{\hi},\cdot)$ and rejects all agents $(\skill_{\lo},\cdot)$.  The current quality threshold under consideration $\quath=1$ is the largest threshold that will admit types $(\skill_{\hi},\budg_{\lo})$.  Let $\dprob_{\att,\tier}$ be the probability corresponding to arbitrary agent type-attribute $\att\in\{\skill,\budg\}$ and tier $\tier\in\{\lo,\hi\}$.  The performance of every threshold mechanism fails to approach the performance of offline optimal (we write `$\gg$' to indicate that the gap is bounded away from 0):
\begin{itemize}
    \item thresholds $\quath_+>\quath=1$ will not admit types $(\skill_{\hi},\budg_{\lo})$ and thus will additively underperform offline optimal by at least:  $$\dprob_{\skill,\hi}\cdot\dprob_{\budg,\lo}\cdot\left( \skill_{\hi}-\thresh\right)= \dprob_{\skill,\hi}\cdot\dprob_{\budg,\lo}\cdot\left( \sfrac{1}{2}\right)\gg0$$
    \item thresholds $\quath_-\leq \quath =1$ will admit types $(\skill_{\lo},\budg_{\hi})$ and thus will additively underperform offline optimal by at least:
    $$\dprob_{\skill,\lo}\cdot\dprob_{\budg,\hi}\cdot\left( \thresh-\skill_{\lo}\right)= \dprob_{\skill,\lo}\cdot\dprob_{\budg,\hi}\cdot\left( \sfrac{1}{2}-\eps_{\lo}\right)\gg0$$
\end{itemize}

\noindent However, if we maintain $\quath=1$ and rather {\em decrease the probability of allocation} from 1 to $(1-\sfrac{\eps_{\lo}}{1+\eps_{\lo}}-\eps_{\alloc})$ for $\eps_{\alloc}\rightarrow0$, then all high types still strictly put in effort and will be admitted (with near-certainty), but the low types now strictly prefer to put in 0 effort.

Formally, for (single-menu-option) allocation $\alloc=\balloc(1)=(1-\sfrac{\eps_{\lo}}{1+\eps_{\lo}}-\eps_{\alloc})$, the utility calculations are (assuming minimum effort to be admitted, ignoring affordability due to budget):
\begin{itemize}
    \item agents with high skill type $\skill_{\hi}$ have utility: $\util_{(\hi,\cdot)}=\balloc(1)-\eff_{\hi}=\sfrac{1}{2}-\sfrac{\eps_{\lo}}{1+\eps_{\lo}}-\eps_{\alloc}>0$;
    \item agents with low skill type $\skill_{\lo}$ have utility: $\util_{(\lo,\cdot)}= \balloc(1)-\eff_{\lo}  =-\eps_{\alloc}<0$.
\end{itemize}

\noindent Considering, $\eps_{\lo}\rightarrow0$ and $\eps_{\alloc}\rightarrow0$, the admission-rate of high-skill agents approaches 1 and thus the expected performance of this mechanism becomes arbitrarily close to the performance of offline optimal.  Therefore, it strictly improves on the best of any deterministic threshold mechanism (which can't approach performance of offline optimal by the analysis above).

This completes the counterexample to illustrate that deterministic mechanisms are not sufficient.

\subsection{A Comparison of Slanted-Stair Allocation to Deterministic Threshold}
\label{a:compare-ss-to-dt}

\Cref{s:discussmainthm} gives discussion of the optimal characterization of mechanisms in \Cref{thm:2skill_nbudget}.  For purposes of space, we complete here the discussion of alignment between optimal mechanisms and {\em fairness}.  To summarize the initial discussion in the main body, this alignment first is observed intuitively from the structure of the principal's utility from admitting an agent $\agent$, which is $\util_{\princ}(\agent~|~\text{admitted})=\skill-\thresh$.  Second, the offline optimal allocation is the ``perfect" mechanism performance and is also arguably an ideal allocation in terms of fairness.  We now give an additional intuitive perspective supporting this alignment.

Third -- analyzing qualitatively for both mechanism performance and fairness -- we can make a comparison between (a) an optimal single-jump-at-$\qua^{\text{jump}}$, slanted-stair allocation rule $\balloc^*$ of \Cref{thm:2skill_nbudget}; and (b) the specific -- albeit modified -- threshold mechanism that jumps from allocation 0 to 1 at the same quality $\qua^{\text{jump}}$.  For convenience, we copy \Cref{fig:opt_2xn}(Top) %EDIT(Left)
into \Cref{fig:opt_2xn-fair-analysis}.

The modification is that the threshold mechanism in this section will require for admission that an agent's exhibited quality be {\em strictly greater} than the threshold.  This organizes the closed-versus-open endpoints of the threshold-step in a way that allows for a more-direct comparison to slanted-stair functions.  This is illustrated in \Cref{fig:opt_2xn-fair-analysis}(Bottom).%EDIT (Right).

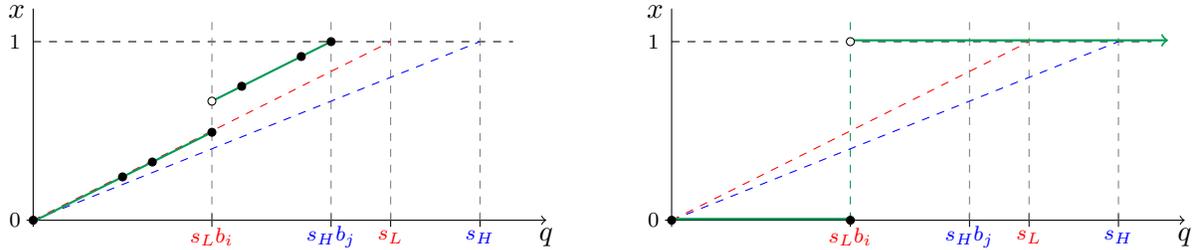
\begin{figure}[t]
\begin{center}
\begin{tikzpicture}[scale = 0.44,pile/.style={->}]

\draw [pile] (-0.2,0) -- (15.5, 0);
\draw (0, -0.2) -- (0, 6.4);

\draw (13.50,-0.2) -- (13.50,0.2);
\draw (5.4,-0.2) -- (5.4,0.2);
\draw (9.0,-0.2) -- (9.0,0.2);
\draw (10.8,-0.2) -- (10.8,0.2);

\draw [dashed] (0.0,5.4) -- (14.5,5.4);

\draw [gray, dashed] (5.4,0) -- (5.4,6.0);
\draw [gray, dashed] (9.0,0) -- (9.0,6.0);
\draw [gray, dashed] (10.8,0) -- (10.8,6.0);
\draw [gray, dashed] (13.5,0) -- (13.5,6.0);

\draw[blue, dashed] (0,0) -- (13.5,5.4);
\draw[red, dashed] (0.0,0.0) -- (10.8,5.4);
\draw[ForestGreen, thick] (0,-0.04) -- (5.4,2.66);
\draw[ForestGreen, thick] (5.4,3.6) -- (9.0, 5.4);

\draw[fill=black] (0,0) circle (0.12cm);
\draw[fill=black] (2.7,1.31) circle (0.12cm);
\draw[fill=black] (3.6,1.76) circle (0.12cm);
\draw[fill=black] (5.4,2.66) circle (0.12cm);

\draw[fill=white] (5.4, 3.6) circle (0.12cm);
\draw[fill=black] (6.3,4.05) circle (0.12cm);
\draw[fill=black] (8.1,4.95) circle (0.12cm);
\draw[fill=black] (9.0,5.4) circle (0.12cm);

\draw (13.5,-0.5) node {\figsz \textcolor{blue}{$\skilli[\hi] $}};
\draw (15.5, -0.5) node {$q$};
\draw (-0.5, 6.3) node {$x$};

\draw (-0.55, 0) node {\figsz $0$};
\draw (-0.55, 5.4) node {\figsz $1$};

\draw (5.4,-0.5) node {\figsz \textcolor{red}{$\skilli[\lo]\budgi$}};
\draw (10.8, -0.5) node {\figsz \textcolor{red}{$\skilli[\lo]$}};
\draw (9.0,-0.5) node {\figsz \textcolor{blue}{$\skilli[\hi]\budgi[j]$}};

\end{tikzpicture}
\hspace{0.6cm}
\begin{tikzpicture}[scale = 0.44,pile/.style={->}]

\draw [pile] (-0.2,0) -- (15.5, 0);
\draw (0, -0.2) -- (0, 6.4);

\draw (13.50,-0.2) -- (13.50,0.2);
\draw (5.4,-0.2) -- (5.4,0.2);
\draw (9.0,-0.2) -- (9.0,0.2);
\draw (10.8,-0.2) -- (10.8,0.2);

\draw [dashed] (0.0,5.4) -- (14.5,5.4);

\draw [ForestGreen, dashed] (5.4,0) -- (5.4,6.0);
\draw [gray, dashed] (9.0,0) -- (9.0,6.0);
\draw [gray, dashed] (10.8,0) -- (10.8,6.0);
\draw [gray, dashed] (13.5,0) -- (13.5,6.0);

\draw[blue, dashed] (0,0) -- (13.5,5.4);
\draw[red, dashed] (0.0,0.0) -- (10.8,5.4);
\draw[ForestGreen, thick] (0,0.04) -- (5.4,0.04);
\draw[pile, ForestGreen, thick] (5.4,5.44) -- (15.0, 5.44);

\draw[fill=black] (0,0) circle (0.12cm);

\draw[fill=black] (5.4, 0) circle (0.12cm);
\draw[fill=white] (5.4, 5.4) circle (0.12cm);

\draw (13.5,-0.5) node {\figsz \textcolor{blue}{$\skilli[\hi] $}};
\draw (15.5, -0.5) node {$q$};
\draw (-0.5, 6.3) node {$x$};

\draw (-0.55, 0) node {\figsz $0$};
\draw (-0.55, 5.4) node {\figsz $1$};

\draw (5.4,-0.5) node {\figsz \textcolor{red}{$\skilli[\lo]\budgi$}};
\draw (10.8, -0.5) node {\figsz \textcolor{red}{$\skilli[\lo]$}};
\draw (9.0,-0.5) node {\figsz \textcolor{blue}{$\skilli[\hi]\budgi[j]$}};

\end{tikzpicture}
\end{center} 
\caption{(Top) Illustration of a one-jump, slanted-stair allocation curve $\balloc^*$ (solid green), which is assumed to be optimal for its game parameters (for analysis purposes). The black dots are an example of discrete menu options.  The single jump occurs at $\qua^{\text{jump}} = \skilli[\lo]\budgi$.  Regarding discussion in \Cref{a:compare-ss-to-dt}: the first, left-most region is ``below the jump;" the second, middle region is ``above the jump but not fully allocated;" and the third, right-most region is ``full allocation." (Bottom) The (strictly-greater-than) threshold mechanism with threshold set equal to the jump-point in (Top), i.e., with $\quath=\qua^{\text{jump}}=\skilli[\lo]\budgi$.}
\label{fig:opt_2xn-fair-analysis}
\end{figure}

Graphically, the optimal mechanism $\balloc^*$ (which may be a discrete menu, corresponding to our discrete setting) will qualitatively have the single-jump structure of \Cref{fig:opt_2xn-fair-analysis}(Top)%EDIT (Left)
.  Using agent skill/budget-indexing of \Cref{fig:opt_2xn-fair-analysis} (i.e., notation), the  general structure of $\balloc^*$ has three regions:
\begin{enumerate}
\item the left-most region is ``below the jump" defined by qualities $\qua\in[0,\qua^{\text{jump}}=\skilli[\lo]\budgi[j]]$;
\item the middle region is ``above the jump but not full allocation" defined by qualities $\qua\in(\qua^{\text{jump}}=\skilli[\lo]\budgi[j],\skilli[\hi]\budgi[\lo])$;
\item the right-most region is ``full allocation" defined by the quality $\qua=\skilli[\hi]\budgi[\lo]$ (and all larger qualities, though rational agents never choose these larger levels, which require exerting superfluous effort to achieve, without an increase in allocation).
\end{enumerate}
\noindent The allocation rule $\balloc^*$ is optimal for the standard principal-objective, so it obviously dominates the threshold mechanism with its quality-space threshold set to be $\quath=\qua^{\text{jump}} = \skilli[\lo]\budgi[j]$.  In the following discussion, agents are considered to be ``in" the region which contains their optimally-chosen quality for the given mechanism (subject to tie-breaking).  We qualitatively analyze the same comparison for fairness:
\begin{enumerate}
    \item in the left-most region, low-skill agents receive 0-allocation according to both $\balloc^*$ and the threshold mechanism; by contrast, high-skill agents receive 0-allocation according to the threshold mechanism, but positive allocation according to $\balloc^*$ (the solid green line in \Cref{fig:opt_2xn-fair-analysis}(Top)%EDIT (Left)
    ; we suggest in this first region -- regardless of the choice of fairness measure -- that the fairness of $\balloc^*$ dominates the fairness of the threshold mechanism;
    \item in the middle region, {\em all} low-skill agents receive allocation $\balloc^*(\qua)$ for $\qua\rightarrow(\qua^{\text{jump}})^+$ (from above) according to $\balloc^*$ (by tie-breaking), which for {\em all} low-skill agents increases to full-allocation of 1 according to the threshold mechanism; whereas each high-skill agent $(\skilli[\hi],\budgi[j]$ is exhibiting its respective maximum quality $\quamax_{\hi,j}=\skilli[\hi]\cdot\budgi[j]$ and receives allocation $\balloc^*(\quamax_{\hi,j})$ according to $\balloc^*$ which increases to full-allocation of 1 according to the threshold mechanism;\\
    in this second region, we can not make a dominance argument because it partially depends on the unknown densities of agent-types represented in this region and it also depends on the technical measure of fairness; however, ignoring expectation and proportional density and instead simply comparing agents one-to-one, we do observe that low-skill types receive the larger benefit (increase in allocation) if we start with $\balloc^*$ as our default mechanism and consider changing to the threshold mechanism; furthermore, the threshold mechanism abolishes the (properly oriented) cardinal difference between low-skill and high-skill agents by instead awarding them an ``arguably unfair" constant allocation (of 1);
    \item in the right-most region, all skill-types in all mechanisms receive the same allocation of 1; thus in this third region, the mechanism $\balloc^*$ and the threshold mechanism are equally fair (or equally unfair).
\end{enumerate}
\noindent Intuitively, the preceding comparison between the optimal mechanism $\balloc^*$ and the threshold mechanism -- which specifically have jumps at the same quality-index $\qua^{\text{jump}}$ -- suggests that (single-jump) slanted-stair mechanisms are indeed more fair.  In fact, we have already stated a strict dominance relationship for an obvious, special-case choice of the technical fairness measure.

\princasfairmeasure*

\noindent Recall, the principal is naturally aligned with fairness. Then if we assign the fairness measure to be equal to the utility function of the principal, the analysis of the optimal mechanism for fairness gives the identical result as \Cref{thm:2skill_nbudget}.

%%%%%%%%%%%%%%%
%%%%%%%%%%%%%%%
%%%%%%%%%%%%%%%
%%%%%%%%%%%%%%%

\end{appendix}

\end{document}